\newcommand{\Rmnum}[1]{\expandafter\@slowromancap\romannumeral #1@}
\newtheorem{thm}{Theorem}
\newtheorem*{thm*}{Theorem}
\newtheorem{defn}{Definition}
\newtheorem{lem}{Lemma}
\newtheorem{claim}{Claim}
\newtheorem{remark}{Remark}
\newcommand{\namedref}[2]{\hyperref[#2]{#1~\ref*{#2}}}
\newcommand{\Algorithmref}[1]{\namedref{Algorithm}{algo:#1}}
\newcommand{\Sectionref}[1]{\namedref{Section}{sec:#1}}
\newcommand{\Subsectionref}[1]{\namedref{Section}{subsec:#1}}
\newcommand{\Theoremref}[1]{\namedref{Theorem}{thm:#1}}
\newcommand{\Lemmaref}[1]{\namedref{Lemma}{lem:#1}}
\newcommand{\Remarkref}[1]{\namedref{Remark}{remark:#1}}
\newcommand{\Claimref}[1]{\namedref{Claim}{claim:#1}}
\newcommand{\Figureref}[1]{\namedref{Figure}{fig:#1}}
\newcommand{\Footnoteref}[1]{\namedref{Footnote}{foot:#1}}
\newcommand{\Pageref}[1]{\hyperref[#1]{page~\pageref*{#1}}}
\newcommand{\eps}{\ensuremath{\epsilon}\xspace}
\newcommand{\bzero}{\ensuremath{{\bf 0}}\xspace}
\newcommand{\supp}{\ensuremath{\textsf{supp}}\xspace}
\newcommand{\bb}{\ensuremath{{\bf b}}\xspace}
\newcommand{\bc}{\ensuremath{{\bf c}}\xspace}
\newcommand{\be}{\ensuremath{{\bf e}}\xspace}
\newcommand{\bhe}{\ensuremath{\hat{\bf e}}\xspace}
\newcommand{\bte}{\ensuremath{\tilde{\bf e}}\xspace}
\newcommand{\bbf}{\ensuremath{{\bf f}}\xspace}
\newcommand{\btf}{\ensuremath{\tilde{\bf f}}\xspace}
\newcommand{\br}{\ensuremath{{\bf r}}\xspace}
\newcommand{\btr}{\ensuremath{\tilde{\bf r}}\xspace}
\newcommand{\bx}{\ensuremath{{\bf x}}\xspace}
\newcommand{\btx}{\ensuremath{\widetilde{\bf x}}\xspace}
\newcommand{\by}{\ensuremath{{\bf y}}\xspace}
\newcommand{\bz}{\ensuremath{{\bf z}}\xspace}
\newcommand{\bw}{\ensuremath{{\bf w}}\xspace}
\newcommand{\bu}{\ensuremath{{\bf u}}\xspace}
\newcommand{\bv}{\ensuremath{{\bf v}}\xspace}
\newcommand{\nr}{\ensuremath{{n_r}}\xspace}
\newcommand{\nc}{\ensuremath{{n_c}}\xspace}
\newcommand{\bA}{\ensuremath{{\bf A}}\xspace}
\newcommand{\bB}{\ensuremath{{\bf B}}\xspace}
\newcommand{\bF}{\ensuremath{{\bf F}}\xspace}
\newcommand{\bL}{\ensuremath{{\bf L}}\xspace}
\newcommand{\bR}{\ensuremath{{\bf R}}\xspace}
\newcommand{\calS}{\ensuremath{\mathcal{S}}\xspace}
\newcommand{\bS}{\ensuremath{{\bf S}}\xspace}
\newcommand{\btS}{\ensuremath{\tilde{\bf S}}\xspace}
\newcommand{\bX}{\ensuremath{{\bf X}}\xspace}
\newcommand{\btX}{\ensuremath{\widetilde{\bf X}}\xspace}
\newcommand{\B}{\ensuremath{\mathcal{B}}\xspace}
\newcommand{\C}{\ensuremath{\mathcal{C}}\xspace}
\newcommand{\I}{\ensuremath{\mathcal{I}}\xspace}
\newcommand{\U}{\ensuremath{\mathcal{U}}\xspace}
\newcommand{\N}{\ensuremath{\mathcal{N}}\xspace}
\newcommand{\setS}{\ensuremath{\mathcal{S}}\xspace}
\newcommand{\R}{\ensuremath{\mathbb{R}}\xspace}
\newcommand{\T}{\ensuremath{\mathcal{T}}\xspace}
\newcommand{\barU}{\ensuremath{\overline{\mathcal{U}}}\xspace}
\newcommand{\barfU}{\ensuremath{\overline{f(\mathcal{U})}}\xspace}
\renewcommand{\paragraph}[1]{\smallskip\noindent{\bf #1}~}
\date{}
\begin{document}

\title{Data Encoding for Byzantine-Resilient Distributed Optimization\thanks{This paper was presented in parts at the IEEE Allerton 2018 (as an invited talk) \cite{DataSoDi-ByzGD-Allerton18}, and ISIT 2019 \cite{DataSoDi-ByzGD-ISIT19,DataDi_ByzCD-ISIT19}.}}

\author[1]{\Large Deepesh Data}
\author[2]{\Large Linqi Song}
\author[1]{\Large Suhas Diggavi}
\affil[1]{\large University of California, Los Angeles, USA}
\affil[1] {\large \text{\{deepeshdata, suhasdiggavi\}@ucla.edu}\vspace{0.25cm}}
\affil[2]{\large City University of Hong Kong, Hong Kong}
\affil[2] {\large \text{linqi.song@cityu.edu.hk}}

\maketitle

\begin{abstract}
We study distributed optimization in the presence of Byzantine adversaries, where both data and computation are distributed among $m$ worker machines, $t$ of which
may be corrupt. The compromised nodes may collaboratively and arbitrarily deviate from their pre-specified programs, and a designated (master) node iteratively computes the model/parameter vector for {\em generalized linear models}.
In this work, we primarily focus on two iterative algorithms: {\em Proximal Gradient Descent} (PGD) and {\em Coordinate Descent} (CD).
Gradient descent (GD) is a special case of these algorithms. PGD is typically used in the data-parallel setting, where data is partitioned across different samples,
whereas, CD is used in the model-parallelism setting, where data is partitioned across the parameter space.

At the core of our solutions to both these algorithms is a method for Byzantine-resilient matrix-vector (MV) multiplication; and for that, we propose a method based on data encoding and error correction over real numbers to combat adversarial attacks. We can tolerate up to $t\leq \lfloor\frac{m-1}{2}\rfloor$ corrupt worker nodes, which is information-theoretically optimal. 
We give deterministic guarantees, and our method does not assume any probability distribution on the data. We develop a {\em sparse} encoding scheme which enables computationally efficient data encoding and decoding. We demonstrate a trade-off between the corruption threshold and the resource requirements (storage, computational, and communication complexity). As an example, for $t\leq\frac{m}{3}$, our scheme incurs only a {\em constant} overhead on these resources, over that required by the plain distributed PGD/CD algorithms which provide no adversarial protection. To the best of our knowledge, ours is the first paper that connects MV multiplication with CD and designs a specific encoding matrix for MV multiplication whose structure we can leverage to make CD secure against adversarial attacks.

Our encoding scheme extends {\em efficiently} to {\sf (i)} the data streaming model, in which data samples come in an online fashion and are encoded as they arrive, 
and {\sf (ii)} making {\em stochastic gradient descent} (SGD) Byzantine-resilient. In the end, we give experimental results to show the efficacy of our proposed schemes.
\end{abstract}

\section{Introduction}\label{sec:intro}
Map-reduce architecture \cite{DeanGhemawat} is
  implemented in many distributed learning tasks, where there is one
  designated machine (called the master) that computes the model
  iteratively, based on the inputs from the worker machines at each
  iteration, typically using descent techniques, like (proximal) gradient descent,
  coordinate descent, stochastic gradient descent, the Newton's method, etc. 
  The worker nodes perform the required computations using local data, distributed to the nodes
  \cite{zinkevich2010parallelized}. Several other architectures,
  including having no hierarchy among the nodes have been explored
  \cite{LianZhZhHsZhLi17}.
  
In several applications of distributed learning,
  including the Internet of Battlefield Things (IoBT)
  \cite{IoBTproject}, federated optimization \cite{Konecny17}, the
  recruited worker nodes might be partially trusted with their computation. Therefore, an
  important question is whether we can reliably perform distributed
  computation, taking advantage of partially trusted worker
  nodes. These Byzantine adversaries can collaborate and arbitrarily deviate from
  their pre-specified programs. The problem of distributed computation
  with Byzantine adversaries has a long history \cite{LamportShPe82},
  and there has been recent interest in applying this computational
  model to large-scale distributed learning
  \cite{BlanchardMhGuSt17,ChenWaChPa18,ChenSuXu17}.

In this paper, we study Byzantine-tolerant distributed optimization to learn a regularized {\em generalized linear model} (GLM) 
(e.g., linear/ridge regression, logistic regression, Lasso, SVM dual, constrained minimization, etc.). 
We consider two frameworks for distributed optimization: {\sf (i)} {\em data-parallelism} architecture, 
where data points are distributed across different worker nodes, and in each iteration, they all parallelly 
compute gradients on their local data and master aggregates them to update the parameter vector using gradient descent (GD) \cite{Bertsekas1989PDC,Bottou10,Dean_deeplearning12}; and
{\sf (ii)} {\em model-parallelism} architecture, where data points are partitioned across features, 
and several worker nodes work in parallel, updating different subsets of coordinates of the model/parameter vector
through {\em coordinate descent} (CD) \cite{model-parallel-11,Wright-CD-15,RichtarikTa-CD-16}.
Note that GD requires full gradients to update the parameter vector; and if full gradients are too costly to compute, we
can reduce the per-iteration cost by using CD,\footnote{Alternatively, we can also use SGD to reduce the per-iteration cost, and we give a method for making SGD Byzantine-resilient in \Subsectionref{solution_SGD}.} which also has been shown to be very effective for solving generalized linear models, and is particularly widely used for sparse logistic regression, SVM, and Lasso \cite{model-parallel-11}. 
Given its simplicity and effectiveness, CD can be chosen over GD in such applications \cite{Nesterov12}.
Computing gradients in the presence of Byzantine adversaries has been recently studied \cite{BlanchardMhGuSt17,ChenSuXu17,ChenWaChPa18,YinChRaBa18,Alistarh_Byz-SGD18,SuX_Byz19,Zeno_ByzSGD19,Bartlett-Byz_nonconvex19,NirupamVa_Byz-SGD19,Detox_ByzSGD19,RSA_Byz-SGD19,YinRobustFL19,LargrangeCoding_Yu-etal19,DataDi_Byz-SGD_Heterogeneous20,DataDi_Byz-LocalSGD_Heterogeneous20,Jaggi-Heterogeneous20},  and we discuss them in detail \Sectionref{related-work} where we also put our work in context.
However, as far as we know, making CD robust to Byzantine adversaries has not received much attention, and to the best of our knowledge, ours is the first paper that studies CD against Byzantine attacks and provides an efficient solution for that.

\subsection{Our Contributions}
We propose Byzantine-resilient
  distributed optimization algorithms both for PGD and CD based on data encoding and
  error correction (over real numbers). 
As mentioned above, there have been several papers that provide different methods for gradient computation in the presence of Byzantine adversaries, 
  however, our proposed algorithm differs from them in one or more of the following aspects: 
{\sf (i)} it does not make statistical
  assumptions on the data or Byzantine attack patterns; 
  {\sf (ii)} it can tolerate up to a constant
  fraction ($<1/2$) of the worker nodes being Byzantine, which is information-theoretically optimal; and 
  {\sf (iii)} it enables a trade-off (in terms of storage and computation/communication overhead at the master and the 
  worker nodes) with Byzantine adversary tolerance, without 
  compromising the efficiency at the master node. We give the same guarantees for CD also.

First we design a coding scheme for distributed matrix-vector (MV) multiplication, specifically, for operating in the presence of Byzantine adversaries, and use that in both our algorithms for PGD and CD to learn GLMs. Note that the connection of MV multiplication with gradient computation is straightforward and has been known for some time (see, for example, \cite{LeeLaPePaRa18,ShortDotGroverNIPS16}), however, it is not clear whether we can use MV multiplication methods for CD also.
Indeed, since each CD update has a different requirement than that of gradient computation, a general-purpose algorithm for MV multiplication may not be applicable for CD. One distinction is that in gradient computation, we only need to encode the data to compute the MV multiplication, whereas, in CD, in addition to data encoding, since workers update few coordinates of different parts of the parameter vector in parallel, we need to encode the parameter vector as well for master to be able to decode that.
In this paper, we design our encoding matrix for MV multiplication in such a way that it is sparse and has a regular structure of non-zero entries (see \eqref{eq:encoding-matrix-S_i} for the encoding matrix for any worker), which makes it applicable for CD too. This leads to efficient solutions for both PGD and CD, which are our main focus in this paper.

Inspired from the real-error correction (or sparse reconstruction) problem \cite{CandesTao05}, we develop efficient encoding/decoding procedures for MV multiplication, where we encode the data matrix and distribute it to the $m$ worker nodes, and to recover the MV product at the master, we reduce the decoding problem to the sparse reconstruction or real-error correction problem \cite{CandesTao05}.
Note that in PGD, we only need to encode the data, whereas, in CD, we also need to encode the parameter vector, and our coding scheme should facilitate the requirement that the update on a small {\em fraction} of the encoded parameter vector should affect only a small {\em fraction} of the original parameter vector. This is a non-trivial requirement, and our coding scheme for MV multiplication is designed in such a way that it supports this requirement in an efficient manner; see \Subsectionref{setting-cd} for a description on plain distributed CD, \Subsectionref{approach_CD} for our approach to making CD robust to Byzantine attacks, and \Sectionref{solution_CD} for a complete solution for Byzantine-resilient CD.
In the context of PGD/CD, for decoding, the master node processes the inputs from the worker nodes, either to compute the true gradient in the case of PGD or to facilitate the computation at the worker nodes in the case of CD. 
We take a two-round approach in each iteration of both these algorithms. 
Our main results are summarized in \Theoremref{main-result} (on \Pageref{thm:main-result}) for PGD and \Theoremref{main-result_CD} (on \Pageref{thm:main-result_CD}) for CD, and
demonstrate a trade-off between the Byzantine resilience (in terms of
the number of adversarial nodes) and the resource requirement (storage,
computational, and communication complexity). 
As an example, for $t\leq\frac{m}{3}$, our scheme incurs only a {\em constant} overhead on these resources, over that required by the plain distributed PGD and CD algorithms which provide no adversarial protection.
Our coding schemes can handle both Byzantine attacks and missing updates (\emph{e.g.,} caused by delay or asynchrony of worker nodes).
Our encoding process is also efficient. Though data encoding is a one-time
process, it has to be efficient to harness the advantage of
distributed computation. We design a sparse encoding process, based on
real-error correction, which enables efficient
encoding, and the worker nodes encode data using the
sparse structure. This allows encoding with storage redundancy\footnote{Storage redundancy is defined as the ratio of the size of the encoded matrix and the size of the raw data matrix.} of
$\frac{2m}{m-2t}$ (which is a constant, \emph{even} if $t$ is a constant ($<\frac{1}{2}$)
fraction of $m$), and a one-time total computation cost for encoding is
$O((1+2t)nd)$. Note that the time for data encoding is a factor of $(1+2t)$ (where $t$ is the corruption threshold) more than the time required for plain data distribution which is $O(nd)$, the size of the data matrix.

We extend our encoding scheme in a couple of important ways: 
first, to make the stochastic gradient descent (SGD) algorithm Byzantine-resilient without compromising much on the resource requirements; 
and second, to handle streaming data efficiently, where data points arrives one by one (and we encode them as they arrive), 
rather than being available at the beginning of the computation; we also give few more applications of our method.
For the streaming model, more specifically, our encoding requires the same amount of time, irrespective of whether we encode all the data at once, 
or we get data points one by one (or in batches) and we encode them as they arrive. 
This setting encompasses a more realistic scenario, in which we design our coding scheme with the initial set of data points and distribute the encoded data among the workers. Later on, when we get some more samples, we can easily incorporate them into our existing encoded setup.
See \Sectionref{extensions} for details on these extensions.

\subsection{Paper Organization}
We present our problem formulation, description of the plain distributed PGD and CD algorithms, 
and the high-level ideas 
of our Byzantine-resilient algorithms for both PGD and CD along-with our main results in
\Sectionref{prob-and-result}. 
We give detailed related work in \Sectionref{related-work}.  
We present our full coding schemes for MV multiplication and also for gradient computation for PGD along-with a complete analysis of their resource requirements in  \Sectionref{matrix-vector-mult}. 
In \Sectionref{solution_CD}, we provide a complete solution to CD.
In \Sectionref{extensions}, we show how our method can be
extended to SGD and to the data streaming model.
We also discuss applicability of our method to a few more important applications in that section.
In \Sectionref{experiments}, we show numerical results of our
method: we show the efficiency of our method for both gradient descent (GD) and coordinate descent (CD) by running them to solve linear regression on two
datasets (moderate and large) and plotting the running time with
varying number of corrupt worker nodes (up to <1/2 fraction).

\subsection{Notation}
We denote vectors by bold small letters (e.g., $\bx,\by,\bz$, etc.) and matrices by bold capital letters 
(e.g., $\bA,\bF,\bS,\bX$, etc.).
We denote the amount of storage required by a matrix $\bX$ by $|\bX|$. 
For any positive integer $n\in\mathbb{N}$, we denote the set $\{1,2,\hdots,n\}$ by $[n]$. 
For $n_1,n_2\in\mathbb{N}$, where $n_1\leq n_2$, we write $[n_1:n_2]$ to denote the set $\{n_1,n_1+1,\hdots,n_2\}$.
For any vector $\bu\in\R^n$ and any set $\setS\subset[n]$, we write $\bu_{\setS}$ to denote the $|\setS|$-length vector, 
which is the restriction of $\bu$ to the coordinates in the set $\setS$.
The support of a vector $\bu\in\R^n$ is defined by $\supp(\bu):=\{i\in[n]: u_i\neq0\}$.
We say that a vector $\bu\in\R^n$ is $t$-sparse if $|\supp(\bu)|\leq t$.
While stating our results, we assume that performing the basic arithmetic operations (addition, subtraction, multiplication, and division) on real numbers takes unit time.

\section{Problem Setting and Our Results}\label{sec:prob-and-result}
Given a dataset consisting of $n$ labelled data points $(\bx_i,y_i)\in\R^d\times\R$, $i\in[n]$, we want to learn a model/parameter vector $\bw\in\R^d$,
which is a minimizer of the following {\em empirical risk minimization} problem:
\begin{align}\label{eq:glm}
\min_{\bw\in\R^d}\left(\bigg(\frac{1}{n}\sum_{i=1}^nf_i(\bw)\bigg)+h(\bw)\right),
\end{align}
where $f_i(\bw)$, $i=1,2,\hdots,n$, denotes the risk associated with the $i$'th data point with respect to $\bw$ and $h(\bw)$ denotes a regularizer.
We call $f(\bw):=\frac{1}{n}\sum_{i=1}^nf_i(\bw)$ the average empirical risk associated with the $n$ data points with respect to $\bw$.
Our main focus in this paper is on {\em generalized linear models} (GLM), where $f_i(\bw)=\ell(\langle\bx_i,\bw\rangle;y_i)$ for some differentiable loss function $\ell$.
Here, each $f_i:\R^d\to\R$ is differentiable, $h:\R^d\to\R$ is convex but not necessarily differentiable, and $\langle\bx_i,\bw\rangle$ is the dot product of $\bx_i$ and $\bw$. We do not necessarily need each $f_i$ to be convex, but we require $f(\bw)$ to be a convex function. Note that $f(\bw)+h(\bw)$ is a convex function. 
In the following we study different algorithms for solving \eqref{eq:glm} to learn a GLM.

\subsection{Proximal Gradient Descent}\label{subsec:setting-pgd}
We can solve \eqref{eq:glm} using {\em Proximal Gradient Descent} (PGD). This is an iterative algorithm, in which we choose an arbitrary/random initial $\bw_0\in\R^d$, and then update the parameter vector according to the following update rule:
\begin{align}\label{eq:pgd-rule} 
\bw_{t+1} = \textsf{prox}_{h,\alpha_t}(\bw_t - \alpha_t\nabla f(\bw_t)), \quad t=1,2,3,\hdots
\end{align}
where $\alpha_t$ is the step size or the learning rate at the $t$'th iteration, determining the
convergence behaviour. There are standard choices for it; see, for example, \cite[Chapter 9]{BookBoydVa04}.
For any $h$ and $\alpha$, the proximal operator $\textsf{prox}_{h,\alpha}:\R^d\to\R$ is defined as
\begin{align}\label{eq:prox-op}
\textsf{prox}_{h,\alpha}(\bw) = \arg \min_{\bz\in\R^d} \frac{1}{2\alpha}\|\bz-\bw\|_2^2 + h(\bz).
\end{align}
Observe that if $h=0$, then $\textsf{prox}_{h,\alpha}(\bw)=\bw$ for every $\bw\in\R^d$, and PGD reduces to the classical gradient descent (GD).
This encompasses several important optimization problems related to learning, for which $prox$ operator has a closed form expression; 
some of these problems are given below.
\begin{itemize}[leftmargin = *]
\item {\bf Lasso.} Here $f_i(\bw)=\frac{1}{2}(\langle\bx_i,\bw\rangle-y_i)^2$ and $h(\bw)=\lambda\|\bw\|_1$.
It turns out that $\textsf{prox}_{h,\alpha}(\bz)$ for Lasso is equal to the soft-thresholding operator $S_{\lambda\alpha}(\bz)$ \cite{proximalGD-lecture-notes}, which, for $j\in[d]$, is defined as
\[
(S_{\lambda\alpha}(\bz))_j = 
\begin{cases}
z_j+\lambda\alpha & \text{ if } z_j<-\lambda\alpha, \\
0 & \text{ if }-\lambda\alpha \leq z_j \leq \lambda\alpha, \\
z_j-\lambda\alpha & \text{ if }z_j>\lambda\alpha.
\end{cases}
\]
\item {\bf SVM dual.} Jaggi \cite{JaggiSvmLasso13} showed an equivalence between the dual formulation of Support Vector Machines (SVM) and Lasso. Hence, SVM dual is also a special case of \eqref{eq:glm}.
\item {\bf Constrained optimization.} We want to solve a constrained minimization problem $\min_{\bw\in\C}f(\bw)$, where $\C\subseteq\R^d$ is a closed, convex set. Define an indicator function $I_{\C}$ for $\C$ as follows: $I_{\C}(\bw):=0$, if $\bw\in\C$; and $I_{\C}(\bw):=\infty$, otherwise.
Now, observe the following equivalence
\[\min_{\bw\in\C}f(\bw)\iff \min_{\bw\in\R^d}f(\bw)+I_{\C}(\bw).\]
If we solve the RHS using PGD, then it can be easily verified that the corresponding proximal operator is equal to the projection operator onto the set $\C$ \cite{proximalGD-lecture-notes}. So, the proximal gradient update step is to compute the usual gradient and then project it back onto the set $\C$.
\item {\bf Logistic regression.}
Here $f_i$ is the logistic function, defined as
\[f_i(\bw) = -y_i\log\left(\frac{1}{1+e^{-u_i}}\right)-(1-y_i)\log\left(\frac{e^{-u_i}}{1+e^{-u_i}}\right),\]
where $u_i=\langle\bx_i,\bw\rangle$, and $h=0$. As noted earlier, since $h=0$, PGD reduces to GD for logistic regression.
\item {\bf Ridge regression.} Here $f_i(\bw)=\frac{1}{2}(\langle\bx_i,\bw\rangle-y_i)^2$ and $h(\bw)=\frac{\lambda}{2}\|\bw\|_2^2$.
Since $f_i$'s and $h$ are differentiable, we can alternatively solve this simply using GD.
\end{itemize}

Let $\bX\in\R^{n\times d}$ denote the data matrix, whose $i$'th row is equal to the $i$'th data point $\bx_i$. For simplicity, assume that $m$ divides $n$, and let $\bX_i$ denote the $\frac{n}{m}\times d$ matrix, whose $j$'th row is equal to $\bx_{(i-1)\frac{n}{m}+j}$.
In a distributed setup, all the data is distributed among $m$ worker machines (worker $i$ has $\bX_i$) and master updates the parameter vector using the update rule \eqref{eq:pgd-rule}. At the $t$'th iteration, master sends $\bw_t$ to all the workers; worker $i$ computes the gradient (denoted by $\nabla_i f(\bw_t)$) on its local data and sends it to the master; master aggregates all the received $m$ local gradients to obtain the global gradient 
\begin{equation}\label{eq:GD-aggregation-rule}
\nabla f(\bw_t)=\frac{1}{m}\sum_{i=1}^m\nabla_i f(\bw_t). 
\end{equation}
Now, master updates the parameter vector according to \eqref{eq:pgd-rule} and obtains $\bw_{t+1}$. Repeat the process until convergence.

\paragraph{If full gradients are too costly to compute.}
Updating the parameter vector in each iteration of PGD according to \eqref{eq:pgd-rule} requires computing full gradients. This may be 
prohibitive in large-scale applications, where each machine in a distributed framework has a lot of data, and computing full gradients at local machines may be too expensive and becomes the bottleneck.
In such scenarios, there are two alternatives to reduce this per-iteration cost:
{\sf (i)} {\em Coordinate Descent} (CD), in which we pick a few coordinates (at random), compute the partial gradient along those, and descent along those coordinates only, and {\sf (ii)} {\em Stochastic Gradient Descent} (SGD), in which we sample a data point at random, compute the gradient on that point, and descent along that direction. These are discussed in \Subsectionref{setting-cd} and \Subsectionref{solution_SGD}, respectively.

\subsection{Coordinate Descent}\label{subsec:setting-cd}
For the clear exposition of ideas, we focus on the non-regularized empirical risk minimization from \eqref{eq:glm} (i.e., taking $h=0$) for learning a {\em generalized linear model} (GLM).
This can be generalized to objectives with (non-)differentiable regularizers \cite{model-parallel-11,Shalev-ShwartzTe11}. 
Let $\bX\in\R^{n\times d}$ denote the data matrix and $\by\in\R^n$ the corresponding label vector.
To make it distinct from the last section, we denote the objective function by $\phi$ and write it as $\phi(\bX\bw;\by)$ to emphasize that we want to learn a GLM, where the objective function depends on the data points only through their inner products with the parameter vector. Formally, we want to optimize\footnote{Here we are not optimizing the {\em average} of loss functions -- since $n$ is a fixed number, this does not affect the solution space.}
\begin{align}\label{eq:problem-cd}
\min_{\bw\in\R^d}\left(\phi(\bX\bw;\by):=\sum_{i=1}^n \ell(\langle \bx_i,\bw\rangle;y_i)\right).
\end{align}
For $\U\subseteq[d]$, we write $\nabla_{\U}\phi(\bX\bw;\by)$ to denote the gradient of $\phi(\bX\bw;\by)$ with respect to $\bw_{\U}$,
where $\bw_{\U}$ denotes the $|\U|$-length vector obtained by restricting $\bw$ to the coordinates in $\U$. 
To make the notation less cluttered, let $\phi'(\bX\bw;\by)$ denote the $n$-length vector, whose $i$'th entry is equal to $\ell'(\langle \bx_i,\bw\rangle;y_i):=\frac{\partial}{\partial u}\ell(u;y_i)|_{u=\langle \bx_i,\bw\rangle}$.
Note that $\nabla\phi(\bX\bw;\by)=\bX^T\phi'(\bX\bw;\by)$ and that $\nabla_{\U}\phi(\bX\bw;\by)=\bX_{\U}^T\phi'(\bX\bw;\by)$, where $\bX_{\U}$ denotes the $n\times|\U|$ matrix obtained by restricting the column indices of $\bX$ to the elements in $\U$.

Coordinate descent (CD) is an iterative algorithm, where, in each iteration, we choose a set of coordinates and update only those 
coordinates (while keeping the other coordinates fixed). 
In distributed CD, we take advantage of the parallel architecture to improve the running time of (centralized) CD. 
In the distributed setting, we divide the data matrix vertically into $m$ parts and store the $i$'th part at the $i$'th worker node. 
Concretely, assume, for simplicity, that $m$ divides $d$.
Let $\bX=[\bX_1\ \bX_2\ \hdots\ \bX_m]$ and $\bw=[\bw_1^T\ \bw_2^T\ \hdots\ \bw_m^T]^T$, 
where each $\bX_i$ is an $n\times\frac{d}{m}$ matrix and each $\bw_i$ is a length $\frac{d}{m}$ vector.
Each worker $i$ stores $\bX_i$ and is responsible for updating (a few coordinates of) $\bw_i$ -- hence the terminology, model-parallelism. 
We store the label vector $\by$ at the master node. 
In coordinate descent, since we update only a few coordinates in each round, there are a few options on how to update these coordinates in a distributed manner: 

\paragraph{Subset of workers:}
Master picks a subset $\calS\subset[m]$ of workers and asks them to update their $\bw_i$'s \cite{RichtarikTa-CD-16}. 
This may not be good in the adversarial setting, because if only a small subset of workers are updating their parameters, 
the adversary can corrupt those workers and disrupt the computation.

\paragraph{Subset of coordinates for all workers:}
All the worker nodes update only a subset of the coordinates of their local parameter vector $\bw_i$'s.
Master can (deterministically or randomly) pick a subset $\U$ (which may or may not be different for all workers) of $f\leq d/m$ coordinates and asks each worker to updates only those coordinates. If master picks $\U$ deterministically, it can cycle through and update all coordinates of the parameter vector in $\lceil d/mf\rceil$ iterations. 

In \Algorithmref{distr-coor-desc-algo}, we give the distributed CD algorithm with the second approach, where all worker nodes update the coordinates of their local parameter vectors for a single subset $\U$. We will adopt this approach in our method to make the distributed CD Byzantine-resilient.
Let $r=\frac{d}{m}$. For any $i\in[m]$, let $\bw_i=[w_{i1}\ w_{i2}\hdots w_{ir}]^T$ and $\bX_i=[\bX_{i1}\ \bX_{i2}\hdots\bX_{ir}]$, where $\bX_{ij}$ is the $j$'th column of $\bX_i$.
For any $i\in[m]$ and $\U\subseteq[r]$, let $\bw_{i\U}$ denote the $|\U|$-length vector that is obtained from $\bw_i$ by
restricting its entries to the coordinates in $\U$; similarly, let $\bX_{i\U}$ denote the $n\times|\U|$ matrix obtained by restricting the column indices of $\bX_i$ to the elements in $\U$.

\begin{algorithm}[t]
   \caption{Distributed Coordinate Descent}\label{algo:distr-coor-desc-algo}
\begin{algorithmic}[1]
   \STATE {\bf Initialize.} Each worker $i\in[m]$ starts with an arbitrary/random $\bw_i\in\R^r$, where $r=\frac{d}{m}$ and, for simplicity, we assume that $m$ divides $d$.
   \WHILE{(until the stopping criteria at master is not satisfied)}
  \STATE \textbf{On each worker $i\in[m]$, do in parallel:}
   \STATE Worker $i$ computes $\bX_i\bw_i$ and sends it to the master node.\footnotemark
   \STATE Worker $i$ receives $(\U\subseteq[r],\phi'(\bX\bw;\by))$ from the master node.
   \STATE Worker $i$ updates its local parameter vector as (where $\nabla_{i\U}\phi(\bX\bw;\by)=\bX_{i\U}^T\phi'(\bX\bw;\by)$)
\begin{align}\label{eq:dist-coor-dist-update}
\bw_{i\U}\leftarrow \bw_{i\U} - \alpha \nabla_{i\U}\phi(\bX\bw;\by)
\end{align}
while keeping the other coordinates of $\bw_i$ unchanged, 
and sends the updated $\bw_i$ to the master. 
 
   \STATE \textbf{At Master:}
   \STATE Master receives $\{\bX_i\bw_i\}_{i\in[m]}$ from the $m$ workers.
   \STATE Master first computes $\bX\bw=\sum_{i=1}^m\bX_i\bw_i$ and then computes $\phi'(\bX\bw;\by)$. 
   \STATE Master picks $\U\subseteq[r]$ (where $\U$ can be picked either randomly or in a round-robin fashion) and sends $(\U\subseteq[r],\phi'(\bX\bw;\by))$ to all workers.
   \ENDWHILE  
\end{algorithmic}
\end{algorithm}
\footnotetext{After the 1st iteration, worker $i$ need not multiply $\bX_i$ with $\bw_i$ to obtain $\bX_i\bw_i$ in every iteration; as only a few coordinates of $\bw_i$ are updated, it only needs to multiply those columns of $\bX_i$ that corresponds to the updated coordinates of $\bw_i$.}

In \Algorithmref{distr-coor-desc-algo}, for each worker $i$ to update $\bw_i$ according to \eqref{eq:dist-coor-dist-update}, where the partial gradient of $\phi$ with respect to $\bw_{i\U}$ is equal to $\nabla_{i\U}\phi(\bX\bw;\by)$ = $\bX_{i\U}^T\phi'(\sum_{j=1}^m\bX_j\bw_j;\by)$ and worker $i$ has only $(\bX_i,\bw_i)$, 
every other worker $j$ sends $\bX_j\bw_j$ to the master, who computes $\phi'(\sum_{j=1}^m\bX_j\bw_j;\by)$\footnote{\label{foot:storing-labels}Note that even after computing $\bX\bw$, master needs access to the labels $y_i, i=1,2,\hdots,n$ to compute $\phi'(\bX\bw;\by)$. Since $\by\in\R^n$ is just a vector, we can either store that at master, or, alternatively, we can encode $\by$ distributedly at the workers and master can recover that using the method developed in \Sectionref{matrix-vector-mult} for Byzantine-resilient distributed matrix-vector multiplication, where the matrix is an identity matrix and vector is equal to $\by$.}  and sends it back to all the workers.
Observe that, even if one worker is corrupt, it can send an adversarially chosen vector to make the computation at the master deviate arbitrarily from the desired computation, which may adversely affect the update at all the worker nodes subsequently.\footnote{Specifically, suppose the $i$'th worker is corrupt and the adversary wants master to compute $\phi'(\bX\bw+\be;\by)$ for any arbitrary vector $\be\in\R^n$ of its choice, then the $i$'th worker can send $\bX_i\bw_i+\be$ to the master.}
Similarly, corrupt workers can send adversarially chosen information to affect the stopping criterion.

\subsection{Adversary Model}\label{subsec:adversary-model}
We want to perform the distributed computation described in \Subsectionref{setting-pgd} and \Subsectionref{setting-cd} under adversarial attacks, where the corrupt nodes may provide erroneous vectors to the master node. Our adversarial model is described next.

In our adversarial model, the adversary can
  corrupt at most $t<\frac{m}{2}$ worker nodes\footnote{Our results also apply to a slightly
    \emph{different} adversarial model, where the adversary can
    adaptively \emph{choose} which of the $t$ worker nodes to attack
    at each iteration. However, in this model, the adversary cannot
    modify the local stored data of the attacked node, as otherwise,
    over time, it can corrupt all the data, making any defense
    impossible.}, 
    and the compromised nodes may collaborate and 
  arbitrarily deviate from their pre-specified programs. If a worker
is corrupt, then instead of sending the true vector, it may send an
arbitrary vector to disrupt the computation. We refer to the corrupt
nodes as erroneous or under the Byzantine attack.
We can also handle asynchronous
updates, by dropping the straggling nodes beyond a specified delay,
and still compute the correct gradient due to encoding. Therefore we
treat updates from these nodes as being ``erased''.  We refer to these
as erasures/stragglers.  For every worker $i$ that sends a message to
the master, we can assume, without loss of generality, that the master
receives $\bu_i+\be_i$, where $\bu_i$ is the true vector and $\be_i$ is the error vector, where $\be_i=\bzero$ if the $i$'th node is
honest, otherwise can be arbitrary.
We assume that at most $t$ nodes can be adversarially corrupt and at most $s$ nodes can be stragglers, where $s$ and $t$ are some constants less than $\frac{1}{2}$ that we will decide later.
Note that the master node does not know which $t$ worker nodes are corrupted (which makes this problem non-trivial to solve), 
but knows $t$. 
We propose a method that mitigates the effects of both of these anomalies.
\begin{remark}
  A well-studied problem is that of asynchronous distributed
  optimization, where the workers can have different delays in updates
  \cite{DeanTailAtScale}. One mechanism to deal with this is to wait
  for a subset of responses, before proceeding to the next iteration,
  treating the others as missing (or erasures) \cite{KarakusSuDiYi17}.
  Byzantine attacks are quite distinct from such erasures, as the
  adversary can report wrong local gradients, requiring the master
  node to create mechanisms to overcome such attacks. If the master
  node simply aggregates the collected updates as in
  \eqref{eq:GD-aggregation-rule}, the computed gradient could be
  arbitrarily far away from the true one, even with a single adversary
  \cite{MhamdiGuRo18}.
\end{remark}

\subsection{Our Approach to Gradient Computation}\label{subsec:approach_GD}
Recall that $f_i(\bw)=\ell(\langle \bx_i,\bw\rangle;y_i)$ for some differentiable loss function $\ell$, and the gradient of $f_i$ at $\bw$ is equal to $\nabla f_i(\bw)=(\bx_i)^T\ell'(\langle \bx_i,\bw\rangle;y_i)$, where $\ell'(\langle \bx_i,\bw\rangle;y_i):=\frac{\partial}{\partial u}\ell(u;y_i)|_{u=\langle \bx_i,\bw\rangle}$. Note that $\nabla f_i(\bw)\in\R^d$ is a column vector.
Let $f'(\bw)$ denote the $n$-length vector whose $i$'th entry is equal to $\ell'(\langle \bx_i,\bw\rangle;y_i)$. 
With this notation, since $f(\bw)=\frac{1}{n}\sum_{i=1}^nf_i(\bw)$, we have $\nabla f(\bw)=\frac{1}{n}\bX^Tf'(\bw)$.
Since $n$ is a constant, it is enough to compute $\bX^Tf'(\bw)$. So, for simplicity, in the rest of the paper we write
\begin{equation}\label{eq:gradient-2-step}
\nabla f(\bw)=\bX^Tf'(\bw), \quad \forall\bw\in\R^d.
\end{equation}

A natural approach to computing the gradient $\nabla f(\bw)$ is to compute it in two rounds: 
{\sf (i)} compute $f'(\bw)$ in the 1st round by first multiplying $\bX$ with $\bw$ and then master locally computes $f'(\bw)$ from $\bX\bw$ 
(master can do this locally, because $\bX\bw$ is an $n$-dimensional vector whose $i$'th entry is equal to $\langle \bx_i,\bw\rangle$ and $(f'(\bw))_i=\ell'(\langle \bx_i,\bw\rangle;y_i)$);\footnote{Note that even after computing $\bX\bw$, master needs access to the labels $y_i, i=1,2,\hdots,n$ to compute $f'(\bw)$. See \Footnoteref{storing-labels} for a discussion on how master can get access to the labels.}  and then 
{\sf (ii)} compute $\nabla f(\bw) = \bX^Tf'(\bw)$ in the 2nd round by multiplying $\bX^T$ with $f'(\bw)$.
So, the task of each gradient computation reduces to two matrix-vector (MV) multiplications, where the matrices are fixed and vectors may be different each time. 
To combat against the adversarial worker nodes, we do both of these MV multiplications 
using data encoding and real-error correction; see \Figureref{problem-setup} on \Pageref{fig:problem-setup} for a pictorial description of our approach.

A two-round approach for gradient computation has been proposed for straggler mitigation
in \cite{LeeLaPePaRa18}, but our method for MV multiplication differs
from that fundamentally, as we have to provide adversarial protection. 
Note that in the case of stragglers/erasures we know who the straggling nodes are, but this information is not known in the case of adversarial nodes, and master needs to decode without this information in the context of Byzantine adversaries. This is slightly different from the standard error correcting codes (over finite fields) as the matrix entries in machine learning applications are from reals. In this case, we use ideas from real-error correction (or sparse reconstruction) from the compressive sensing literature \cite{CandesTao05}, and using which we develop an efficient decoding at master, which also gives rise to our sparse encoding matrix; see \Sectionref{matrix-vector-mult} for more details. For decoding efficiently, we crucially leverage the block error pattern and design a decoding method at master, which, interestingly, requires {\em just one} application of the sparse recovery method on a vector of size $m$, the number of workers, which may be much smaller than the data dimensions $n$ and $d$, thereby making the decoding computationally efficient. Our encoding matrix (given in \eqref{eq:encoding-matrix-S_i}, designed for MV multiplication) is very sparse and has a regular pattern of non-zero entries, which also makes it applicable for making coordinate-descent (CD) Byzantine-resilient. We emphasize that a general-purpose code for MV multiplication may not be applicable for CD, as each CD iteration requires updating only a few coordinates of the parameter vector, which makes it fundamentally different (and arguably more complicated to robustify) than GD iterations; see \Subsectionref{related-work_CD} and \Sectionref{solution_CD} for more details.
Since iterative algorithms (such as GD and CD) require repeated parameter updates,
it is crucial to have a method that has low computational complexity, both at the worker nodes as well as at the master node, and our coding solutions for both GD and CD achieve that, in addition to being highly storage efficient; see \Theoremref{main-result} for GD and \Theoremref{main-result_CD} for CD.

Coming back to our two-round approach for gradient computations using MV multiplications, for the 1st round, we encode $\bX$ using a sparse encoding matrix $\bS^{(1)}=[(\bS_1^{(1)})^T,\hdots,(\bS_m^{(1)})^T]^T$ and store $\bS_i^{(1)}\bX$ at the $i$'th worker node; and
for the 2nd round, we encode $\bX^T$ using another sparse encoding matrix $\bS^{(2)}=[(\bS_1^{(2)})^T,\hdots,(\bS_m^{(2)})^T]^T$, and store $\bS_i^{(2)}\bX^T$ at the $i$'th worker node.
Now, in the 1st round of the gradient computation at $\bw$, the master node broadcasts $\bw$ and the $i$'th worker node replies with $\bS_i^{(1)}\bX\bw$ (a corrupt worker may report an arbitrary vector); upon receiving all the vectors, the master node applies error-correction procedure to recover $\bX\bw$ and then locally computes $f'(\bw)$ as described above. In the 2nd round, the master node broadcasts $f'(\bw)$ and similarly can recover $\bX^Tf'(\bw)$ (which is equal to the gradient) at the end of the 2nd round.
So, it suffices to devise a method for multiplying a vector $\bv$ to a fixed matrix $\bA$ in a distributed and adversarial setting. Since this is a linear operation, we can apply error correcting codes over real numbers to perform this task. We describe it briefly below.

\paragraph{A trivial approach.}\label{trivial-approach} 
Take a generator matrix ${\bf G}$ of any real-error correcting linear code. Encode $\bA$ as $\bA^T{\bf G}=:{\bf B}$.
Divide the columns of $\bB$ into $m$ groups as $\bB=[\bB_1\ \bB_2 \hdots \bB_m]$, where worker $i$ stores $\bB_i$.
Master broadcasts $\bv$ and each worker $i$ responds with $\bv_T\bB_i+\be_i^T$, where $\be_i=\bzero$ if the $i$'th worker is honest, otherwise can be arbitrary. Note that at most $t$ of the $\be_i$'s can be non-zero. 
Responses from the workers can be combined as $\bv^T{\bf B}+\be^T$. Since every row of ${\bf B}$ is a codeword, $\bv^T{\bf B}=\bv^T\bA^T{\bf G}$ is also a codeword. Therefore, one can take any off-the-shelf decoding algorithm for the code whose generator matrix is ${\bf G}$ and obtain $\bv^T\bA^T$.
For example, we can use the Reed-Solomon codes (over real numbers) for this purpose, which only incurs a constant storage overhead and tolerates optimal number of corruptions (up to $<\frac{1}{2}$).
Note that we need fast decoding, as it is performed in every iteration of the gradient computation by the master. 
As far as we know, any off-the-shelf decoding algorithm ``over real numbers'' requires at least a quadratic computational complexity, which leads to $\Omega(n^2 + d^2)$ decoding complexity per gradient computation, which could be impractical. 

The trivial scheme does not exploit the block error pattern which we crucially exploit in our coding scheme to give a $\sim O((n+d)m)$ time decoding per gradient computation, which could be a significant improvement over the trivial scheme, since $m$ typically is much smaller than $n$ and $d$ for large-scale problems.
In fact, our coding scheme enables a trade-off (in terms of storage and computation/communication overhead at the master and 
  the worker nodes) with Byzantine adversary tolerance, without
  compromising the efficiency at the master node.
We also want encoding to be efficient (otherwise it defeats the purpose of data encoding) and our sparse encoding matrix achieves that. 
Our main result for the Byzantine-resilient distributed gradient computation is as follows, which is proved in \Sectionref{matrix-vector-mult}:
\begin{thm}[Gradient Computation]\label{thm:main-result}
Let $\bX\in\R^{n\times d}$ denote the data matrix. Let $m$ denote the total number of worker nodes.
We can compute the gradient exactly in a distributed manner in the presence of $t$ corrupt worker nodes and $s$ stragglers, with the following guarantees, where $\epsilon>0$ is a free parameter.
\begin{itemize}
\item $(s+t)\leq\left\lfloor\frac{\epsilon}{1+\epsilon}\cdot \frac{m}{2}\right\rfloor$. 
\item Total storage requirement is roughly $2(1+\epsilon)|\bX|$. 
\item Computational complexity for each gradient computation: 
\begin{itemize}
\item at each worker node is $O((1+\epsilon)\frac{nd}{m})$.
\item at the master node is $O((1+\epsilon)(n+d)m)$.
\end{itemize}
\item Communication complexity for each gradient computation:
\begin{itemize}
\item each worker sends $\left((1+\epsilon)\frac{n+d}{m}\right)$ real numbers.
\item master broadcasts $(n+d)$ real numbers.
\end{itemize}
\item Total encoding time is $O\left(nd\left(\frac{\epsilon}{1+\epsilon}m+1\right)\right)$.
\end{itemize}
\end{thm}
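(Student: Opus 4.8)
The plan is to isolate the common primitive and prove everything through a single \emph{matrix-vector multiplication lemma}: for a \emph{fixed} matrix $\bA$ stored in encoded form across the $m$ workers, the master can recover $\bA\bv$ \emph{exactly} for any query vector $\bv$ broadcast to the workers, even when at most $t$ of the returned blocks are adversarially corrupted and at most $s$ are erased. Once this primitive is established, \Theoremref{main-result} follows by applying it twice per iteration, exactly as laid out in \Subsectionref{approach_GD}: in the first round with $\bA=\bX$ and $\bv=\bw$ to obtain $\bX\bw$ (from which the master locally forms $f'(\bw)$), and in the second round with $\bA=\bX^T$ and $\bv=f'(\bw)$ to obtain $\bX^Tf'(\bw)=\nabla f(\bw)$. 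These two invocations are what produce the factor of $2$ in the storage bound $2(1+\epsilon)|\bX|$ and the additive $(n+d)$ structure in the communication and master-computation costs.

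First I would set up the encoding. Fixing the sparse encoding matrix $\bS=[(\bS_1)^T,\dots,(\bS_m)^T]^T$ of \eqref{eq:encoding-matrix-S_i}, worker $i$ stores $\bS_i\bA$ and, on receiving $\bv$, returns $\bS_i\bA\bv+\be_i$, with $\be_i=\bzero$ for honest workers and $\be_i$ arbitrary (or the whole block erased) otherwise. Stacking the responses yields $\bS(\bA\bv)+\be$, where the error vector $\be$ is \emph{block-sparse}: it is nonzero only on the (at most $t$) blocks indexed by Byzantine workers, together with the (at most $s$) erased blocks whose positions are known. The storage count $(1+\epsilon)|\bX|$ per matrix and the encoding time $O\!\left(nd\big(\tfrac{\epsilon}{1+\epsilon}m+1\big)\right)$ I would read off directly from the number of nonzero entries of $\bS$ dictated by its regular pattern, with $\epsilon$ calibrated so that $\tfrac{\epsilon}{1+\epsilon}\cdot\tfrac{m}{2}$ meets the corruption threshold $s+t$, matching the stated condition $(s+t)\le\lfloor\tfrac{\epsilon}{1+\epsilon}\cdot\tfrac{m}{2}\rfloor$ and giving $1+\epsilon=\tfrac{m}{m-2(s+t)}$.

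The core of the argument, and the step I expect to be the main obstacle, is decoding $\bA\bv$ from $\bS(\bA\bv)+\be$ \emph{efficiently}. I would take a parity-check matrix ${\bf H}$ annihilating the code, i.e. ${\bf H}\bS=\bzero$, so that the syndrome ${\bf H}\big(\bS(\bA\bv)+\be\big)={\bf H}\be$ depends only on the block-sparse error. The naive route recovers the full error vector $\be$ by sparse reconstruction, costing time that scales with the ambient dimension. The key is to exploit the \emph{regular block structure} of \eqref{eq:encoding-matrix-S_i} so that the syndrome collapses the within-block coordinates and leaves a single sparse-recovery problem on a vector of length $m$, whose support is exactly the set $\setS$ of anomalous workers; this is what yields the $\sim O((n+d)m)$ decoding claimed in \Subsectionref{approach_GD} rather than the $\Omega(n^2+d^2)$ of the trivial scheme. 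Making this collapse rigorous — showing that identifying the \emph{block} support reduces to one application of real-error correction (à la \cite{CandesTao05}) on an $m$-length quantity, and that the relevant submatrices have full rank over the reals so the recovery of $\setS$ is exact — is the crux.

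Finally I would close the correctness and complexity loops. Having identified $\setS$ with $|\setS|\le s+t$, the remaining $\ge m-(s+t)$ clean blocks form an overdetermined, full-rank system for $\bA\bv$; since the calibration makes $m-(s+t)$ at least the message dimension, the MDS/generic-rank property of $\bS$ forces a \emph{unique, exact} solution, establishing that the gradient is computed exactly, not merely approximately. For the costs, I would tally the per-multiplication work: each worker evaluates $\bS_i\bA\bv$ in $O((1+\epsilon)\tfrac{nd}{m})$ time using the sparsity and sends $(1+\epsilon)\tfrac{n+d}{m}$ reals across both rounds; the master forms the length-$m$ syndrome, runs the size-$m$ sparse recovery, and solves the clean system, all within $O((1+\epsilon)(n+d)m)$; and the master broadcasts $\bw$ then $f'(\bw)$, totalling $n+d$ reals. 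Summing the two rounds then delivers every bound in the statement.
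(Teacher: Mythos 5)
Your overall architecture coincides with the paper's: a Byzantine-tolerant matrix-vector primitive invoked twice per iteration (round one with $\bA=\bX$, $\bv=\bw$; round two with $\bA=\bX^T$, $\bv=f'(\bw)$), block-sparse errors, an annihilator-based syndrome, discarding the identified corrupt blocks, and a full-rank solve on the clean blocks, with the same calibration $1+\epsilon \geq \frac{m}{m-2(s+t)}$ and the same cost tally. However, there is a genuine gap at precisely the step you yourself flag as the crux: you assert that the regular structure of \eqref{eq:encoding-matrix-S_i} lets the syndrome ``collapse'' into a single sparse-recovery problem of size $m$, but you supply no mechanism for this collapse, and a deterministic collapse in fact fails. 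After interleaving, the master holds $p$ subsystems $\tilde{h}_j(\bv)=\btS_j\bA\bv+\bte_j$ with $\bF\btS_j=0$, hence $p$ syndromes $\bF\bte_j\in\R^k$ --- not one $m$-length quantity. Any \emph{fixed} linear functional combining the rows within a block can be defeated: the adversary, knowing the coefficients, chooses its within-block error entries so the combination vanishes on a corrupt worker, that worker escapes identification, and the subsequent ``clean-system'' solve is contaminated, destroying exactness. The paper's missing ingredient is \Lemmaref{random-combination}: combine the $p$ syndromes with coefficients drawn i.i.d.\ from an \emph{absolutely continuous} distribution, so that $\bte=\sum_j\alpha_j\bte_j$ has $\supp(\bte)=\I=\bigcup_j\supp(\bte_j)$ with probability $1$, after which \emph{one} application of real-error correction recovers $\I$. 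Your proposal also leaves the recovery machinery unconstructed where the paper is concrete: $\bF$ is a $k\times m$ Vandermonde matrix with $k=2t$ decoded by Reed--Solomon-type methods in $O(m^2)$ (\Lemmaref{RS-sparse-decoding}), and $\bS$ is built from a sparse (RREF-derived) basis of $\N(\bF)$, which is what actually yields $\bF\btS_j=0$ and the restricted full-rank properties (\Claimref{rest-F-perp_full-rank}, \Claimref{S_T-full-rank}) that you invoke as an unproved ``MDS/generic-rank property.''

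Two further points. First, your claim that failure of the collapse forces $\Omega(n^2+d^2)$ decoding conflates the unstructured trivial scheme with the structured one: within the paper's interleaved construction, even decoding each of the $p$ subsystems \emph{separately} is correct (each $\bte_j$ is $t$-sparse and the Vandermonde $\bF$ corrects up to $k/2=t$ errors per subsystem) and costs $O(pm^2)$, which already meets the stated $O((1+\epsilon)(n+d)m)$ master budget; the random-combination trick of \Subsectionref{finding-corrupt-nodes} buys a cleaner and cheaper identification step ($O(pkm)$ syndrome formation plus a single $O(m^2)$ decode) rather than being the only route to the asymptotics --- but your writeup commits to neither this fallback nor the paper's trick. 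Second, the clean-block recovery is not a monolithic least-squares solve: the paper exploits the repeated block $\bF_{\T}^{\perp}$ so that a \emph{single} pseudo-inverse computation serves all $p$ subsystems, which is what keeps that step at $O(pq|\T|)$; this detail should be made explicit if your cost claim for the master is to be justified.
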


\begin{remark}\label{remark:errors-erasures}
The statement of \Theoremref{main-result} allows for any $s$ and $t$ 
as long as $(s+t)\leq\left\lfloor\frac{\epsilon}{1+\epsilon}\cdot
\frac{m}{2}\right\rfloor$. As we are handling
  both erasures and errors in the same way\footnote{When there are
    \emph{only stragglers}, one can design an encoding scheme where
    \emph{both} the master and the worker nodes operate oblivious to
    encoding, while solving a slightly altered optimization problem
    \cite{KarakusSuDiYi17}, in which gradients are computed {\em approximately}, 
    leading to more efficient straggler-tolerant GD.} the corruption threshold does not have to
  handle $s$ and $t$ separately. To simplify the discussion, for the
  rest of the paper, we consider only Byzantine corruption, and denote the
  corrupted set by $\I\subset[m]$ with $|\I|\leq t$, with the
  understanding that this can also work with stragglers.

In \Theoremref{main-result}, $\eps$ is a design choice and a free parameter that can take any value in the interval $[0,m-1]$, where $\eps=0$ implies no corruption and $\eps=m-1$ implies that corruption threshold $t$ can be anything up to $\frac{m-1}{2}$. If we want to tolerate $t$ corrupt workers, then $\eps$ must satisfy $\eps\geq\frac{2t}{m-2t}$.\footnote{We could have written everything in terms of $t,m,n,d$, but we chose to introduce another variable $\eps$ which, in our opinion, clearly brings out the tradeoff between the corruption threshold and the resource requirements without cluttering the expressions.}
  
    \end{remark}

\begin{remark}[Comparison with the plain distributed PGD]\label{remark:comparison_pgd}
We compare the resource requirements of our method with the plain distributed PGD (which provides {\em no adversarial protection}), 
where all the data points are evenly distributed among the $m$ workers. In each iteration, master sends the parameter vector $\bw$ to all the workers; upon receiving $\bw$, all workers compute the gradients on their local data in $O(\frac{nd}{m})$ time (per worker) and send them to the master; master aggregates them in $O(md)$ time to obtain the global gradient and then updates the parameter vector using \eqref{eq:pgd-rule}.

In our scheme {\sf (i)} the total storage requirement is $O(1+\epsilon)$ factor more;\footnote{For example, by taking $\eps=2$, our method can tolerate $m/3$ corrupt worker nodes. So, we can tolerate linear corruption with a {\em constant} overhead in the resource requirement, compared to the plain distributed gradient computation which does not provide any adversarial protection.} (see also \Remarkref{parameters}) {\sf (ii)} the amount of computation at each worker node is $O(1+\epsilon)$ factor more; {\sf (iii)} the amount of computation at the master node is $O((1+\epsilon)(1+\frac{n}{d}))$ factor more, which is comparable in cases where $n$ is not much bigger than $d$; {\sf (iv)} master broadcasts $(1+\frac{n}{d})$ factor more data, which is comparable if $n$ is not much bigger than $d$; and {\sf (v)} each worker sends $O\left((1+\epsilon)\frac{1+\nicefrac{n}{d}}{m}\right)$ factor more data, which is 
$O(1+\epsilon)$ -- a constant factor -- as long as $n=O(dm)$.
\end{remark}

\begin{remark}\label{remark:parameters}
Let $m$ be an even number. Note that we can get the corruption threshold $t$ to be any number less than $m/2$, but at the expense of increased storage and computation. For any $\delta>0$, if we want to get $\delta$ close to m/2, i.e., $t=m/2-\delta$, then we must have $(1+\epsilon)\geq m/2\delta$. In particular, at $\epsilon=2$, we can tolerate up to $m/3$ corrupt nodes, with constant overhead in the total storage as well as on the computational complexity. 

Note that when $\delta$ is a constant, i.e., $t$ is close to $\frac{m-1}{2}$, then $\eps$ grows linearly with $m$; 
for example, if $t=\frac{m-1}{2}$, then $\eps=m-1$.
In this case, our storage redundancy factor is $O(m)$. In contrast, the trivial scheme (see ``trivial approach'' on \Pageref{trivial-approach}) 
does better in this regime and has only a constant storage overhead, but at the expense of an increased decoding complexity at the master, 
which is at least quadratic in the problem dimensions $d$ and $n$, whereas, our decoding complexity at the master always scales linearly with $d$ and $n$. If we always want a constant storage redundancy for all values of the corruption threshold $t$, we can use our coding scheme if $t\leq c\cdot\frac{m-1}{2}$, where $c<1$ is a constant, and use the trivial scheme if $t$ is close to $\frac{m-1}{2}$. 

Our encoding is also efficient and requires $O\left(nd\left(\frac{\epsilon}{1+\epsilon}m+1\right)\right)$ time.
Note that $O(nd)$ is equal to the time required for distributing the data matrix $\bX$ among $m$ workers (for running the distributed gradient descent algorithms {\em without} the adversary); and the encoding time in our scheme (which results in an encoded matrix that provides Byzantine-resiliency) is a factor of $(2t+1)$ more.
\end{remark}

\begin{remark}\label{remark:optimal-threshold}
Our scheme is not only efficient (both in terms of computational complexity and storage requirement),
but it can also tolerate up to $\lfloor\frac{m-1}{2}\rfloor$ corrupt worker nodes (by taking $\epsilon=m-1$ in \Theoremref{main-result}).
It is not hard to prove that this bound is information-theoretically optimal, i.e., no algorithm can tolerate $\lceil\frac{m}{2}\rceil$ corrupt worker nodes,
and at the same time correctly computes the gradient.
\end{remark}

\subsection{Our Approach to Coordinate Descent}\label{subsec:approach_CD}
We use data encoding and add redundancy to enlarge the parameter space.
Specifically, we encode the data matrix $\bX$ using an encoding matrix $\bR=[\bR_1\ \bR_2\ \hdots\ \bR_m]$, where each $\bR_i$ is a $d\times p$ matrix (with $pm\geq d$), and store $\bX\bR_i$ at the $i$'th worker. Define $\btX^R:=\bX\bR$. 
Now, instead of solving \eqref{eq:problem-cd}, we solve the encoded problem $\arg\min_{\bv\in\R^{pm}}\phi(\btX^R\bv;\by)$ using Algorithm 1 (together with decoding at the master); see \Figureref{problem-setup_CD} on \Pageref{fig:problem-setup_CD} for a pictorial description of our algorithm. 
We design the encoding matrix $\bR$ such that at every iteration of our algorithm, updating any (small) subset of coordinates of $\bv_i$'s (let $\bv=[\bv_1^T\ \bv_2^T\ \hdots\ \bv_m^T]$) automatically updates some (small) subset of coordinates of $\bw$; 
and, furthermore, by updating those coordinates of $\bv_i$'s, we can efficiently recover the correspondingly updated coordinates of $\bw$, despite the errors injected by the adversary.
In fact, at any iteration $t$, the encoded parameter vector $\bv_t$ and the original parameter vector $\bw_t$ satisfies $\bv_t = \bR^+\bw_t$, where $\bR^+:=\bR^T(\bR\bR^T)^{-1}$ is the Moore-Penrose pseudo-inverse of $\bR$, and $\bw_t$ evolves in the same way as if we are running Algorithm 1 on the original problem.

We will be effectively updating the coordinates of the parameter vector $\bw$ in chunks of size $(m-2t)$ or its integer multiples (where $t$ is the number of corrupt workers). 
In particular, if each worker $i$ updates $k$ coordinates of $\bv_i$, then $k(m-2t)$ coordinates of $\bw$ will get updated.
For comparison, \Algorithmref{distr-coor-desc-algo} updates $km$ coordinates of the parameter vector $\bw$ in each iteration, if each worker updates $k$ coordinates in that iteration.

As described in \Algorithmref{distr-coor-desc-algo} for the Byzantine-free CD, in order to update its local parameter vector $\bw_i$ according to \eqref{eq:dist-coor-dist-update}, worker $i$ needs access to $\phi'(\bX\bw;\by)$, which master computes after receiving $\{\bX_j\bw_j\}_{j\in[m]}$ from the workers. 
In our Byzantine-resilient algorithm for CD also master will need to compute $\bX\bw$ in every CD iteration, and for this purpose, we employ the same encoding-decoding procedure for MV multiplication that we used in the first round of gradient computation, as described in \Subsectionref{approach_GD}. 
In particular, to make the notation distinct from gradient computation, in order to compute $\bX\bw$, we encode $\bX$ using an encoding matrix $\bL=[\bL_1^T\ \bL_2^T\ \hdots\ \bL_m^T]^T$, where each $\bL_i$ is a $p'\times n$ matrix (with $p'm\geq n$) and worker $i$ stores $\btX_i^L=\bL_i\bX$. 

Note that in order to compute $\bX\bw$, in the first round of gradient computation as described in \Subsectionref{approach_GD}, master broadcasts $\bw$ to all the workers and each worker $i$ computes $\btX_i^L\bw$ and sends it the the master (corrupt workers may report arbitrary vectors), who then decodes and obtains $\bX\bw$. 
However, in coordinate descent, though master wants to compute $\bX\bw$ in each CD iteration, we can significantly improve the computation required at each worker: 
since only a few coordinates of the original parameter vector $\bw$ are updated in each CD iteration, master needs to send only those updated coordinates, and workers need to
preform MV multiplication with a much smaller matrix, whose number of columns is equal to the number of updated coordinates of $\bw$ that they receive from master. Thus, the computational complexity in each CD iteration at worker is proportional to the number of coordinates updated in each CD iteration, as desired.

Our main result for the Byzantine-resilient distributed coordinate descent is stated below, which is proved in \Sectionref{solution_CD}.
\begin{thm}[Coordinate Descent]\label{thm:main-result_CD}
Under the setting of \Theoremref{main-result},
our Byzantine-resilient distributed CD algorithm has the following guarantees, where $\epsilon>0$ is a free parameter.
\begin{itemize}
\item $(s+t)\leq\left\lfloor\frac{\epsilon}{1+\epsilon}\cdot \frac{m}{2}\right\rfloor$.
\item Total storage requirement is roughly $2(1+\epsilon)|\bX|$.
\item If each worker $i$ updates $\tau$ coordinates of $\bv_i$, then 
\begin{itemize}
\item $\frac{\tau m}{1+\epsilon}$ coordinates of the corresponding $\bw$ gets updated.
\item the computational complexity in each iteration 
\begin{itemize}
\item at each worker node is $O(n\tau)$.
\item at the master node is $O((1+\epsilon)nm+\tau m^2)$.
\end{itemize}
\item the communication complexity in each iteration
\begin{itemize}
\item each worker sends $\left(\tau+(1+\eps)\frac{n}{m}\right)$ real numbers.
\item master broadcasts $\left(\frac{\tau m}{1+\eps}+n\right)$ real numbers.
\end{itemize}
\end{itemize}

\item Total encoding time is $O\left(nd\left(\frac{\epsilon}{1+\epsilon}m+1\right)\right)$. 
\end{itemize}
\end{thm}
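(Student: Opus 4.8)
The plan is to reduce the whole statement to two things: (a) constructing the two sparse encoding matrices $\bR$ (to enlarge the parameter space) and $\bL$ (to compute $\bX\bw$) with the right block structure, and (b) accounting for the per-iteration cost of running \Algorithmref{distr-coor-desc-algo} on the encoded objective $\phi(\btX^R\bv;\by)$ together with master-side decoding. Since $\btX^R\bv=\bX\bR\bv=\bX\bw$ whenever $\bw=\bR\bv$, the encoded objective equals the original one, so it suffices to show each iteration faithfully realizes a coordinate-descent step on $\bw$ in chunks of size $m-2t$ and that both $\bX\bw$ (needed for $\phi'$) and the updated chunk of $\bw$ are recoverable under up to $t$ Byzantine workers at the claimed cost. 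I would first fix the rate: take $m-2t$ as the chunk size, so the threshold $t\leq\lfloor\frac{\eps}{1+\eps}\cdot\frac{m}{2}\rfloor$ is exactly the condition $m-2t\geq\frac{m}{1+\eps}$ (tight at $\eps=\frac{2t}{m-2t}$), folding stragglers into $t$ as in \Remarkref{errors-erasures}.

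Next I would build $\bR$ block-wise: partition $[d]$ into $\frac{d}{m-2t}$ groups of $m-2t$ coordinates and, for each group $c$, fix an $m\times(m-2t)$ generator matrix $\bG_c$ of a real MDS code matching the sparse regular pattern of \eqref{eq:encoding-matrix-S_i}, distributing its $m$ rows one per worker. Then $\bw^{(c)}=\bG_c^T\bv^{(c)}$ and $\bv^{(c)}=(\bG_c^T)^+\bw^{(c)}$ lies in the column space of $\bG_c$, i.e.\ is a codeword. The key invariant to establish is $\bv_t=\bR^+\bw_t$ with $\bR^+=\bR^T(\bR\bR^T)^{-1}$: the encoded gradient $\bR^T\bX^T\phi'$ has each group-block of the form $\bG_c(\cdots)$, so every CD step keeps each $\bv^{(c)}$ inside the code; because all $m$ workers update the same group-index set $\U$, each touched group is updated in all $m$ coordinates, and decoding the reported $\tilde\bv^{(c)}=\bv^{(c)}+\be^{(c)}$ with $\|\be^{(c)}\|_0\leq t$ recovers $\bv^{(c)}$ (hence $\bw^{(c)}$) exactly via the size-$m$ real-error correction of \Sectionref{matrix-vector-mult}. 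This yields the count: $\tau$ updated coordinates per worker touch $\tau$ groups, hence $\tau(m-2t)=\frac{\tau m}{1+\eps}$ coordinates of $\bw$.

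For the MV product I would reuse the first-round scheme of \Theoremref{main-result} verbatim: encode $\bX$ by $\bL$ with $p'm=(1+\eps)n$, so $\btX_i^L=\bL_i\bX$ has $\frac{(1+\eps)n}{m}$ rows, and compute $\bX\bw$ incrementally. Master broadcasts only the $\frac{\tau m}{1+\eps}$ freshly-updated coordinates of $\bw$; worker $i$ multiplies the corresponding $\frac{\tau m}{1+\eps}$ columns of $\btX_i^L$ in $O(n\tau)$ time and returns $\frac{(1+\eps)n}{m}$ reals; master decodes $\bX\Delta\bw$ in $O((1+\eps)nm)$ time. The parameter round then broadcasts $\phi'(\bX\bw;\by)$ ($n$ reals) and $\U$; worker $i$ forms the $\tau$ relevant entries of $(\btX_i^R)^T\phi'$ in $O(n\tau)$ and returns them ($\tau$ reals); master runs $\tau$ size-$m$ decodings in $O(\tau m^2)$. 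Summing the two rounds gives worker cost $O(n\tau)$, master cost $O((1+\eps)nm+\tau m^2)$, per-worker upload $\tau+(1+\eps)\frac{n}{m}$, master broadcast $\frac{\tau m}{1+\eps}+n$, storage $(1+\eps)nd+(1+\eps)nd=2(1+\eps)|\bX|$, and encoding time $O(nd(\frac{\eps}{1+\eps}m+1))$ exactly as in \Theoremref{main-result}.

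The hard part is not the bookkeeping but establishing the invariant so that it is simultaneously robust and local: master must recover the true updated $\bw$-chunk from $m$ possibly-corrupted reports, \emph{and} the decoding must split into $\tau$ independent size-$m$ problems rather than one monolithic size-$(1+\eps)d$ problem. This is what forces the group-aligned, block-MDS choice of $\bR$ and the ``subset of coordinates for all workers'' update mode of \Algorithmref{distr-coor-desc-algo}. I expect the delicate step to be verifying that this structure is compatible with the sparse regular pattern of \eqref{eq:encoding-matrix-S_i} while keeping each $\bG_c^T\bG_c$ well-conditioned (ideally with orthonormal columns), so that the induced dynamics on $\bw=\bR\bv$ genuinely match a coordinate-descent run on the original problem rather than an arbitrarily preconditioned one; the remaining guarantees then follow by the accounting above, and the full details are deferred to \Sectionref{solution_CD}.
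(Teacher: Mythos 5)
Your proposal is correct and takes essentially the same route as the paper's proof in \Sectionref{solution_CD}: your per-group generator $\bG_c$ with orthonormal columns is exactly the paper's $\bF^{\perp}$ arranged in the sparse pattern of \eqref{eq:encoding-matrix-S_i} (the orthonormality you flag as delicate is precisely the paper's $\bS^+=\bS^T$, i.e.\ $\bR^+=\bR^T$, which makes the encoded update realize the plain CD step \eqref{eq:update-w} rather than a preconditioned one), and your invariant $\bv^t=\bR^+\bw^t$, the split into $\tau$ independent size-$m$ real-error corrections, the incremental broadcast of $\bw_{f(\U)}^{t-1}-\bw_{f(\U)}^{t}$ in the $\bL$-round, and the resource accounting all coincide with the paper's argument. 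No gaps.
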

\begin{remark}[Comparison with the plain distributed CD]\label{remark:comparison-with-dist-CD}
We compare the resource requirements of our method with the plain distributed CD described in \Algorithmref{distr-coor-desc-algo} that does not provide any adversarial protection.
Let $\eps$ be any number in the interval $[0,m-1]$ -- for illustration, we can take $\eps=2$, which means $t\leq\frac{m}{3}$ workers are corrupt.
In \Algorithmref{distr-coor-desc-algo}, if each worker $i$ updates $\frac{\tau}{1+\eps}$ coordinates of $\bw_i$ (in total $\frac{\tau m}{1+\eps}$ coordinates of $\bw$) in each iteration, then 
{\sf (i)} each worker requires $O(\frac{n\tau}{1+\eps})$ time to multiply $\bX_i$ with the updated part of $\bw_i$; 
{\sf (ii)} master requires $O(nm)$ time to compute $\sum_{i=1}^m\bX_i\bw_i$ from $\{\bX_i\bw_i\}_{i\in[m]}$; 
{\sf (iii)} each worker sends $n$ real numbers (required for $\bX_i\bw_i$) to master; and 
{\sf (iv)} master broadcasts $n$ real numbers (required for $\phi'(\bX\bw;\by)$).

In our scheme {\sf (i)} the total storage requirement is $O(1+\epsilon)$ factor more; {\sf (ii)} the amount of computation at each worker node is $O(1+\epsilon)$ factor more; {\sf (iii)} the amount of computation at the master node is $O((1+\epsilon)+\frac{\tau m}{n})$ factor more -- typically, since $\tau$ is a constant and number of workers is much less than $n$, this again could be $O(1+\eps)$; {\sf (iv)} master broadcasts $\left(1+\frac{\tau m}{(1+\eps)n}\right)$ factor more data, which could be a constant if $\tau m$ is smaller than $(1+\eps)n$; and {\sf (v)} each worker sends $\left(\frac{\tau}{n}+\frac{(1+\eps)}{nm}\right)$ factor more data, where the 1st term is much smaller than 1 as $\tau$ is typically a constant, and the 2nd term is close to zero as $(1+\eps)$ is always upper-bounded by $m$.
\end{remark}

\begin{remark}[Comparison with the replication-based strategy]\label{remark:comparison-replication_CD}
One simple way to make \Algorithmref{distr-coor-desc-algo} Byzantine-resilient is using repetition code, where we first divide the set of $m$ workers into $\frac{m}{2t+1}$ groups of size $(2t+1)$ each and also divide the data matrix as $\bX=[\bX_1\ \bX_2\ \hdots\ \bX_{\frac{m}{2t+1}}]$ (assume, for simplicity, that $(2t+1)$ divides $m$). Now, store the $i$'th block $\bX_i$ at the $(2t+1)$ workers in the $i$'th group of workers.
Let the parameter vector be divided as $\bw=[\bw_1^T\ \bw_2^T\ \hdots\ \bw_{\frac{m}{2t+1}}^T]^T$.
In each CD iteration, the local parameter updates in any $\bw_i$ is replicated at $(2t+1)$ different workers in the $i$'th group of workers, and since at most $t$ workers are corrupt, master can do a majority vote for decoding. Note that the total storage and the computation at workers in this scheme grow {\em linearly} by a factor of $(2t+1)$, where $t$ is the number of corruption, which could be significant. In contrast, the method that we propose can tolerate linear corruption, say, $t=\frac{m}{3}$, with a {\em constant} overhead in storage and computational complexity.
\end{remark}

The Remarks \ref{remark:errors-erasures}, \ref{remark:parameters}, \ref{remark:optimal-threshold} are also applicable for \Theoremref{main-result_CD}.

\section{Related Work}\label{sec:related-work}
There has been a significant recent interest in using coding-theoretic
techniques to mitigate the well-known straggler problem
\cite{DeanTailAtScale}, including gradient coding
\cite{TandonLeDiKa17,RavivTaDiTa18,CharlesPa18,HalbawiRuSaHa18},
encoding computation \cite{LeeLaPePaRa18,ShortDotGroverNIPS16,ShortDotGrover_Journal19}, and data
encoding \cite{KarakusSuDiYi17,KarakusSuDiYi_JMLR19}. However, one cannot directly apply
the methods for straggler mitigation to the Byzantine attacks case, as
we do not know which updates are under attack. Distributed computing
with Byzantine adversaries is a richly investigated topic since
\cite{LamportShPe82}, and has received recent attention in the context
of large-scale distributed optimization and learning
\cite{BlanchardMhGuSt17,ChenSuXu17,ChenWaChPa18,YinChRaBa18,Alistarh_Byz-SGD18,SuX_Byz19,Zeno_ByzSGD19,Bartlett-Byz_nonconvex19,NirupamVa_Byz-SGD19,Detox_ByzSGD19,RSA_Byz-SGD19,YinRobustFL19,LargrangeCoding_Yu-etal19,DataDi_Byz-SGD_Heterogeneous20,DataDi_Byz-LocalSGD_Heterogeneous20,Jaggi-Heterogeneous20}.
These can be divided into three categories:
{\sf (i)} One which assume explicit statistical models for data across workers (e.g., data drawn i.i.d.\ from a probability distribution) and analyze gradient descent \cite{ChenSuXu17,YinChRaBa18,SuX_Byz19,Bartlett-Byz_nonconvex19,YinRobustFL19}.
{\sf (ii)} Other set of works make no probabilistic assumption on data, and optimize through stochastic methods (e.g., stochastic gradient descent) \cite{BlanchardMhGuSt17,Alistarh_Byz-SGD18,NirupamVa_Byz-SGD19,Zeno_ByzSGD19,RSA_Byz-SGD19,Detox_ByzSGD19,DataDi_Byz-LocalSGD_Heterogeneous20,DataDi_Byz-SGD_Heterogeneous20,Jaggi-Heterogeneous20} and also with deterministic methods (e.g., gradient descent) \cite{DataDi_Byz-LocalSGD_Heterogeneous20,DataDi_Byz-SGD_Heterogeneous20}.
Note that none of these two sets of works do data encoding and work with data as it is, 
and provide Byzantine resilience by applying some robust aggregation procedures (e.g., geometric median, coordinate-wise median, outlier-filtering, etc.) at the master for aggregating gradients. 
{\sf (iii)} Another line of work which is most relevant to ours provide Byzantine resiliency using redundant computations, either by encoding the gradients \cite{ChenWaChPa18} or by encoding the data itself \cite{LargrangeCoding_Yu-etal19}. 
Note that \cite{Detox_ByzSGD19} combines both redundant computations and do a hierarchical robust aggregation and not is directly comparable to ours. 

Note that the statistical nature of data/analysis in the first two sets of works leads to a statistical approximation error in the convergence rates, which is also intensified by the inaccuracy of the robust gradient aggregation procedure. 
One of the main focuses in these works is typically on obtaining faster convergence (where the goal is to match the convergence rate of plain SGD/GD) and as good an approximation error as possible. Note that the approximation error in all these works scales at least as $\Omega(\sqrt{d})$, where $d$ is the dimension of the model parameter vector, which may be significant in high-dimensional settings.
Moreover, in all these works, since we are not allowed to pre-process the data (such as, doing data encoding, etc.), we need to make some assumptions on the data, and furthermore, master has to apply a non-trivial decoding for gradient aggregation, which requires significantly more time than what our decoding requires.
For example, filtering-based decoding \cite{SuX_Byz19,DataDi_Byz-LocalSGD_Heterogeneous20,DataDi_Byz-SGD_Heterogeneous20}, median-based decoding \cite{ChenSuXu17,YinChRaBa18}, and heuristic approaches \cite{BlanchardMhGuSt17}, all have a {\em super-linear} complexity in $m$ -- in fact, the filtering-based method as in \cite{SuX_Byz19,DataDi_Byz-LocalSGD_Heterogeneous20,DataDi_Byz-SGD_Heterogeneous20} (which is the most effective in terms of the approximation error) requires $O(m^3d)$ time. In contrast, our decoding has a {\em linear} dependence on both $m$ and $d$. 
Note that, unlike the first two categories, the third line of work (to which ours also belongs) gives deterministic guarantees and work with arbitrary datasets, with no probabilistic assumptions; we elaborate on these and do a detailed comparison with ours below. We skip the comparison with the first two categories, as it would not be a fair comparison because the underlying setting is different -- results in the first two categories are based on statistical assumptions on data/algorithm and inaccurate gradient recovery, whereas, results in the third category make no assumption on the data/algorithm and allow exact gradient recovery.

We want to emphasize that all these works use gradient descent (GD) or stochastic gradient descent (SGD) as their optimization algorithm, which is a data-parallelization method; in this paper, additionally, we also use coordinate descent (CD) algorithm for optimization, which is a model-parallelization method and is preferred over GD in some applications; see \Sectionref{intro} for more details on this. As will be evident from \Sectionref{solution_CD}, making CD secure against Byzantine attacks is arguably more intricate than securing GD.

We divide this section into three categories: first we compare the redundancy-based methods for GD in \Subsectionref{related-work_GD}, and then CD in \Subsectionref{related-work_CD}. Since we use matrix-vector (MV) multiplication as a core subroutine for both GD and CD, we also compare related work on this in \Subsectionref{related-work_MV-mult}.

\subsection{Gradient Descent (GD)}\label{subsec:related-work_GD}
In this section, we do a detailed comparison with \cite{ChenWaChPa18} and \cite{LargrangeCoding_Yu-etal19}, which are the closest related works that also combat Byzantine adversaries using redundant computations.

For the sake of comparison, assume that $t\leq\frac{m-1}{2}$ workers are corrupt.
The coding scheme of Chen et al.~\cite{ChenWaChPa18}, which they called {\sc Draco},
requires repetition of each data point $(2t+1)$ times, 
storing each copy at different workers. This gives the storage redundancy {\em factor} of $(2t+1)$ in {\sc Draco}, whereas, our coding method requires storage redundancy factor of $2(1+\eps)=\frac{2m}{m-2t}$, which is a constant \emph{even} if $t$ is a constant ($<\frac{1}{2}$) fraction of $m$.\footnote{To highlight the storage redundancy gain of our method over that of {\sc Draco}, consider the following two concrete scenarios, where the data matrix $\bX\in\R^{n\times d}$ consists of $nd$ real numbers: {\sf (i)} In a large setup with $m=1000$ worker nodes, if we want resiliency against $t=100$ corrupt nodes (1/10 nodes are corrupt), our method requires redundancy of 2.5, whereas {\sc Draco} requires redundancy of 201 (i.e., we need to store only $2.5\times nd$ real numbers, whereas {\sc Draco} stores $201\times nd$ real numbers), a multiplicative-factor of $>$ 80 more than ours. {\sf (ii)} In a moderate setup with $m=150$ and $t=50$ (1/3 nodes are corrupt), the redundancy of our method is 6, whereas {\sc Draco} requires redundancy of 101, a multiplicative-factor of $\approx 17$ more than ours.} 
Since each worker in {\sc Draco} is doing $(2t+1)$-factor more computation for each GD iteration (than simply computing the gradients as in plain distributed GD), the computational cost at workers also grows by the same factor, which is a significant downside of their scheme. In contrast, our scheme only requires $O(\frac{m}{m-2t})$ more computation at worker, which is a constant even if $t$ is a constant ($<\frac{1}{2}$) fraction of $m$.
This significantly reduces the computation time at the worker nodes in our
scheme compared to {\sc Draco}, without
sacrificing much on the computation time required by the master node -- 
the decoding at master in {\sc Draco} takes $O(md)$ time, whereas, our scheme requires $O(\frac{m}{m-2t}(n+d)m)$ time, which is a factor of $O(\frac{m}{m-2t}(1+\frac{n}{d}))$ more than {\sc Draco}. In high-dimensional settings, where $n$ is not much bigger than $d$, and $t$ is a constant ($<\frac{1}{2}$) fraction of $m$, this overhead is constant. Overall, for a constant fraction of corruption, say, $t=\frac{m}{3}$, {\sc Draco} requires $\Omega(t)$ times more storage and computation at workers than our scheme (which could be significant in large-scale settings), and requires $\Omega(1+\frac{n}{d})$ times less computation at master. 
Note that the computation time at workers scales at least as $\Omega(\frac{nd}{m})$, which dominates the time taken by master (since $n,d$ are typically much larger than $m$), so our scheme will be faster than {\sc Draco} with respect to the overall running time. 
Note that the coding in {\sc Draco} is restricted to data replication
redundancy, as they encode the gradient as done in
\cite{TandonLeDiKa17}, enabling application to (non)-convex problems; in
contrast, we encode the data enabling significantly smaller
redundancy, and apply it to learn generalized linear models, and is also
applicable to MV multiplication.

Yu et al.~\cite{LargrangeCoding_Yu-etal19} (which is a concurrent work\footnote{\label{foot:lagrange}Yu et al.~\cite{LargrangeCoding_Yu-etal19} is concurrent to our conference versions in Allerton 2018 \cite{DataSoDi-ByzGD-Allerton18} and ISIT 2019 \cite{DataSoDi-ByzGD-ISIT19,DataDi_ByzCD-ISIT19}, on which this paper is based.}) proposes Lagrange coded computing in a distributed framework to compute any multivariate polynomial of the input data and simultaneously provides resilience against stragglers, security against adversaries, and privacy of the dataset against collusion of workers. They leverage the Lagrange polynomial to create computation redundancy among workers, and using standard Reed-Solomon decoding, they can tolerate both erasures/stragglers and errors/adversaries.
Their method provide privacy by adding random elements from the field (which in the case of gradient computation is the field of all matrices of a certain dimension) while doing the polynomial interpolation. This is a standard method in Shamir secret sharing scheme \cite{ShamirSSS79} that is widely used in information-theoretically secure MPC protocols \cite{MPCBook15} to provide privacy of users' data.
For the sake of comparison of the resource requirements of our scheme and the one in \cite{LargrangeCoding_Yu-etal19}, consider the task of linear regression (the concrete machine learning application studied in \cite{LargrangeCoding_Yu-etal19}). In the following, we assume that $\frac{m-1}{2}-\delta$ workers are corrupt, which corresponds to $\eps=\frac{m}{1+2\delta}-1$ in our setting; here $\delta$ can take any value in $[0:\frac{m-1}{2}]$. {\sf (i)} The storage overhead of our scheme is $\frac{m}{\delta+\nicefrac{1}{2}}$, whereas, in \cite{LargrangeCoding_Yu-etal19}, it is $\frac{m}{\delta+1}$, which is roughly the same as ours.
For example, to tolerate $\frac{m}{3}$ corrupt workers (i.e., $\delta=\frac{m-3}{6}$), the storage overhead of our scheme and of \cite{LargrangeCoding_Yu-etal19} is a multiplicative factor of $6$ and $\frac{6}{1+\nicefrac{3}{m}}\approx 6$, respectively.
{\sf (ii)} The encoding time complexity of our scheme is $O(nd(m-2\delta))$, whereas, it is $O(m\log^2(m)\frac{nd}{\delta+1})$ in \cite{LargrangeCoding_Yu-etal19}. 
Note that for constant $\delta$ (i.e., corruption close to $1/2$), the encoding time of our scheme is much less (by a factor of $O(m\log^2(m))$) than that of \cite{LargrangeCoding_Yu-etal19}, whereas, for corruption $cm$, where $c<\frac{1}{2}$, the scheme of \cite{LargrangeCoding_Yu-etal19} takes $O(\frac{m}{\log^2(m)})$-factor less time in encoding than ours.
{\sf (iii)} The computation time at each worker per gradient computation in both our scheme and \cite{LargrangeCoding_Yu-etal19} is roughly the same -- ours requires $O(\frac{nd}{1+2\delta})$ time and \cite{LargrangeCoding_Yu-etal19} requires $O(\frac{nd}{1+\delta})$ time.
{\sf (iv)} The decoding time complexity per gradient computation in \cite{LargrangeCoding_Yu-etal19} is $O(m\log^2(m)d)$, whereas, ours requires $O((1+\eps)(n+d)m)$ time. Note that when $n$ is not much bigger than $d$ and we want a constant fraction of corruption, say, $\frac{m}{3}$ corruption, then their decoding complexity is worse than ours by a logarithmic factor.
Also note that our decoding algorithm is arguably simpler than theirs.
{\sf (v)} For per gradient computation, each worker respectively sends $\frac{n+d}{1+2\delta}$ and $d$ real numbers in ours and the scheme in \cite{LargrangeCoding_Yu-etal19}. Note that if $n\leq dm$ and to tolerate a constant fraction of corruption, say, $\frac{m}{3}$ corruption, each worker sends roughly $O(m)$ less data in our scheme than that of \cite{LargrangeCoding_Yu-etal19}. Overall, if we want tolerance against $\frac{m}{3}$ corrupt worker nodes, then both our scheme and the one in \cite{LargrangeCoding_Yu-etal19} have similar resource requirements, except for that our scheme has a much better communication complexity (by a factor of $O(m)$) from workers to the master, whereas, the encoding time complexity (which is a one-time process) of \cite{LargrangeCoding_Yu-etal19} is better than ours by a factor of $O(\frac{m}{\log^2(m)})$. 

\subsection{Coordinate Descent (CD)}\label{subsec:related-work_CD} 
Even for the straggler problem, we are only aware of one work by Karakus et al.~\cite{KarakusSuDiYi_JMLR19} that, in addition to distributed GD, also studies distributed CD, and that for quadratic problems (e.g., linear/ridge regression) only.
It also does data encoding and achieves low redundancy and low complexity, by allowing convergence to an approximate rather than exact solution. 
As far as we know, ours is the first work that studies distributed CD under Byzantine attacks and provides an efficient solution, much better than the replication-based solution (see \Remarkref{comparison-replication_CD}). 
At the heart of our solution for CD is the matrix-vector (MV) multiplication procedure that we develop in this paper; and it is the specific regular structure of our encoding matrix (given in \eqref{eq:encoding-matrix-S_i}, designed for the MV multiplication) that allows for partially updating the coordinates of the parameter vector in each CD iteration. 
Note that a general-purpose encoding matrix for MV multiplication may not be applicable for the CD algorithm.

It has been observed earlier in several works (see, for example, \cite{LeeLaPePaRa18,ShortDotGroverNIPS16}) that gradient computation in GD for linear regression can be reduced to MV multiplication, and any general-purpose code for MV multiplication can be used to provide a solution for gradient computation. 
As far as we know, ours is the first paper that makes the connection of CD and MV multiplication, and provides an efficient solution for CD (which is also resilient to Byzantine attacks) for learning generalized linear models. Note that, unlike GD, not any general-purpose code for MV multiplication can be used for CD: the main challenge in CD comes from the fact that we only update a small number of coordinates of the parameter vector in each CD iteration; when we encode the data and iteratively update some coordinates of the (encoded) parameter vector using the encoded data, we need to make sure that this update in the encoded parameter vector is reconciled with the update in the original parameter vector. This is fundamentally different from GD iterations. See \Sectionref{solution_CD} for more details.

\subsection{Matrix-Vector Multiplication}\label{subsec:related-work_MV-mult}
For the task of a more fundamental problem of matrix-vector (MV) multiplication in the presence of Byzantine adversaries, which is at the core of the optimization algorithms in this paper, we are only aware of two concurrent works \cite{LargrangeCoding_Yu-etal19} (see \Footnoteref{lagrange}) and \cite{ShortDotGrover_Journal19}\footnote{\label{foot:shortdot}The conference version \cite{ShortDotGroverNIPS16} only studies the straggler problem, and the journal version \cite{ShortDotGrover_Journal19} briefly mentions how their results from \cite{ShortDotGroverNIPS16} can be extended to handle adversarial nodes, and we describe that in this section.} that provide (coding-theoretic) solutions to this problem. In the following, we do a detailed comparison of our solution with both of these works and also discuss the (dis)similarities.

We have already done a detailed comparison with Yu et al.~\cite{LargrangeCoding_Yu-etal19} (concurrent work, see \Footnoteref{lagrange}) with respect to gradient descent in \Subsectionref{related-work_GD}. For the problem of MV multiplication, the storage requirement, computation time per worker, and communication complexity to/from workers is the same in both ours and \cite{LargrangeCoding_Yu-etal19}. The comparison of encoding time complexity is same as above; however, for a constant corruption, say, $\frac{m}{3}$ corrupt workers, our method outperforms the one in \cite{LargrangeCoding_Yu-etal19} in terms of the decoding time complexity by a factor of $O(\log^2(m))$. Note that, unlike \cite{LargrangeCoding_Yu-etal19}, we make a fundamental connection of handling Byzantine errors with the sparse reconstruction (or the real-error correction) problem from the compressive sensing literature \cite{CandesTao05}.

Dutta et al.~\cite{ShortDotGrover_Journal19} (concurrent work, see \Footnoteref{shortdot}) focuses on matrix-vector (MV) multiplication. Though their main focus is on providing resilience against stragglers, they also mention that handling stragglers is very different than handling errors, as it requires to correct errors over real numbers, and, unlike stragglers, we do not know which workers are corrupt. Similar to our observation, they also note that since the matrices and vectors have entries from real numbers, the decoding problem reduces to the sparse reconstruction problem from the compressive sensing literature \cite{CandesTao05} and they also provide such a reduction.
Apart from these similarities, our solution for MV multiplication differs from that of \cite{ShortDotGrover_Journal19} in several important ways: 
{\sf (i)} \cite{ShortDotGrover_Journal19} provides a detailed solution to the distributed MV multiplication for the straggler problem for the case when the number of rows in the matrix is smaller than the number of workers nodes. As mentioned in \cite{ShortDotGrover_Journal19}, this method can be easily generalized to the more general case when the matrix is of arbitrary dimension, in which case, first we can divide the rows of the matrix into several sub-matrices, each having number of rows smaller than the number of workers, and then apply the above method independently to each sub-matrix. This simple extension may work (without losing efficiency) for the straggler/erasure problem, however, leads to a highly inefficient solution for the adversary/error problem. 
The reason being that, in the presence of Byzantine workers, if we solve the sparse reconstruction problem for each sub-matrix separately, this would be inefficient, as the decoding would then be computationally expensive. 
To remedy this, we exploit the block error pattern and use a simple idea of linearly combining the response vectors from each worker using coefficients drawn from an absolutely continuous distribution, so that we only need to do {\em just one} computation for solving the sparse construction problem. This significantly reduces the decoding complexity; see \Subsectionref{finding-corrupt-nodes} for details.
{\sf (ii)} \cite{ShortDotGrover_Journal19} only shows a connection to the sparse recovery problem, whereas, we provide a complete solution, with a concrete sparse recovery (or real-error correction) matrix and resource (encoding/decoding time, storage, communication) requirement analysis. 
{\sf (iii)} Our encoding matrix (given in \eqref{eq:encoding-matrix-S_i}) to encode data matrices of arbitrary dimensions is very sparse and highly structured which allows us to apply that construction to CD algorithm, which, as far we know, has not been connected with MV multiplication before. Also, ours is the first paper that provides a non-trivial and efficient (data encoding) solution to CD in the presence of a Byzantine adversary.
{\sf (iv)} We also want to mention that the focus in \cite{ShortDotGrover_Journal19} is on making the {\em encoded} matrix sparse (at the expense of increased computation at workers) so that workers need to compute shorter dot products, whereas, in this paper, we make the {\em encoding} matrix sparse (much sparser than the {\em encoded} matrix of \cite{ShortDotGrover_Journal19}) to get efficient encoding/decoding.

\section{Our Solution to Gradient Computation}\label{sec:matrix-vector-mult}
In this section, we describe the core technical part of our two-round approach for gradient computation described in \Subsectionref{approach_GD} -- a method for performing matrix-vector (MV) multiplication in a distributed manner in the presence of a malicious adversary who can corrupt at most $t$ of the $m$ worker nodes. Here, the matrix is fixed and we want to right-multiply a vector with this matrix.

Given a fixed matrix $\bA\in\R^{\nr\times \nc}$ and a vector $\bv\in\R^{\nc}$, we want to compute $\bA\bv$ in a distributed manner in the presence of at most $t$ corrupt worker nodes; see \Subsectionref{adversary-model} for details on our adversary model.
Our method is based on data encoding and error correction over real numbers, 
where the matrix $\bA$ is encoded and distributed among all the worker nodes, 
and the master node recovers the MV product $\bA\bv$ using real-error correction; see \Figureref{problem-setup}.
We will think of our encoding matrix as 
$\bS=[\bS_1^T\ \bS_2^T,\hdots,\bS_m^T]$, where each $\bS_i$ is a $p\times\nr$ matrix and $pm\geq \nr$.
We will derive the matrix $\bS$ in \Subsectionref{encoding-matrix}.
For the value of $p$, looking ahead, we will set $p=\lceil\frac{n}{m-2t}\rceil$, which is a constant multiple of $\frac{n}{m}$ even if $t$ is a constant ($<\frac{1}{2}$) fraction of $m$ (e.g., if $t=\frac{m}{3}$, we would have $p=\frac{3n}{m}$).
For $i\in[m]$, we store the matrix $\bS_i\bA$ at the $i$'th worker node.
As described in \Sectionref{prob-and-result}, the computation proceeds as follows:
The master sends $\bv$ to all the worker nodes and receives $\{\bS_i\bA\bv + \be_i\}_{i=1}^m$ back from them.
Let $\be_i=[e_{i1},e_{i2},\hdots,e_{ip}]^T$ for every $i\in[p]$.
Note that $\be_i=\bzero$ if the $i$'th node is honest, otherwise can be arbitrary. 
In order to find the set of corrupt worker nodes, 
master equivalently writes $\{\bS_i\bA\bv+\be_i\}_{i=1}^m$ as $p$ systems of linear equations.
\begin{equation}\label{eq:subsystems}
\tilde{h}_i(\bv) = \btS_i\bA\bv + \bte_i,\quad i\in[p]
\end{equation}
where, for every $i\in[p]$, $\bte_i=[e_{1i},e_{2i},\hdots,e_{mi}]^T$, and 
$\btS_i$ is an $m\times\nr$ matrix whose $j$'th row is equal to the $i$'th row of $\bS_j$, for every $j\in[m]$. 
Note that at most $t$ entries in each $\bte_i$ are non-zero. 
Observe that $\{\bS_i\bA\bv+\be_i\}_{i=1}^m$ and $\{\btS_i\bA\bv+\bte_i\}_{i=1}^p$ 
are equivalent systems of linear equations, and we can get one from the other.
\begin{figure*}[t]
\centering
\begin{tikzpicture}[>=stealth', font=\sffamily\large\bfseries, thick]
\draw [->] (-0.8,4.0) -- (1.5,4.0); \node at (0.4,4.3) {$\bw$};
\node [rotate=0, scale=0.75] at (2.5,4.8) {M \text{broadcasts} $\bw$};
\draw (1.5,3.5) rectangle (3.5,4.5) node [pos=0.5] {M};
\draw [->] (3.5,4.0) -- (4.25,4.0);
\draw (4.25,3.5) rectangle (5.25,4.5) node [pos=0.5] {Dec};

\draw (0,0) rectangle (0.9,0.9) node [pos=0.5] {$W_1$}; \node at (0.45, -0.4) {$\bS_1^{(1)}\bX$};
\draw [fill=red] (1.8,0) rectangle (2.7,0.9) node [pos=0.5] {$W_2$}; \node at (2.25, -0.4) {$\bS_2^{(1)}\bX$};
\draw (3.6,0) rectangle (4.5,0.9) node [pos=0.5] {$W_3$}; \node at (4.05, -0.4) {$\bS_3^{(1)}\bX$};
\draw [fill=black] (5.2,0.45) circle [radius=0.04];
\draw [fill=black] (5.5,0.45) circle [radius=0.04];
\draw [fill=black] (5.8,0.45) circle [radius=0.04];
\draw [fill=red] (6.5,0) rectangle (7.4,0.9) node [pos=0.5] {$W_m$}; \node at (6.95, -0.4) {$\bS_m^{(1)}\bX$};

\draw [->] (0.5,0.9) -- (1.6,3.5); \node [rotate=65, scale=0.8] at (0.85,2.5) {$\bS_1^{(1)}\bX\bw$};
\draw [->] (2.25,0.9) -- (2.25,3.5); \node [rotate=90, scale=0.8] at (2.0,2.25) {$\bS_2^{(1)}\bX\bw+\be_2$};
\draw [->] (4.05,0.9) -- (2.75,3.5); \node [rotate=-65, scale=0.8] at (3.7,2.25) {$\bS_3^{(1)}\bX\bw$};
\draw [->] (6.95,0.9) -- (3.4,3.5); \node [rotate=-35, scale=0.8] at (5.5,2.4) {$\bS_m^{(1)}\bX\bw+\be_m$};

\draw [->] (5.25,4.0) -- (6.5,4.0); 
\node [scale=0.9] at (5.9,4.3) {$\bX\bw$};
\draw (6.5,3.5) rectangle (8.5,4.5); 
\node [scale=0.9] at (7.5,4.2) {Compute};
\node [scale=0.9] at (7.5,3.8) {$f'(\bw)$};
\draw [->] (8.5,4.0) -- (9.75,4.0); 
\node at (9.1,4.3) {$f'(\bw)$};

\node [rotate=0, scale=0.75] at (10.75,4.8) {M \text{broadcasts} $f'(\bw)$};
\draw (9.75,3.5) rectangle (11.75,4.5) node [pos=0.5] {M};
\draw [->] (11.75,4.0) -- (12.75,4.0);
\draw (12.75,3.5) rectangle (13.75,4.5) node [pos=0.5] {Dec};

\draw (8.25,0) rectangle (9.15,0.9) node [pos=0.5] {$W_1$}; \node at (8.7, -0.4) {$\bS_1^{(2)}\bX^T$};
\draw (10.05,0) rectangle (10.95,0.9) node [pos=0.5] {$W_2$}; \node at (10.5, -0.4) {$\bS_2^{(2)}\bX^T$};
\draw [fill=red] (11.85,0) rectangle (12.75,0.9) node [pos=0.5] {$W_3$}; \node at (12.30, -0.4) {$\bS_3^{(2)}\bX^T$};
\draw [fill=black] (13.45,0.45) circle [radius=0.04];
\draw [fill=black] (13.75,0.45) circle [radius=0.04];
\draw [fill=black] (14.05,0.45) circle [radius=0.04];
\draw [fill=red] (14.75,0) rectangle (15.65,0.9) node [pos=0.5] {$W_m$}; \node at (15.2, -0.4) {$\bS_m^{(2)}\bX^T$};

\draw [->] (8.75,0.9) -- (9.85,3.5); \node [rotate=65, scale=0.8] at (9.1,2.5) {$\bS_1^{(2)}\bX^Tf'(\bw)$};
\draw [->] (10.5,0.9) -- (10.5,3.5); \node [rotate=90, scale=0.8] at (10.25,2.25) {$\bS_2^{(2)}\bX^Tf'(\bw)$};
\draw [->] (12.3,0.9) -- (11.0,3.5); \node [rotate=-63, scale=0.8] at (11.95,2.25) {$\bS_3^{(2)}\bX^Tf'(\bw)+\be_3$};
\draw [->] (15.2,0.9) -- (11.65,3.5); \node [rotate=-35, scale=0.8] at (13.75,2.4) {$\bS_m^{(2)}\bX^Tf'(\bw)+\be_m$};

\draw [->] (13.75,4.0) -- (17.0,4.0); \node [scale=0.9] at (15.4,4.3) {$\nabla f(\bw)=\bX^Tf'(\bw)$};

\draw [->] (17.0,4.0) -- (17.0,6.5); \draw [->] (17.0,6.5) -- (-0.8,6.5); \draw [->] (-0.8,6.5) -- (-0.8,4.0);

\node at (8.25,6.9) {$\bw \longleftarrow {\sf prox}_{h,\alpha}(\bw - \alpha\nabla f(\bw))$};
\end{tikzpicture}
\caption{This figure shows our 2-round approach to the Byzantine-resilient distributed gradient descent to optimize \eqref{eq:glm} for learning a generalized linear model. 
Since the gradient at $\bw$ is equal to $\nabla f(\bw)=\bX^Tf'(\bw)$ (see \eqref{eq:gradient-2-step}), we compute it in 2 rounds, using a matrix-vector (MV) multiplication as a subroutine in each round. In the 1st round, first we compute $\bX\bw$, and then compute $f'(\bw)$ from $\bX\bw$ -- since the $j$'th entry of $\bX\bw$ is equal to $\langle\bx_j,\bw\rangle$, we can compute $f'(\bw)$ from $\bX\bw$ (see \Subsectionref{approach_GD}). In the 2nd round we compute $\bX^Tf'(\bw)$ -- which is equal to $\nabla f(\bw)$ -- using another application of MV multiplication.
For a matrix $\bA$ and a vector $\bv$, to make our distributed MV multiplication $\bA\bv$ Byzantine-resilient, we encode $\bA$ using a sparse matrix $\bS=[\bS_1^T\ \bS_m^T\ \hdots\ \bS_m^T]^T$ and distribute $\bS_i\bA$ to worker $i$ (denoted by $W_i$). 
Note that in the first round, we have $\bA=\bX, \bv=\bw$, and we encode $\bX$ using $\bS^{(1)}$, and in the second round, we have $\bA=\bX^T,\bv=f'(\bw)$, and encode $\bX^T$ using $\bS^{(2)}$.
The adversary can corrupt at most $t$ workers (the compromised ones are denoted in red color), potentially different sets of $t$ workers in different rounds.
The master node (denoted by {\bf M}) broadcasts $\bv$ to all the workers.
Each worker performs the local MV product and sends it back to M.
If $W_i$ is corrupt, then it can send an arbitrary vector. Once the master has received all the vectors (out of which $t$ may be erroneous), it sends them to the decoder (denoted by {\bf Dec}), which outputs the correct MV product $\bA\bv$.}
\label{fig:problem-setup}
\end{figure*}

Note that $\btS_i$'s constitute the encoding matrix $\bS$, which we have to design. In the following, we will design these matrices $\btS_i$'s (which in turn will determine the encoding matrix $\bS$), with the help of another matrix $\bF$, which will be used to find the error locations, i.e., identities of the compromised worker nodes. 
We will design the matrix $\bF$ (of dimension $k\times m$, where $k<m$ -- here $k$ is determined by the error-correction capability, and we will set $k=2t$; see \Subsectionref{F-matrix} for more details) and the matrices $\btS_i$'s such that 
\begin{enumerate}\renewcommand{\labelenumi}{\textbf{C.\theenumi}}
\item \label{cond1} $\bF\btS_i=0$ for every $i\in[p]$.
\item \label{cond2} For any $t$-sparse $\bu\in\R^m$, we can efficiently find all the non-zero locations of $\bu$ from $\bF\bu$.
\item \label{cond3} For any $\T\subset[m]$ such that $|\T|\geq (m-t)$, let $\bS_{\T}$ denote the $|\T|p\times \nr$ matrix obtained from $\bS$ by restricting it to all the $\bS_i$'s for which $i\in\T$. We want $\bS_{\T}$ to be of full column rank. 
\end{enumerate} 
If we can find such matrices, then we can recover the desired MV multiplication $\bA\bv$ exactly: briefly, 
{\bf C.\ref{cond1}} and {\bf C.\ref{cond2}} will allow us to locate the corrupt worker nodes; once we have found them, we can discard all
the information that the master node had received from them. 
This will yield $\bS_{\T}\bA\bv$, where $\bS_{\T}$ is the $|\T|p\times \nr$ matrix obtained from $\bS$ by restricting it to $\bS_i$'s for all $i\in\T$, where $\T$ is the set of all honest worker nodes.
Now, by {\bf C.\ref{cond3}}, since $\bS_{\T}$ is of full column rank, we can recover $\bA\bv$ from $\bS_{\T}\bA\bv$ exactly. Details follow.

Suppose we have matrices $\bF$ and $\btS_i$'s such that {\bf C.\ref{cond1}} holds. 
Now, multiplying \eqref{eq:subsystems} by $\bF$ yields
\begin{align}
\bbf_i := \bF\tilde{h}_i(\bv)=\bF\bte_i,  \label{eq:recover-error}
\end{align}
for every $i\in[p]$, where $\|\bte_i\|_0\leq t$. 
In \Subsectionref{finding-corrupt-nodes}, we give our approach for finding all the corrupt worker nodes with the help of any error locator matrix $\bF$. 
Then, in \Subsectionref{encoding-matrix}, we give a generic construction for designing $\btS_i$'s 
(and, in turn, our encoding matrix $\bS$)  such that {\bf C.\ref{cond1}} and {\bf C.\ref{cond3}} hold.
In \Subsectionref{recovery_matvecprod}, we show how to compute the desired matrix-vector product $\bA\bv$ efficiently, once we have discarded all the data from the corrupt works nodes. Then, in \Subsectionref{F-matrix}, we will give details of the error locator matrix $\bF$ that we use in our construction.

\begin{remark}\label{remark:struc_S-inde-F}
As we will see in \Subsectionref{encoding-matrix}, the structure of our encoding matrix $\bS$ is independent of our error locator matrix $\bF$. Specifically, the repetitive structure of the non-zero entries of $\bS$ as well as their locations will not change irrespective of what the $\bF$ matrix is. This makes our construction very generic, as we can choose whichever $\bF$ suits our needs the best (in terms of how many erroneous indices it can locate and with what decoding complexity), and it won't affect the structure of our encoding matrix at all -- only the non-zero entries might change, neither their repetitive format, nor their locations!
\end{remark}

\subsection{Finding The Corrupt Worker Nodes}\label{subsec:finding-corrupt-nodes}
Observe that $\supp(\bte_i)$ may not be the same for all $i\in[p]$, 
but we know, for sure, that the non-zero locations in all these error vectors occur within the same set of $t$ locations.
Let $\I = \bigcup_{i=1}^p\supp(\bte_i)$, which is the set of all corrupt worker nodes.
Note that $|\I|\leq t$.
We want to find this set $\I$ efficiently, and for that we note the following crucial observation.
Since the non-zero entries of all the error vectors $\bte_i$'s occur in the same set $\I$,
a random linear combination of $\bte_i$'s has support equal to $\I$ with probability one, if the coefficients of the linear combination are chosen from an {\em absolutely continuous} probability distribution.
This idea has appeared before in \cite{MishaliEl08} in the context of compressed sensing for recovering arbitrary sets of jointly sparse signals that have been measured by the same measurement matrix.
\begin{defn}\label{defn:abso-cont}
A probability distribution is called {\em absolutely continuous}, if every event of measure zero occurs with probability zero. 
\end{defn}
It is well-known that a distribution is absolutely continuous if and only if it can be represented as an integral over an integrable density function~\cite[Theorem 31.8, Chapter 6]{Billingsley95}. Since Gaussian and uniform distributions have an explicit integrable density function, both are absolutely continuous. Conversely, discrete distributions are not absolutely continuous.
Now we state a lemma from \cite{MishaliEl08} that shows that a random linear combination of the error vectors 
(where coefficients are chosen from an absolutely continuous distribution) preserves the support with probability one.
\begin{lem}[\hspace{-0.001cm}\cite{MishaliEl08}]\label{lem:random-combination}
Let {\em $\I = \bigcup_{i=1}^p\supp(\bte_i)$}, and
let $\bhe = \sum_{i=1}^p \alpha_i\bte_i$, where $\alpha_i$'s are sampled i.i.d.~from an absolutely continuous distribution. 
Then with probability 1, we have {\em $\supp(\bhe) = \I$}.
\end{lem}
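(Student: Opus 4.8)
The plan is to separate the two inclusions $\supp(\bhe)\subseteq\I$ and $\I\subseteq\supp(\bhe)$. The first is immediate and holds for \emph{every} choice of coefficients: since $\supp(\bte_i)\subseteq\I$ for each $i\in[p]$, any linear combination $\bhe=\sum_{i=1}^p\alpha_i\bte_i$ is supported on $\I$. So the entire probabilistic content of the lemma lies in the reverse inclusion, which I would establish one coordinate at a time and then glue together with a union bound over the (at most $t$, hence finitely many) indices of $\I$.

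Fix a coordinate $j\in\I$. Writing $(\bte_i)_j$ for the $j$-th entry of $\bte_i$, the $j$-th entry of $\bhe$ is the linear functional $\bhe_j=\sum_{i=1}^p\alpha_i(\bte_i)_j=\langle\balpha,\bv^{(j)}\rangle$, where $\balpha=(\alpha_1,\dots,\alpha_p)$ and $\bv^{(j)}:=((\bte_1)_j,\dots,(\bte_p)_j)\in\R^p$. Because $j\in\I=\bigcup_{i=1}^p\supp(\bte_i)$, at least one $i$ has $(\bte_i)_j\neq0$, so $\bv^{(j)}\neq\bzero$. Hence $\{\balpha\in\R^p:\langle\balpha,\bv^{(j)}\rangle=0\}$ is a genuine hyperplane in $\R^p$, which has $p$-dimensional Lebesgue measure zero. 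I would then pass from the one-dimensional hypothesis to the joint law: each $\alpha_i$ is absolutely continuous and hence has an integrable density, and since the $\alpha_i$ are i.i.d., the law of $\balpha$ on $\R^p$ is the product measure, whose density is the product of the individual densities and is again integrable. By the characterization of absolute continuity cited just after \Definitionref{abso-cont}, the joint law of $\balpha$ is therefore absolutely continuous with respect to Lebesgue measure on $\R^p$, so it assigns probability zero to the measure-zero hyperplane above, giving $\Pr[\bhe_j=0]=0$ for each fixed $j\in\I$. A union bound over the indices in $\I$ then yields $\Pr[\exists\,j\in\I:\bhe_j=0]=0$, so with probability one every coordinate of $\bhe$ indexed by $\I$ is non-zero, i.e.\ $\I\subseteq\supp(\bhe)$. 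Combined with the deterministic inclusion $\supp(\bhe)\subseteq\I$, this gives $\supp(\bhe)=\I$ almost surely.

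The one step deserving care -- and the mild obstacle I would anticipate -- is precisely the lift from ``each $\alpha_i$ is absolutely continuous on $\R$'' to ``the vector $\balpha$ is absolutely continuous on $\R^p$'', since \Definitionref{abso-cont} is phrased for a single distribution; the product-density (Fubini) argument is what makes the vanishing event have probability zero under the \emph{joint} law. Everything else is routine: the inner-product reformulation of $\bhe_j$, the observation that a nonzero linear functional vanishes only on a measure-zero hyperplane, and the finite union bound over the coordinates in $\I$.
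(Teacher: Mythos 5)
Your proof is correct, and since the paper states this lemma without proof (importing it directly from \cite{MishaliEl08}), there is no in-paper argument to diverge from; your route is in fact the standard one behind the cited result: the deterministic inclusion $\supp(\bhe)\subseteq\I$, then for each $j\in\I$ the observation that $\bhe_j=\langle\balpha,\bv^{(j)}\rangle$ with $\bv^{(j)}\neq\bzero$ vanishes only on a Lebesgue-null hyperplane in $\R^p$, followed by a finite union bound. You also correctly handle the one point the paper's \Definitionref{abso-cont} glosses over, namely lifting absolute continuity from the marginals to the joint law of $\balpha$ via the product of the integrable densities.
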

From \eqref{eq:recover-error} we have $\bbf_i=\bF\bte_i$ for every $i\in[p]$. 
Take a random linear combination of $\bbf_i$'s with coefficients $\alpha_i$'s chosen i.i.d.~from an absolutely continuous distribution, for example, the Gaussian distribution. Let $\btf = \alpha_i\left(\sum_{i=1}^p\bbf_i\right) = \alpha_i\left(\sum_{i=1}^p\bF\bte_i\right) = \bF\left(\sum_{i=1}^p\alpha_i\bte_i\right) = \bF\bte$, where $\bte = \sum_{i=1}^p\alpha_i\bte_i$.
Note that, with probability 1, $\supp(\bte)$ is equal to the set of all corrupt worker nodes, 
and we want to find this set efficiently. 
In other words, given $\bF\bte$, we want to find $\supp(\bte)$ efficiently.
For this, we need to design a $k\times m$ matrix $\bF$ (where $k<m$) such that for any sparse error vector $\be\in\R^m$, we can efficiently find $\supp(\be)$ from $\bbf=\bF\be$. 
Many such matrices have been known in the literature that can handle different levels of sparsity with varying decoding complexity. We can choose any of these matrices depending on our need, and this will not affect the design of our encoding matrix $\bS$. 
In particular, we will use a $k\times m$ Vandermonde matrix along with the Reed-Solomon type decoding,
which can correct up to $k/2$ errors and has decoding complexity of $O(m^2)$; 
see \Subsectionref{F-matrix} for details.

\paragraph{Time required in finding the corrupt worker nodes.}
The time taken in finding the corrupt worker nodes is equal to the sum of the time taken in the following 3 tasks.
{\sf (i)} Computing $\bF\bte_i$ for every $i\in[p]$: Note that we can get $\bF\bte_i$ by multiplying \eqref{eq:subsystems} with $\bF$. Since  $\bF$ is a $k\times m$ matrix, and we compute $\bF\tilde{h}_i(\bv)$ for $p$ systems, this requires $O(pkm)$ time.
{\sf (ii)} Taking a random linear combination of $p$ vectors each of length $m$, which takes $O(pm)$ time.
{\sf (iii)} Applying \Lemmaref{RS-sparse-decoding} (in \Subsectionref{F-matrix}) once to find the error locations, which takes $O(m^2)$ time.
Since $p$ is much bigger than $m$, the total time complexity is $O(pkm)$.

\subsection{Designing The Encoding Matrix $\bS$}\label{subsec:encoding-matrix}
Now we give a generic construction for designing $\btS_i$'s such that {\bf C.\ref{cond1}} and {\bf C.\ref{cond3}} hold. 
Fix any $k\times m$ matrix $\bF$ such that we can efficiently find $\be$ from $\bF\be$, provided $\be$ is sufficiently sparse.
We can assume, without loss of generality, that $\bF$ has full row-rank; otherwise, there will be redundant observations in $\bF\be$ that we can discard and make $\bF$ smaller by discarding the redundant rows. 
Let $\N(\bF)\subset\R^m$ denote the null-space of $\bF$.
Since $\textsf{rank}(\bF)=k$, dimension of $\N(\bF)$ is $q=(m-k)$.
Let $\{\bb_1,\bb_2,\hdots,\bb_q\}$ be a basis of $\N(\bF)$, and let $\bb_i=[b_{i1}\ b_{i2}\hdots b_{im}]^T$, for every $i\in[q]$. 
We set $\bb_i$'s the columns of the following matrix $\bF^{\perp}$:
\begin{align}\label{eq:F-perp}
\bF^{\perp} = 
\begin{bmatrix}
b_{11} & b_{21}  & \hdots & b_{q1} \\
b_{12} & b_{22}  & \hdots & b_{q2} \\
\vdots & \vdots & \vdots & \vdots \\
b_{1m} & b_{2m}  & \hdots & b_{qm}
\end{bmatrix}_{m\times q}
\end{align}
The following property of $\bF^{\perp}$ will be used for recovering the MV product in \Subsectionref{recovery_matvecprod}.
\begin{claim}\label{claim:rest-F-perp_full-rank}
For any subset $\T\subset[m]$, such that $|\T|\geq (m-t)$, let $\bF_{\T}^{\perp}$ be the $|\T|\times q$ matrix, which is equal to the restriction of $\bF^{\perp}$ to the rows in $\T$. Then $\bF_{\T}^{\perp}$ is of full column rank.
\end{claim}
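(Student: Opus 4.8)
The plan is to show that $\bF_{\T}^{\perp}$ has trivial kernel; since it is a $|\T|\times q$ matrix, this is exactly the statement that it has full column rank $q$. First I would set up the reduction. Suppose $\bx\in\R^q$ satisfies $\bF_{\T}^{\perp}\bx=\bzero$, and set $\bu:=\bF^{\perp}\bx\in\R^m$. Because the columns of $\bF^{\perp}$ are the basis vectors $\bb_1,\hdots,\bb_q$ of $\N(\bF)$, the vector $\bu$ lies in $\N(\bF)$, i.e.\ $\bF\bu=\bzero$. Reading off coordinates, $\bF_{\T}^{\perp}\bx=\bzero$ says precisely that every coordinate of $\bu$ indexed by $\T$ vanishes; hence $\supp(\bu)\subseteq[m]\setminus\T$, and since $|\T|\geq m-t$ we get $|\supp(\bu)|\leq|[m]\setminus\T|\leq t$. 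Thus $\bu$ is a $t$-sparse vector lying in the null space of $\bF$.

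The crux is then to argue that the only $t$-sparse vector in $\N(\bF)$ is $\bzero$. This is exactly where the error-location property \textbf{C.\ref{cond2}} enters: if there were a nonzero $t$-sparse $\bu$ with $\bF\bu=\bzero$, then $\bu$ and $\bzero$ would be two distinct $t$-sparse vectors with identical image $\bF\bu=\bzero=\bF\bzero$, so no procedure could recover their differing supports from the common value, contradicting \textbf{C.\ref{cond2}}. Equivalently, for the concrete $k\times m$ Vandermonde $\bF$ of \Subsectionref{F-matrix} with $k=2t$, any $k$ of its columns are linearly independent, so every nonzero vector of $\N(\bF)$ has weight at least $2t+1>t$. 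Either way, $\bu=\bzero$.

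Finally I would close the loop: since $\bF^{\perp}$ has linearly independent columns (they form a basis of $\N(\bF)$), the equation $\bu=\bF^{\perp}\bx=\bzero$ forces $\bx=\bzero$. Hence the kernel of $\bF_{\T}^{\perp}$ is trivial and $\bF_{\T}^{\perp}$ has full column rank $q$, as claimed. The only delicate step is the middle one, translating the error-locating guarantee into the statement that $\N(\bF)$ carries no nonzero $t$-sparse vector; it is worth emphasizing that we only need the $t$-sparse version here (rather than the $2t$-sparse version that actual decoding of a $t$-sparse error would require), and this is precisely what makes the hypothesis $|\T|\geq m-t$ exactly sufficient, since then $|[m]\setminus\T|\leq t$.
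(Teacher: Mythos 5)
Your proof is correct, and its skeleton is the same as the paper's: take a kernel vector of $\bF_{\T}^{\perp}$, push it through $\bF^{\perp}$ to obtain a sparse vector in $\N(\bF)$, and close using the linear independence of the columns of $\bF^{\perp}$. Where you genuinely diverge is the quantitative hinge and the fact invoked there. The paper proves a \emph{stronger} intermediate statement -- that any $q=m-2t$ rows of $\bF^{\perp}$ are linearly independent -- by passing to a square $q\times q$ submatrix; its null vector $\bb$ then satisfies only $\|\bb\|_0\leq m-q=k=2t$, so the paper must invoke that any $k=2t$ columns of $\bF$ are linearly independent (a property of the Vandermonde matrix of \Subsectionref{F-matrix}, implicitly assumed in the generic setup). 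You instead work directly with $|\T|\geq m-t$, so your null vector $\bu$ is only $t$-sparse, and you need merely that $\N(\bF)$ contains no nonzero $t$-sparse vector -- which you correctly extract from the support-recovery condition {\bf C.\ref{cond2}} by a uniqueness argument ($\bu$ and $\bzero$ would be indistinguishable given $\bF\bu=\bzero=\bF\bzero$), with the concrete Vandermonde fact as an alternative. What each route buys: yours stays entirely within the abstract interface {\bf C.\ref{cond1}}--{\bf C.\ref{cond3}} and needs only spark greater than $t$ rather than $2t$, hence is marginally more general; the paper's yields the stronger ``any $q$ rows'' fact, which, as it happens, is never used elsewhere -- both \Claimref{S_T-full-rank} and the recovery in \Subsectionref{recovery_matvecprod} only ever apply the claim with $|\T|\geq m-t$, exactly the form you prove.
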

\begin{proof}
Note that $q=m-k$, where $k=2t$. So, if we show that any $q$ rows of $\bF^{\perp}$ are linearly independent, 
then, this in turn will imply that for every $\T\subset[m]$ with $|\T|\geq (m-t)$, the sub-matrix $\bF_{\T}^{\perp}$ will have full column rank. In the following we show that any $q$ rows of $\bF^{\perp}$ are linearly independent. To the contrary, suppose not; and let $\T'\subset[m]$ with $|\T'|=q$ be such that the $q\times q$ matrix $\bF_{\T'}^{\perp}$ is not a full rank matrix.
This implies that there exists a non-zero $\bc'\in\R^{q}$ such that $\bF_{\T'}^{\perp}\bc'=\bzero$. 
Let $\bb=\bF^{\perp}\bc'$. Note that $\bb\neq\bzero$ (because columns of $\bF^{\perp}$ are linearly independent)
and also that $\|\bb\|_0\leq m-q = k$. 
Now, since $\bF\bF^{\perp}=\bzero$, we have $\bF\bb=\bzero$, which contradicts the fact that any $k$ columns of $\bF$ are linearly independent.
\end{proof}
Now we design $\btS_i$'s. 
For $i\in[p]$, we set $\btS_i$ as follows:
\[
\btS_i=
\begin{bmatrix}
0 & \hdots & 0 & b_{11} & b_{21}  & \hdots & b_{l1} & 0 & \hdots & 0 \\
0 & \hdots & 0 & b_{12} & b_{22}  & \hdots & b_{l2} & 0 & \hdots & 0 \\
\vdots & \vdots & \vdots & \vdots & \vdots & \vdots & \vdots & \vdots & \vdots & \vdots \\
0 & \hdots & 0 & b_{1m} & b_{2m}  & \hdots & b_{lm} & 0 & \hdots & 0
\end{bmatrix}
\]
where $l=q$ if $i<p$; otherwise $l=\nr-(p-1)q$.
The first $(i-1)q$ and the last $\nr-[(i-1)q+l]$ columns of $\btS_i$ are zero. 
This also implies that the number of rows in each $\bS_i$ is $p=\lceil\nr/q\rceil$.

\begin{claim}\label{claim:F-ortho-S}
For every $i\in[p]$, we have $\bF\btS_i=0$.
\end{claim}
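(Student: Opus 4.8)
The plan is to establish $\bF\btS_i = 0$ one column at a time, using the elementary fact that the $s$-th column of the product $\bF\btS_i$ is $\bF$ applied to the $s$-th column of $\btS_i$. Thus it suffices to verify that $\bF$ annihilates each of the $\nr$ columns of $\btS_i$, and then the whole product vanishes.

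First I would read off the columns of $\btS_i$ directly from its displayed definition. By construction $\btS_i$ has two kinds of columns: the first $(i-1)q$ columns and the last $\nr-[(i-1)q+l]$ columns are identically zero, while the $l$ columns in positions $(i-1)q+1,\ldots,(i-1)q+l$ are, reading down their rows, exactly $\bb_1,\bb_2,\ldots,\bb_l$, the first $l$ basis vectors of $\N(\bF)$ that were used to build $\bF^{\perp}$. Concretely, the entry in row $j$ of the $s$-th nonzero column is $b_{sj}$, which is precisely the $j$-th coordinate of $\bb_s$, so that column equals $\bb_s$.

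The key step is then immediate: $\bF\bzero=\bzero$ disposes of the zero columns, and $\bF\bb_s=\bzero$ for every $s\in[l]$ because each $\bb_s$ lies in $\N(\bF)$ by the very definition of a null-space basis. Hence every column of $\bF\btS_i$ is the zero vector, i.e.\ $\bF\btS_i=0$; moreover this holds uniformly for all $i\in[p]$ since the nonzero columns are always an initial segment $\bb_1,\ldots,\bb_l$ of the same fixed basis, with only the block position shifting with $i$. There is no genuine obstacle here; the single point requiring a moment's care is the transcription identifying the $s$-th nonzero column of $\btS_i$ with $\bb_s$ (rather than with a row of $\bF^{\perp}$), after which the claim follows at once from the defining property of the null space.
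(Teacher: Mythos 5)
Your proof is correct and matches the paper's argument essentially verbatim: the paper likewise observes that every column of $\btS_i$ is either $\bzero$ or some basis vector $\bb_j$ of $\N(\bF)$, so $\bF\btS_i=0$ follows from $\bF\bb_j=\bzero$. Your column-by-column transcription, including the identification of the $s$-th nonzero column with $\bb_s$, is accurate; no gaps.
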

\begin{proof}
By construction, the null-space of $\bF$ is $\N(\bF)=\textsf{span}\{\bb_1,\bb_2,\hdots,\bb_q\}$, 
which implies that $\bF\bb_i=\bzero$, for every $i\in[q]$.
Since all the columns of $\btS_i$'s are either $\bzero$ or $\bb_j$ for some $j\in[q]$, the claim follows.
\end{proof}
The above constructed matrices $\btS_i$'s give the following encoding matrix $\bS_i$ for the $i$'th worker node:
\begin{align}\label{eq:encoding-matrix-S_i}
\bS_i=
\begin{bmatrix}
b_{1i} \hdots b_{qi} & & & \\ 
& \ddots & & \\
& & b_{1i} \hdots b_{qi} & \\
& & & b_{1i} \hdots b_{li}
\end{bmatrix}_{p\times \nr}
\end{align}
All the unspecified entries of $\bS_i$ are zero. The matrix $\bS_i$ is for encoding the data for worker $i$.
By stacking up the $\bS_i$'s on top of each other gives us our desired encoding matrix $\bS$.

To get efficient encoding, we want $\bS$ to be as sparse as possible. \label{encoding-rref}
Since $\bS$ is completely determined by $\bF^{\perp}$, whose columns are the basis vectors of $\N(\bF)$, 
it suffices to find a sparse basis for $\N(\bF)$. It is known that finding the sparsest basis for the null-space of a matrix is NP-hard \cite{ColemanPo86}. 
Note that we can always find the basis vectors of $\N(\bF)$ by reducing $\bF$ to its row-reduced-echelon-form (RREF) 
using the Gaussian elimination \cite{HoffmanKu71_LAbook}. 
This will result in $\bF^{\perp}$ whose last $q$ rows forms a $q\times q$ identity matrix.
Note that  $q=m-k$, where $k=2t$. So, if the corruption threshold $t$ is very small as compared to $m$, 
the $\bF^{\perp}$ that we obtain by the RREF will be very sparse -- only the first $2t$ rows may be dense.
Since computing $\bS$ is equivalent to computing $\bF^{\perp}$, and we can compute $\bF^{\perp}$ in $O(k^2m)$ time
using the Gaussian elimination, the time complexity of computing $\bS$ is also $O(k^2m)$.

Now we prove an important property of the encoding matrix $\bS$ that will be crucial for recovery of the desired matrix-vector product.
\begin{claim}\label{claim:S_T-full-rank}
For any $\T\subset[m]$ such that $|\T|\geq (m-t)$, let $\bS_{\T}$ denote the $|\T|p\times \nr$ matrix obtained from $\bS$ by restricting it to all the blocks $\bS_i$'s for which $i\in\T$.
Then $\bS_{\T}$ is of full column rank.
\end{claim}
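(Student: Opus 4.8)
The plan is to prove full column rank by showing that $\bS_{\T}$ has trivial kernel, i.e.\ that $\bS_{\T}\bx=\bzero$ forces $\bx=\bzero$ for every $\bx\in\R^{\nr}$ (since $\bS_{\T}$ has $\nr$ columns, this is equivalent to the stated rank condition). The entire argument will hinge on the banded placement of the non-zero entries dictated by \eqref{eq:encoding-matrix-S_i}: this structure makes the homogeneous system $\bS_{\T}\bx=\bzero$ decouple into $p$ independent subsystems, each governed by the row-restriction $\bF_{\T}^{\perp}$, so that \Claimref{rest-F-perp_full-rank} can be invoked directly.

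Concretely, I would first partition the coordinates of $\bx$ into $p$ consecutive blocks matching the column-support pattern of each $\bS_i$: set $\bx^{(k)}:=(x_{(k-1)q+1},\hdots,x_{kq})\in\R^{q}$ for $k\in[p-1]$, and $\bx^{(p)}:=(x_{(p-1)q+1},\hdots,x_{\nr})\in\R^{l}$, where $l=\nr-(p-1)q$. Reading off \eqref{eq:encoding-matrix-S_i}, for $k<p$ the $k$'th entry of $\bS_i\bx$ equals $\sum_{j=1}^{q}b_{ji}x_{(k-1)q+j}$, which is exactly the inner product of the $i$'th row of $\bF^{\perp}$ (namely $(b_{1i},\hdots,b_{qi})$, by \eqref{eq:F-perp}) with $\bx^{(k)}$; and the last entry ($k=p$) is the inner product of the first $l$ coordinates of that same row with $\bx^{(p)}$. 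In other words $(\bS_i\bx)_k=(\bF^{\perp}\bx^{(k)})_i$ for $k<p$, and $(\bS_i\bx)_p=(\bF_{[l]}^{\perp}\bx^{(p)})_i$, where $\bF_{[l]}^{\perp}$ denotes the submatrix formed by the first $l$ columns of $\bF^{\perp}$.

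Next I would assemble these identities across $i\in\T$. Since $\bS_{\T}\bx=\bzero$ is the same as $\bS_i\bx=\bzero$ for all $i\in\T$, fixing a block index $k<p$ and collecting the equations over $i\in\T$ yields $\bF_{\T}^{\perp}\bx^{(k)}=\bzero$. Because $|\T|\geq m-t$, \Claimref{rest-F-perp_full-rank} guarantees that $\bF_{\T}^{\perp}$ has full column rank $q$, so $\bx^{(k)}=\bzero$. For the final block $k=p$, the same collection gives $(\bF_{[l]}^{\perp})_{\T}\bx^{(p)}=\bzero$; since $(\bF_{[l]}^{\perp})_{\T}$ consists of $l\leq q$ columns of the full-column-rank matrix $\bF_{\T}^{\perp}$, its columns are still linearly independent, forcing $\bx^{(p)}=\bzero$. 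Concatenating the blocks gives $\bx=\bzero$, which is what we want.

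I expect the only delicate bookkeeping to concern the ragged last block of length $l<q$: one must confirm that truncating to the first $l$ columns of $\bF_{\T}^{\perp}$ preserves independence, which is immediate because a subset of linearly independent columns is linearly independent. The conceptual core is recognizing the decoupling into $p$ copies of the system already handled by \Claimref{rest-F-perp_full-rank}; once that reduction is in place, everything else is routine, and no properties of $\bF$ beyond \Claimref{rest-F-perp_full-rank} are needed — consistent with \Remarkref{struc_S-inde-F}.
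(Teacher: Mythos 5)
Your proof is correct and follows essentially the same route as the paper: both exploit the banded structure of \eqref{eq:encoding-matrix-S_i} to decouple $\bS_{\T}$ into $p$ column blocks, each of which reduces to the full-column-rank matrix $\bF_{\T}^{\perp}$ of \Claimref{rest-F-perp_full-rank}. The only (cosmetic) differences are that you phrase the argument via a trivial kernel rather than directly via linear independence of the columns, and that you handle the ragged last block of width $l<q$ explicitly, a point the paper's proof passes over silently.
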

\begin{proof}
For $i\in[p-1]$, let $\B_i=[(i-1)q+1 : iq]$ and $\B_p=[(p-1)q+1: \nr-(p-1)q]$, 
where we see $\B_i$'s as a collection of some column indices.
Consider any two distinct $i,j\in[p]$. It is clear that for any two vectors $\bu_1\in\B_i,\bu_2\in\B_j$, we have 
$\supp(\bu_1)\cap\supp(\bu_2)=\phi$, which means that all the columns in distinct $\B_i$'s are linearly independent.
So, to prove the claim, we only need to show that the columns within the same $\B_i$'s are linearly independent.
Fix any $i\in[p]$, and consider the $|\T|p\times q$ sub-matrix $\bS_{\T}^{(i)}$ of $\bS_{\T}$,
which is obtained by restricting $\bS_{\T}$ to the columns in $\B_i$. 
There are precisely $|\T|$ non-zero rows in $\bS_{\T}^{(i)}$, which are equal to the rows of the matrix $\bF_{\T}^{\perp}$ defined in \Claimref{rest-F-perp_full-rank}. We have already shown in the proof of \Claimref{rest-F-perp_full-rank} that $\bF_{\T}^{\perp}$ is of full column rank. Therefore, $\bS_{\T}^{(i)}$ is also of full column rank. This concludes the proof of \Claimref{S_T-full-rank}.
\end{proof}
Since $\bS_{\T}$ is of full column rank, in principle, we can recover any vector $\bu\in\R^{\nr}$ from $\bS_{\T}\bu$. 
In the next section, we show an efficient way for this recovery.

\subsection{Recovering The Matrix-Vector Product $\bA\bv$}\label{subsec:recovery_matvecprod}
Once the master has found the set $\I$ of corrupt worker nodes, it discards all the data received from them.
Let $\T=[m]\setminus\I=\{i_1,i_2,\hdots,i_f\}$ be the set of all honest worker nodes, where $f=(m-|\I|)\geq(m-t)$.
Let $\br=[\br_1^T \br_2^T \hdots \br_m^T]$, 
where $\br_i=\bS_i\bA\bv+\be_i$. 
All the $\br_i$'s from the honest worker nodes can be written as
\begin{equation}\label{eq:error-free-system}
\br_{\T}=\bS_{\T}\bA\bv, 
\end{equation}
where $\bS_{\T}$ is as defined in \Claimref{S_T-full-rank}, and $\br_{\T}$ is also defined analogously and equal to the restriction of $\br$ to all the $\br_i$'s for which $i\in\T$. 
Since $\bS_{\T}$ has full column rank (by \Claimref{S_T-full-rank}), in principle, we can recover $\bA\bv$ from \eqref{eq:error-free-system}. 
Next we show how to recover $\bA\bv$ efficiently, by exploiting the structure of $\bS$.

Let $\btr_j=[r_{i_1j},r_{i_2j},\hdots,r_{i_fj}]^T$, for every $j\in[p]$. 
The repetitive structure of $\bS_i$'s (see \eqref{eq:encoding-matrix-S_i}) allows us to write \eqref{eq:error-free-system} equivalently in terms of $p$ smaller systems.
\begin{align}
\btr_j &= \bF_j(\bA\bv)_{\B_j},\quad \text{for }j\in[p], \label{eq:matvecprod-1}
\end{align}
where, for $j\in[p-1]$, $\B_i=[(i-1)q+1 : iq]$ and $\bF_j = \bF_{\T}^{\perp}$, and $\B_p=[(p-1)q+1: \nr-(p-1)q]$ and $\bF_p$ is equal to the restriction of $\bF_{\T}^{\perp}$ to its first $(\nr-(p-1)q)$ columns.
Since $\bF_{\T}^{\perp}$ has full column rank (by \Claimref{rest-F-perp_full-rank}), we can compute $(\bA\bv)_{\B_i}$ for all $i\in[p]$, by multiplying \eqref{eq:matvecprod-1} by $\bF_j^+=(\bF_j^T\bF_j)^{-1}\bF_j^T$, which it called the Moore-Penrose inverse of $\bF_j$.
Since $\bA\bv=[(\bA\bv)_{\B_1}^T,(\bA\bv)_{\B_2}^T,\hdots,(\bA\bv)_{\B_p}^T)]^T$,
we can recover the desired MV product $\bA\bv$.

\paragraph{Time Complexity analysis.}
The task of obtaining $\bA\bv$ from $\bS_{\T}\bA\bv$ reduces to 
{\sf (i)} computing $\bF_j^+=(\bF_{\T}^{\perp})^+$ once, which takes $O(q^2|\T|)$ time na\"ively; 
{\sf (ii)} computing $\bF_p^+$ once, which takes at most $O(q^2|\T|)$ time na\"ively; and 
{\sf (iii)} computing the MV products $\bF_j^+\btr_j$ for every $j\in[p]$, which takes $O(pq|\T|)$ time in total.
Since $p$ is much bigger than $q$, the total time taken in recovering $\bA\bv$ from $\bS_{\T}\bA\bv$ is $O(pq|\T|)=O(pm^2)$.

\subsection{Designing The Error Locator Matrix $\bF$}\label{subsec:F-matrix}
In this section, we design a $k\times m$ matrix $\bF$ (where $k<m$) such that for any {\em sparse} error vector $\be\in\R^m$, we can uniquely and efficiently recover $\be$ (and, therefore, $\supp(\be)$) from the under-determined system of linear equations $\bbf=\bF\be\in\R^k$. 
This is related to the {\em sparse representation problem}, where one would like to 
find the sparsest representation of $\bbf$ in terms of the linear combination of the columns of $\bF$, i.e., minimizing $\|\be\|_0$ subject to the constraint that $\bbf=\bF\be$. This problem is of combinatorial nature and is known to be NP-hard \cite{CandesTao05}.
To make this problem computationally tractable, Candes and Tao \cite{CandesTao05} showed that if $\bF$ satisfies a certain regularity condition (which they named the {\em restricted isometry property} (RIP)), then the sparsest reconstruction problem can be reduced to minimizing $\|\be\|_1:=\sum_{i=1}^m |e_i|$ subject to the constraint that $\bbf=\bF\be$, which can be efficiently solved using a linear program. They also showed that a random Gaussian matrix satisfies the RIP condition.
A common problem with such random constructions is that they may not work with small block-lengths (in our setting, $m$ is the number of workers which may not be a big number), and can only correct a constant fraction of errors, where the constant is very small.
We need a deterministic construction that can handle a constant fraction (ideally up to 1/2) of errors and that works with small block-lengths.

Ak{\c{c}}akaya and Tarokh \cite{AkcakayaTa08} proposed an efficient solution to the sparse representation problem using {\em Vandermonde} matrices. To construct them, take $m$ distinct non-zero elements $z_1,z_2,\hdots,z_m$ from $\R$, and consider the following $k\times m$ Vandermonde matrix $\bF$.
\begin{align}\label{eq:vandermonde}
\bF = 
\begin{bmatrix}
1 & 1 & 1 & \hdots & 1 \\
z_1 & z_2 & z_3 & \hdots & z_m \\
z_1^2 & z_2^2 & z_3^2 & \hdots & z_m^2 \\
\vdots & \vdots & \vdots & \ddots & \vdots \\
z_1^{k-1} & z_2^{k-1} & z_3^{k-1} & \hdots & z_m^{k-1} \\
\end{bmatrix}_{k\times m}
\end{align}
For the above $\bF$, it was shown in \cite{AkcakayaTa08} that, if $|\supp(\be)|\leq k/2$, then the Reed-Solomon type decoding can be used for exact reconstruction of $\be$ 
from $\bbf=\bF\be$.\footnote{Note that, since any $k$ columns of $\bF$ (which is the Vandermonde matrix) are linearly independent, if there exists a vector $\be$ such that $|\supp(\be)|\leq k/2$ and $\be$ satisfies $\bbf=\bF\be$ for a fixed $\bbf$, then $\be$ is unique.} Furthermore, their decoding algorithm is efficient and runs in $O(m^2)$ time. 
The results in \cite{AkcakayaTa08} are given for complex vector spaces, and they hold over real numbers also. Below we state the sparse recovery result (specialized to reals) from \cite{AkcakayaTa08}.

\begin{lem}[\hspace{-0.001cm}\cite{AkcakayaTa08}]\label{lem:RS-sparse-decoding}
Let $\bF$ be the $k\times m$ matrix as defined in \eqref{eq:vandermonde}. Let $\be\in\R^m$ be an arbitrary vector with $|\supp(\be)|\leq k/2$. We can exactly recover the vector $\be$ from $\bbf = \bF\be$ in $O(m^2)$ time.
\end{lem}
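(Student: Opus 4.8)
The plan is to adapt the classical syndrome-based decoding of Reed--Solomon (BCH) codes to the field of real numbers. Writing $\bbf=(f_0,f_1,\hdots,f_{k-1})^T$, the defining relation $\bbf=\bF\be$ with $\bF$ as in \eqref{eq:vandermonde} says precisely that each coordinate is a weighted power sum
\[
f_j=\sum_{i=1}^m z_i^{\,j}\,e_i,\qquad j=0,1,\hdots,k-1,
\]
so the $f_j$ play the role of \emph{syndromes}. Set $\tau=k/2$, let $S=\supp(\be)$ with $\nu:=|S|\le\tau$, and call $\{z_i:i\in S\}$ the error locators (the $z_i$ are distinct and nonzero, so the reciprocals $z_i^{-1}$ are well defined). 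First I would introduce the \emph{error-locator polynomial} $\Lambda(x)=\prod_{i\in S}(1-z_i x)=1+\lambda_1 x+\cdots+\lambda_\nu x^\nu$, whose roots are exactly the $z_i^{-1}$, $i\in S$. The whole decoding then reduces to recovering $\Lambda$ from the syndromes, reading off $S$, and solving for the $e_i$.

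The key step is the linear recurrence satisfied by the syndromes. Since $\Lambda(z_i^{-1})=0$ for $i\in S$, multiplying the identity $z_i^{\nu}+\lambda_1 z_i^{\nu-1}+\cdots+\lambda_\nu=0$ by $e_i z_i^{\,\ell}$ and summing over $i\in S$ gives
\[
f_{\ell+\nu}+\lambda_1 f_{\ell+\nu-1}+\cdots+\lambda_\nu f_\ell=0,\qquad \ell=0,1,\hdots,\nu-1,
\]
where every index stays in $\{0,\hdots,2\nu-1\}\subseteq\{0,\hdots,k-1\}$ because $\nu\le\tau=k/2$. These $\nu$ equations form a Hankel linear system in the unknowns $\lambda_1,\hdots,\lambda_\nu$. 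Rather than solving it by generic Gaussian elimination (which would cost $O(\tau^3)=O(m^3)$), I would run the Berlekamp--Massey algorithm on the sequence $f_0,\hdots,f_{k-1}$ to find the shortest linear recurrence generating it; this returns the minimal-degree $\Lambda$ in $O(\tau^2)=O(m^2)$ time and automatically handles the case $\nu<\tau$ without any separate degree guessing.

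With $\Lambda$ in hand, I would locate the errors by a real-number analogue of the Chien search: evaluate $\Lambda$ at each candidate $z_i^{-1}$, $i\in[m]$, by Horner's rule in $O(\tau)$ time apiece, and declare $i\in S$ exactly when $\Lambda(z_i^{-1})=0$. Distinctness of the $z_i$ guarantees there are no false positives, so this $O(m\tau)=O(m^2)$ sweep recovers $S$ exactly. Finally, knowing $S$, the values $\{e_i:i\in S\}$ satisfy the square system $f_j=\sum_{i\in S}z_i^{\,j}e_i$ restricted to any $\nu$ of the syndromes, whose coefficient matrix is Vandermonde in the distinct $z_i$ and hence invertible; solving it costs $O(\nu^2)=O(m^2)$. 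Uniqueness of the recovered $\be$ needs no extra work: if two vectors of support at most $k/2$ shared the same image, their difference would be a nonzero kernel vector of support at most $k$, contradicting the linear independence of any $k$ columns of $\bF$, as already noted in the footnote to the statement.

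The main obstacle is securing correctness and the $O(m^2)$ bound simultaneously. Correctness hinges on the recurrence above \emph{uniquely} determining $\Lambda$, and the delicate point is the unknown true error count $\nu$; this is exactly what the Berlekamp--Massey minimal-LFSR guarantee resolves, so no rank or degree is guessed by hand. For the running time, the only place a naive approach would inflate to $O(m^3)$ is the Hankel solve for $\Lambda$, so the crux is to invoke a structured (Berlekamp--Massey / Levinson-type) solver there; every other step is already $O(m^2)$. A secondary point worth verifying is that passing from a finite field to $\R$ introduces no difficulty: root location is not general real-root extraction but merely the evaluation test over the $m$ prescribed candidates $z_i^{-1}$, and under the exact real-arithmetic model assumed in the paper the standard field-theoretic manipulations carry over verbatim.
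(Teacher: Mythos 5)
The paper itself contains no proof of this lemma---it is imported verbatim from \cite{AkcakayaTa08}, and the decoder of that reference is precisely the Reed--Solomon-type syndrome decoding you reconstruct (error-locator polynomial from the syndrome recurrence, Berlekamp--Massey in $O(m^2)$, root test over the $m$ candidates $z_i^{-1}$, Vandermonde solve for the error values, uniqueness from any $k$ columns of $\bF$ being independent), so your proposal is correct and follows essentially the same route as the paper's source. The one step you assert without support is the $O(\nu^2)$ solution of the final Vandermonde system for the error values---naive elimination would be $O(\nu^3)$, contrary to your remark that the Hankel solve is the only cubic risk---but this is standard via the Bj\"orck--Pereyra algorithm or Forney's formula, so there is no genuine gap.
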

Note that $\bF$ is a $k\times m$ matrix, where $k<m$. Choosing $k$ is in our hands, and larger the $k$, more the number of errors we can correct (but at the expense of increased storage and computation); see \Subsectionref{resource_GD} for more details.

\subsection{Resource Requirement Analysis}\label{subsec:resource_GD}
In this section, we analyze the total amount of resources (storage, computation, and communication) required by our method for computing gradients in the presence of $t$ (out of $m$) adversarial worker nodes 
and prove \Theoremref{main-result}.
Fix an $\epsilon>0$. Let the corruption threshold $t$ satisfy $t\leq \lfloor(\epsilon/(1+\epsilon))\cdot(m/2)\rfloor$. 

As described earlier in \Subsectionref{approach_GD}, 
we compute the gradient $\nabla f(\bw)=\bX^Tf'(\bw)$ in two-rounds; and in each round we use the Byzantine-tolerant MV multiplication, which we have developed in \Sectionref{matrix-vector-mult}, as a subroutine; 
see \Figureref{problem-setup} for a pictorial representation of our scheme.
We encode $\bX$ to compute $f'(\bw)$ in the 1st round: first compute $\bX\bw$ using MV multiplication and then locally compute $f'(\bw)$. 
To compute $\bX^Tf'(\bw)$ (which is equal to the gradient) in the 2nd round, we encode $\bX^T$ and compute
$\bX^Tf'(\bw)$. Let $\bS^{(1)}$ and $\bS^{(2)}$ be the encoding matrices of dimensions $p_1m\times n$ and $p_2m\times d$, respectively, to encode $\bX$ and $\bX^T$, respectively.
Here, $p_1=\lceil n/q\rceil$ and $p_2=\lceil d/q\rceil$, where $q=m-k$. 
Since $k=2t$ (by \Lemmaref{RS-sparse-decoding}), we have $q=(m-k)\geq m/(1+\epsilon)$. 

\subsubsection{Storage Requirement}
Each worker node $i$ stores two matrices $\bS_i^{(1)}\bX$ and $\bS_i^{(2)}\bX^T$.
The first one is a $p_1\times(d+1)$ matrix, and the second one is a $p_2\times n$ matrix.
So, the total amount of storage at all worker nodes is equal to storing $(p_1(d+1)+p_2n)\times m$ real numbers. 
Since $p_1 \leq \lceil(1+\epsilon)\frac{n}{m}\rceil$ and $p_2 \leq \lceil(1+\epsilon)\frac{d}{m}\rceil$, 
the total storage is 
\begin{align*}
\big(p_1(d+1)&+p_2n\big)m = p_1m(d+1) + p_2mn \\
&< [(1+\epsilon)n + m](d+1) + [(1+\epsilon)d +m]n \\
&= (1+\epsilon)n(2d+1) + m(n+d+1).
\end{align*}
where the first term is roughly equal to a $2(1+\epsilon)$ factor more than the size of $\bX$. Note that the second term does not contribute much to the total storage as compared to the first term, because the number of worker nodes $m$ is much smaller than both $n$ and $d$. In fact, if $m-k$ divides both $n$ and $d$, then the second term vanishes.
Since $|\bX|$ is an $n\times d$ matrix, the total storage at each worker node is almost equal to $2(1+\epsilon)\frac{|\bX|}{m}$, which is a constant factor of the optimal, that is, $\frac{|\bX|}{m}$, and the total storage is roughly equal to $2(1+\epsilon)|\bX|$.

\subsubsection{Computational Complexity}
We can divide the computational complexity of our scheme as follows:
\begin{itemize}[leftmargin = *]
\item {\it Encoding the data matrix.} 
Since, for every $i\leq k$ and $j>k$, the total number of non-zero entries in $\bS_i^{(1)}$ and $\bS_j^{(1)}$ are at most $n$ and $p_1$, respectively (see \Subsectionref{encoding-matrix} for details), 
the computational complexity for computing $\bS_i^{(1)}\bX$ for each $i\leq k$, and $\bS_j^{(1)}\bX$ for each $j>k$, is $O(nd)$ and $O(p_1d)$, respectively.
So, the encoding time for computing $\bS^{(1)}\bX$ is equal to $O\left(k(nd)+(m-k)(p_1d)\right)=O\left((\frac{\epsilon}{1+\epsilon}m+1)nd\right)$.
Similarly, we can show that the encoding time for computing $\bS^{(2)}\bX^T$ is also equal to $O\left((\frac{\epsilon}{1+\epsilon}m+1)nd\right)$.
Note that computing $\bS^{(1)}$ and $\bS^{(2)}$ take $O(k^2m)$ time each, which is much smaller, as compared to the encoding time. 
So, the total encoding time is $O\left((\frac{\epsilon}{1+\epsilon}m+1)nd\right)$.
Note that this encoding is to be done only once. 

\item {\it Computation at each worker node.} 
In the first round, upon receiving $\bw$ from the master node, 
each worker $i$ computes $(\bS_i^{(1)}\bX)\bw$, and reports back the resulting vector. 
Similarly, in the second round, upon receiving $f'(\bw)$ from the master node, 
each worker $i$ computes $(\bS_i^{(2)}\bX^T)f'(\bw)$, and reports back the resulting vector. 
Since $\bS_i^{(1)}\bX$ and $\bS_i^{(2)}\bX^T$ are $p_1\times(d+1)$ and $p_2\times n$ matrices, respectively, 
each worker node $i$ requires $O(p_1d+p_2n)=O((1+\epsilon)\frac{nd}{m})$ time.

\item {\it Computation at the master node.}
The total time taken by the master node in both the rounds is the sum of the time required in 
(i) finding the corrupt worker nodes in the 1st and 2nd rounds, which requires $O(p_1km)$ and $O(p_2km)$ time, respectively (see \Subsectionref{finding-corrupt-nodes}), 
(ii) recovering $\bX\bw$ from $\bS_{\T}^{(1)}\bX\bw$ in the 1st round, which requires $O(p_1m^2)$ time, 
(iii) computing $f'(\bw)$ from $\bX\bw$, which takes $O(n)$ time, and
(iv) recovering  $\bX^Tf'(\bw)$ from $\bS_{\T}^{(2)}\bX^Tf'(\bw)$ in the 2nd round, which requires $O(p_2m^2)$ time (see \Subsectionref{recovery_matvecprod}).
Since $k<m$, the total time is equal to $O((p_1+p_2)m^2)=O((1+\epsilon)(n+d)m)$. 
\end{itemize}

\subsubsection{Communication Complexity}
In each gradient computation, 
{\sf (i)} master broadcasts $(n+d)$ real numbers, $d$ in the first round and $n$ in the second round; and
{\sf (ii)} each worker sends $\left((1+\epsilon)\frac{n+d}{m}\right)$ real numbers to master, $(1+\epsilon)\frac{n}{m}$ in the first round and $(1+\epsilon)\frac{d}{m}$ in the second round.

\section{Our Solution to Coordinate Descent}\label{sec:solution_CD}
In this section, we give a solution to the distributed coordinate descent (CD) under Byzantine attacks and prove \Theoremref{main-result_CD}.
To make our notation simpler, we remove the dependence on the label vector $\by$ in the problem expression \eqref{eq:problem-cd} and rewrite it as follows (this is without loss of generality in the light of \Footnoteref{storing-labels} and Algorithm 1):
\begin{align}\label{eq:problem-express}
\arg\min_{\bw\in\R^d}\phi(\bX\bw):=\sum_{i=1}^n \ell(\langle \bx_i,\bw\rangle).
\end{align}
We want to optimize \eqref{eq:problem-express} using distributed CD, described in \Subsectionref{setting-cd}.
As outlined in \Subsectionref{approach_CD}, we use data encoding and error correction over real numbers for that.
To combat the effect of adversary, we add redundancy to enlarge the parameter space.
Let $\btX^R=\bX\bR$, where $\bR=[\bR_1\ \bR_2\ \hdots\ \bR_m]\in\R^{d\times pm}$ with $pm\geq d$,
and each $\bR_i$ is a $p\times d$ matrix. 
We will determine the encoding matrix $\bR$ later, 
after describing what properties we want from it. 
For the value of $p$, looking ahead, when $t$ is the number of corrupt workers, we will choose $p=\frac{d}{m-2t}$, which is a constant multiple of $\frac{d}{m}$ even if $t$ is a constant fraction ($<\frac{1}{2}$) of $m$ (e.g., for $t=\frac{m}{3}$, we have $p=\frac{3d}{m}$).
We consider $\bR$'s which are of full row-rank. 
Let $\bR^+:=\bR^T(\bR\bR^T)^{-1}$ denote its Moore-Penrose inverse 
such that $\bR\bR^+=I_d$, where $I_d$ is the $d\times d$ identity matrix.
Note that $\bR^+$ is of full column-rank. 
Let $\bv=\bR^+\bw$ be the transformed vector, which lies in a larger (than $d$) dimensional space. 
Let $\bR^+=[(\bR_1^+)^T\ (\bR_2^+)^T\ \hdots\ (\bR_m^+)^T]^T$, where each $\bR_i^+:=(\bR^+)_i$ is a $p\times d$ matrix. 
With this, by letting $\bv=[\bv_1^T\ \bv_2^T\ \hdots\ \bv_m^T]^T$, we have that $\bv_i=\bR_i^+\bw$ for every $i\in[m]$.
Now, consider the following modified problem over the encoded data.
\begin{align}\label{eq:encoded-problem-express}
\arg\min_{\bv\in\R^{pm}}\phi(\btX^R\bv).
\end{align}
Observe that, since $\bR$ is of full row-rank, $\min_{\bw\in\R^d}\phi(\bX\bw)$ is equal to $\min_{\bv\in\R^{pm}}\phi(\btX^R\bv)$; and from an optimal solution to one problem we can obtain an optimal solution to the other problem. 
We design an encoding/decoding scheme such that when we optimize the encoded problem \eqref{eq:encoded-problem-express} using \Algorithmref{distr-coor-desc-algo}, the vector $\bv$ that we get in each iteration is of the form $\bv=\bR^+\bw$ for some vector $\bw\in\R^d$.\footnote{If such a $\bw$ exists, then it is unique. This follows from the fact that $\bR^+$ is of full column-rank.} 
In fact, our encoding/decoding will ensure that the $\bw$ for which $\bv=\bR^+\bw$ would be equal to the original parameter vector in that iteration if we had run \Algorithmref{distr-coor-desc-algo} to solve \eqref{eq:problem-express}. We need this property because in any CD iteration $t$, we need access to the original parameter vector $\bw^t$ (such that $\bv^t=\bR^+\bw^t$) to facilitate the local parameter updates of $\bv_1^t,\hdots,\bv_m^t$ at the workers. See the paragraph after \eqref{eq:partial-update} for more details.

Now, instead of solving \eqref{eq:problem-express}, we solve its encoded form \eqref{eq:encoded-problem-express} using \Algorithmref{distr-coor-desc-algo} (with decoding at the master), 
where each worker $i$ stores $\btX_i^R=\bX\bR_i$ and is responsible for updating (some coordinates of) $\bv_i$. 
In the following, let $\U\subseteq[p]$ be a fixed arbitrary subset of $[p]$. 
Let $\bv^0:=\bR^+\bw^0$ for some $\bw^0$ at time $t=0$.
Suppose, at the beginning of the $t$'th iteration, we have $\bv^t=\bR^+\bw^t$ for some $\bw^t$, and each worker $i$ updates $\bv_{i\U}^t$ according to 
\begin{align}\label{eq:encoded-update-rule}
\bv_{i\U}^{t+1} = \bv_{i\U}^t - \alpha_t\nabla_{i\U} \phi(\btX^R\bv^t),
\end{align}
where $\nabla_{i\U} \phi(\btX^R\bv^t) = (\btX_{i\U}^R)^T\phi'(\btX^R\bv^t)$.
Recall that each $\bR_i$ is a $d\times p$ matrix, and each $\bR_i^+:=(\bR^+)_i$ is a $p\times d$ matrix.
We denote by $\bR_{i\U}$ the $d\times|\U|$ matrix obtained by restricting the columns of $\bR_i$ to the elements of $\U$.
Analogously, we denote by $\bR_{i\U}^+:=(\bR^+)_{i\U}$ the $|\U|\times d$ matrix obtained by restricting the rows of $\bR_i^+$ to the elements of $\U$.
With this, we can write $\btX_{i\U}^R=\bX\bR_{i\U}$. Now,
\eqref{eq:encoded-update-rule} can be equivalently written as
\begin{align}
\bv_{i\U}^{t+1} =  \bv_{i\U}^t - \alpha_t \bR_{i\U}^T\bX^T\phi'(\btX^R\bv^t). \label{eq:partial-update}
\end{align}
In order to update $\bv_{i\U}^t$, worker $i$ requires $\phi'(\btX^R\bv^t)$, where $\btX^R\bv^t=\sum_{j=1}^m\btX_j^R\bv_j^t$ and worker $i$ has only $(\btX_i^R,\bv_i^t)$.
Since $\bv^t=\bR^+\bw^t$, we have $\btX^R\bv^t=\bX\bR\bv^t=\bX\bw^t$. 
So, it suffices to compute $\bX\bw^t$ at the master node -- once master has $\bX\bw^t$, it can locally compute $\phi'(\bX\bw^t)$ and send it to all the workers.
Computing $\bX\bw^t$ is the distributed matrix-vector (MV) multiplication problem, where the matrix $\bX$ is fixed and we want to compute $\bX\bw^t$ for any vector $\bw^t$ in the presence of an adversary.
In \Sectionref{matrix-vector-mult}, we give a method for performing distributed MV multiplication in the presence of an adversary.
Now we give an overview, together-with an improvement on its computational complexity.

We encode $\bX$ using an encoding matrix $\bL\in\R^{(p'm)\times n}$.
Let $\bL=[\bL_1^T\ \bL_2^T\ \hdots\ \bL_m^T]^T$, where each $\bL_i$ is a $p'\times n$ matrix with $p'=\lceil\frac{n}{m-2t}\rceil$. 
Each $\bL_i$ has $p'$ rows and $n$ columns, and has the same structure as that of $\bS_i$ from \eqref{eq:encoding-matrix-S_i}. 
Worker $i$ stores $\btX_i^L=\bL_i\bX$.
To compute $\bX\bw$, master sends $\bw$ to all the workers; worker $i$ responds with $\bL_i\bX\bw+\be_i$, where $\be_i=\bzero$ if the $i$'th worker is honest, otherwise can be arbitrary; upon receiving $\{\bL_i\bX\bw+\be_i\}_{i=1}^m$, where at most $t$ of the $\be_i$'s can be non-zero, master applies the decoding procedure and recovers $\bX\bw$ back.
We can improve the computational complexity of this method significantly by observing that, in each iteration of our distributed CD algorithm, 
only a few coordinates of $\bw$ get updated and the rest of the coordinates remain unchanged. 
(Looking ahead, when each worker updates $\bv_{i\U}$'s according to \eqref{eq:encoded-update-rule}, it automatically updates $\bw_{f(\U)}$ according to \eqref{eq:dist-coor-dist-update} -- for a specific function $f$ as defined in \eqref{eq:function_CD} -- where $\bv$ and $\bw$ satisfy $\bv=\bR^+\bw$.)
This implies that for computing $\bX\bw$, master only needs to send the updated coordinates to the workers and keeps the result from the previous MV product with itself.
This significantly reduces the local computation at the worker nodes, as now they only need to perform a local MV product of a matrix of size $p'\times |f(\U)|$ and a vector of length $|f(\U)|$. See \Sectionref{matrix-vector-mult} for details.

Our goal in each iteration of CD is to update some coordinates of the original parameter vector $\bw$; instead, 
by solving the encoded problem, we are updating coordinates of the transformed vector $\bv$. 
We would like to design an algorithm/encoding such that it has exactly the same convergence properties as if we are running the distributed CD on the original problem without any adversary.
For this, naturally, we would like our algorithm to satisfy the following property:

{\em Update on any (small) subset of coordinates of $\bw$ should be achieved by updating some (small) subset of coordinates of $\bv_i$'s; and, by updating those coordinates of $\bv_i$'s, we should be able to efficiently recover the correspondingly updated coordinates of $\bw$. Furthermore, this should be doable despite the errors injected by the adversary in every iteration of the algorithm.}

Note that if each coordinate of $\bv$ depends on too many coordinates of $\bw$, then updating a few coordinates of $\bv$ may affect many coordinates of $\bw$, and it becomes information-theoretically impossible to satisfy the above property 
(even without the presence of an adversary).\footnote{To see this, consider the case when each worker $i$ updates only the first coordinate of $\bv_i$ and no worker is corrupt. Master receives $m$ linear equations $\bv_{i1}=\bR^+_{i1}\bw,\ i=1,2,\hdots,m$, where $\bR^+_{i1}$ is the first row of $\bR^+_i$ for every $i\in[m]$. Assume, for simplicity, that these $m$ equations are linearly independent. When $m$ is smaller than $d$ (which is always the case), there are infinite solutions to this system of linear equations, unless at most $m$ elements of $\bw$ are involved in the $m$ linear equations (i.e., the number of unknowns are at most the number of equations), which is equivalent to saying that the rows $\bR^+_{i1}$ for $i=1,2,\hdots,m$ are sparse. Our encoding matrix will satisfy this property; see \Subsectionref{encoding-decoding} for more detail.}
This imposes a restriction that each row of $\bR^+$ must have few non-zero entries, in such a way that updating $\bv_{i\U}^t$'s, for any choice of $\U\subseteq[p]$, will collectively update only a subset (which may potentially depend on $\U$) of coordinates of the original parameter vector $\bw^t$, and we can uniquely and efficiently recover those updated coordinates of $\bw^t$, even from the {\em erroneous} vectors $\{\bv_{i\U}^{t+1}+\be_{i\U}\}_{i=1}^m$, where at most $t$ out of $m$ error vectors $\{\be_{i\U}\}_{i=1}^m$ are non-zero and may have arbitrary entries.
In order to achieve this, we will design a sparse encoding matrix $\bR^+$ (which in turn determines $\bR$), that satisfies the following properties:
\begin{figure*}[t]
\centering
\begin{tikzpicture}[>=stealth', font=\sffamily\large\bfseries, thick]
\draw [->] (-0.8,4.0) -- (1.5,4.0); \node at (0.4,4.3) {$\bar{\bar{\bw}}_{f(\U')}^t$};
\node [rotate=0, scale=0.75] at (2.5,4.8) {M \text{broadcasts} $\bar{\bar{\bw}}_{f(\U')}^t$};
\draw (1.5,3.5) rectangle (3.5,4.5) node [pos=0.5] {M};
\draw [->] (3.5,4.0) -- (4.5,4.0);
\draw (4.5,3.5) rectangle (5.5,4.5) node [pos=0.5] {Dec};

\draw (0,0.5) rectangle (0.9,1.4) node [pos=0.5] {$W_1$}; \node at (0.45, 0.1) {$\bL_1\bX$};
\draw [fill=red] (1.8,0.5) rectangle (2.7,1.4) node [pos=0.5] {$W_2$}; \node at (2.25, 0.1) {$\bL_2\bX$};
\draw (3.6,0.5) rectangle (4.5,1.4) node [pos=0.5] {$W_3$}; \node at (4.05, 0.1) {$\bL_3\bX$};
\draw [fill=black] (5.2,0.95) circle [radius=0.04];
\draw [fill=black] (5.5,0.95) circle [radius=0.04];
\draw [fill=black] (5.8,0.95) circle [radius=0.04];
\draw [fill=red] (6.5,0.5) rectangle (7.4,1.4) node [pos=0.5] {$W_m$}; \node at (6.95, 0.1) {$\bL_m\bX$};

\draw [->] (0.5,1.4) -- (1.6,3.5); \node [rotate=62, scale=0.7] at (0.85,2.5) {$\bL_1\bX\bar{\bar{\bw}}_{f(\U')}^t$};
\draw [->] (2.25,1.4) -- (2.25,3.5); \node [rotate=90, scale=0.7] at (2.05,2.45) {$\bL_2\bX\bar{\bar{\bw}}_{f(\U')}^t+\be_2$};
\draw [->] (4.05,1.4) -- (2.75,3.5); \node [rotate=-60, scale=0.7] at (3.7,2.35) {$\bL_3\bX\bar{\bar{\bw}}_{f(\U')}^t$};
\draw [->] (6.95,1.4) -- (3.4,3.5); \node [rotate=-30, scale=0.7] at (5.5,2.5) {$\bL_m\bX\bar{\bar{\bw}}_{f(\U')}^t+\be_m$};

\draw [->] (5.5,4.0) -- (7.0,4.0); 
\node [scale=0.9] at (6.25,4.3) {$\bX\bw^t$};
\draw (7.0,3.5) rectangle (9.0,4.5);
\node [scale=0.9] at (8.0,4.2) {Compute};
\node [scale=0.8] at (8.0,3.8) {$\phi'(\bX\bw^t)$};
\draw [->] (9.0,4.0) -- (10.7,4.0); 
\node [scale=0.8] at (9.85,4.3) {$\phi'(\bX\bw^t)$};

\node [rotate=0, scale=0.75] at (11.7,4.8) {M \text{broadcasts} $\phi'(\bX\bw^t)$};
\draw (10.7,3.5) rectangle (12.7,4.5) node [pos=0.5] {M};
\draw [->] (12.7,4.0) -- (13.7,4.0);
\draw (13.7,3.5) rectangle (14.7,4.5) node [pos=0.5] {Dec};

\draw (9.2,0.5) rectangle (10.1,1.4) node [pos=0.5] {$W_1$}; \node at (9.65, 0.1) {$\bX\bR_1$};
\draw (11.0,0.5) rectangle (11.9,1.4) node [pos=0.5] {$W_2$}; \node at (11.45, 0.1) {$\bX\bR_2$};
\draw [fill=red] (12.8,0.5) rectangle (13.7,1.4) node [pos=0.5] {$W_3$}; \node at (13.25, 0.1) {$\bX\bR_3$};
\draw [fill=black] (14.4,0.95) circle [radius=0.04];
\draw [fill=black] (14.7,0.95) circle [radius=0.04];
\draw [fill=black] (15.0,0.95) circle [radius=0.04];
\draw [fill=red] (15.7,0.5) rectangle (16.6,1.4) node [pos=0.5] {$W_m$}; \node at (16.15, 0.1) {$\bX\bR_m$};

\draw [->] (9.7,1.4) -- (10.8,3.5); \node [rotate=62, scale=0.9] at (9.95,2.5) {$\bv_{1\U}^{t+1}$};
\draw [->] (11.45,1.4) -- (11.45,3.5); \node [rotate=90, scale=0.9] at (11.15,2.25) {$\bv_{2\U}^{t+1}$};
\draw [->] (13.25,1.4) -- (11.95,3.5); \node [rotate=-58, scale=0.9] at (13.0,2.35) {$\bv_{3\U}^{t+1}+\be_{3\U}$};
\draw [->] (16.15,1.4) -- (12.6,3.5); \node [rotate=-30, scale=0.9] at (14.75,2.6) {$\bv_{m\U}^{t+1}+\be_{m\U}$};

\draw [->] (14.7,4.0) -- (17.0,4.0); \node [scale=0.9] at (15.85,4.3) {$\bw_{f(\U)}^{t+1}$};

\draw [->] (17.0,4.0) -- (17.0,6.5); \draw [->] (17.0,6.5) -- (-0.8,6.5); \draw [->] (-0.8,6.5) -- (-0.8,4.0);

\node at (8.0,6.9) {$t\leftarrow t+1;\ \U'\leftarrow\U;\ \bar{\bar{\bw}}_{f(\U')}^t :=\bw_{f(\U')}^{t-1}-\bw_{f(\U')}^t$};

\end{tikzpicture}
\caption{This figure shows our 2-round approach to the Byzantine-resilient distributed coordinate descent (CD) for solving \eqref{eq:problem-express} using data encoding and real-error correction. 
We encode $\bX$ with the encoding matrix $[\bR_1\ \hdots\ \bR_m]\in\R^{d\times p_2m}$ and store $\btX_i^R:=\bX\bR_i$ at the $i$'th worker and solve \eqref{eq:encoded-problem-express} over an enlarged parameter vector $\bv\in\R^{p_2m}$.
At the $t$'th iteration, for some $\U\subseteq[p_2]$, the update at the $i$'th worker is $\bv_{i\U}^{t+1} =  \bv_{i\U}^t - \alpha_t \bR_{i\U}^T\bX^T \phi'(\btX^R\bv^t)$, which requires $\phi'(\btX^R\bv^t)$, where $\btX^R\bv^t=\bX\bw^t$. The first part of the figure is for providing $\phi'(\bX\bw^t)$ to every worker in each iteration so that they can update $\bv_{i\U}^t$'s. 
For this, we encode $\bX$ using the encoding matrix $[\bL_1^T\ \hdots\ \bL_m^T]^T\in\R^{p_1m\times n}$ and store $\btX_i^L:=\bL_i\bX$ at worker $i$. The encoding has the property that we can recover $\bX\bw^t$ from the erroneous vectors $\{\btX_i^L\bw^t+\be_i\}_{i=1}^m$, where at most $t$ of the $\be_i$'s are non-zero and can be arbitrary.
We can make it computationally more efficient at the workers' side by observing that, in each iteration, only a subset of coordinates of $\bw$ are being updated: 
suppose we updated $\bv_{i\U'}^{t}$'s in the $t$'th iteration, which automatically updated $\bw_{f(\U')}^{t}$.
Since $\bw_{[d]\setminus f(\U')}^{t}$ remain unchanged, we need to send only $\bw_{f(\U')}^t$ to the workers -- in the figure, to take care of a technicality, we let master broadcast $\bar{\bar{\bw}}_{f(\U')}^t:=\bw_{f(\U')}^{t-1}-\bw_{f(\U')}^t$, each worker $i$ computes $\btX_i\bar{\bar{\bw}}_{f(\U')}^t$ and sends it backs to the master. Since master keeps $\bX\bw^{t-1}$ from the previous iteration with itself, it can compute $\bX\bw^t$.
The set of corrupt workers may be different in different rounds -- the corrupt ones are shown in red color and they can send arbitrary outcomes to master. Once master has recovered $\bX\bw^t$, it computes $\phi'(\bX\bw^t)$ and broadcasts it; upon receiving it worker $i$ updates $\bv_{i\U}^{t+1}$ and sends it back. By {\bf P.\ref{prop1}}, this reflects an update on $\bw_{f(\U)}^{t+1}$ according to \eqref{eq:update-w}; 
and by {\bf P.\ref{prop2}}, the master can recover $\bw_{f(\U)}^{t+1}$.}
\label{fig:problem-setup_CD}
\end{figure*}

\begin{enumerate}\renewcommand{\labelenumi}{\textbf{P.\theenumi}}
\item \label{prop1} $\bR^+$ has structured sparsity, which induces a map $f:[p]\to \mathcal{P}([d])$ (where $\mathcal{P}([d])$ denotes the power set of $[d]$) such that 
\begin{enumerate}
\item $\{f(i):i\in[p]\}$ partitions $\{1,2,\hdots,d\}$, i.e., for every $i,j\in[p]$, such that $i\neq j$, we have $f(i)\cap f(j)=\emptyset$ and that $\bigcup_{i=1}^pf(i)=[d]$.
\item $|f(i)|=|f(j)|$ for every $i,j\in[p-1]$, and $|f(p)|\leq |f(i)|$, for any $i\in[p-1]$. 
\item For any $\U\subseteq[p]$, define $f(\U):=\cup_{j\in\U}f(j)$. If we update $\bv_{i\U}^t$, $\forall i\in[m]$, according to \eqref{eq:partial-update}, 
it automatically updates $\bw_{f(\U)}^t$ according to 
\begin{equation}\label{eq:update-w}
\bw_{f(\U)}^{t+1}= \bw_{f(\U)}^t - \alpha_t\bX_{f(\U)}^T\phi'(\bX\bw^t).
\end{equation} 
If we set $\bv_{i\barU}^{t+1}:=\bv_{i\barU}^t$ and $\bw_{\barfU}^{t+1}:=\bw_{\barfU}^{t}$, then $\bv^{t+1}=\bR^+\bw^{t+1}$, i.e., our invariant holds.
\end{enumerate}
\end{enumerate}
Note that \eqref{eq:update-w} is the same update rule if we run the plain CD algorithm to update $\bw_{f(\U)}$. 
In fact, our encoding matrix satisfies a stronger property, that $\bv_{i\U}^{t+1}=\bR_{i\U,f(\U)}^+\bw_{f(\U)}^{t+1}$ holds for every $i\in[m]$, $\U\subseteq[p]$, where $\bR_{i\U,f(\U)}^+$ denotes the $|\U|\times|f(\U)|$ matrix obtained from $\bR_{i\U}^+$ by restricting its column indices to the elements in $f(\U)$. 
\begin{enumerate}\renewcommand{\labelenumi}{\textbf{P.\theenumi}}
\setcounter{enumi}{1}
\item \label{prop2} We can efficiently recover $\bw_{f(\U)}^{t+1}$ from the erroneous vectors $\{\bv_{i\U}^{t+1}+\be_{i\U}\}_{i=1}^m$, where at most $t$ of $\be_{i\U}$'s are non-zero and may have arbitrary entries.
Since $\bv_{i\U}^{t+1}=\bR_{i\U,f(\U)}^+\bw_{f(\U)}^{t+1}$, for every $i\in[m]$, $\U\subseteq[p]$, this property requires that not only $\bR^+$, but its sub-matrices also have error correcting capabilities.
\end{enumerate}
\begin{remark}
Note that {\bf P.\ref{prop1}} implies that for every $i\in[p]$, we have $|f(i)|\leq d/p$. As we see later, this will be equal to $m/(1+\epsilon)$ for some $\epsilon>0$ which is determined by the corruption threshold. 
This means that in each iteration of the CD algorithm running on the modified encoded problem, we will be effectively updating the coordinates of the parameter vector $\bw$ in chunks of size $m/(1+\epsilon)$ or its integer multiples. 
In particular, if each worker $i$ updates $k$ coordinates of $\bv_i$, then $km/(1+\epsilon)$ coordinates of $\bw$ will get updated.
For comparison, Algorithm 1 updates $km$ coordinates of the parameter vector $\bw$ in each iteration, if each worker updates $k$ coordinates in that iteration.
\end{remark}
Now we design an encoding matrix $\bR^+$ and a decoding method that satisfy {\bf P.\ref{prop1}} and {\bf P.\ref{prop2}}.

\subsection{Encoding and Decoding}\label{subsec:encoding-decoding}
In this section, we first design an encoding matrix $\bR^+$ that satisfies {\bf P.\ref{prop1}}.
$\bR^+$ will be such that it has orthonormal rows, so, $\bR$ is easy to compute, $\bR=(\bR^+)^T$.
For simplicity, we denote $\bR^+$ by $\bS$.
We show that the encoding matrix that we design for the MV multiplication in \Sectionref{matrix-vector-mult} satisfies all the properties that we want.\footnote{The encoding and decoding of this section is based on the corresponding algorithms from \Sectionref{matrix-vector-mult}.}
In the MV multiplication, we had a fixed matrix $\bA$ and the master node wants to compute $\bA\bw$ for any vector $\bw$ of its choice. In the solution presented in \Sectionref{matrix-vector-mult}, we encode $\bA$ and store $\bS_i\bA$ at the $i$'th worker node. Now, the master sends $\bw$ to all the worker nodes, and each worker $i$ responds with $\bS_i\bA\bw+\be_i$, where $\be_i=\bzero$ if worker $i$ is honest, otherwise can be arbitrary. Once master receives $\{\bS_i\bA\bw+\be_i\}_{i=1}^m$, it can run the error correcting procedure to recover $\bA\bw$.
To apply this in our setting, we take $\bA$ to be the identity matrix, such that $\bS_i\bA=\bS_i$, and the master can recover $\bw$ from $\{\br_i=\bS_i\bw+\be_i\}_{i=1}^m$, if at most $t$ of the $\be_i$'s are non-zero. 
For convenience, we rewrite the encoding matrix $\bS_i$ for the $i$'th worker node from \Subsectionref{encoding-matrix} below:
\begin{align}\label{eq:encoding-matrix-S_i_CD}
\bS_i=
\begin{bmatrix}
b_{1i} \hdots b_{qi} & & & \\ 
& \ddots & & \\
& & b_{1i} \hdots b_{qi} & \\
& & & b_{1i} \hdots b_{li}
\end{bmatrix}_{p\times d}
\end{align}
Here $q=(m-2t)$ and $l=d-(p-1)q$, where $p=\lceil\frac{d}{q}\rceil$. Note that $1\leq l<q$, and if $q$ divides $d$, then $l=q$.
All the unspecified entries of $\bS_i$ are zero. By stacking up the $\bS_i$'s gives us our desired encoding matrix $\bS=[\bS_1^T\ \bS_2^T\ \hdots\ \bS_m^T]^T$.
Note that $b_{1i},b_{2i},\hdots,b_{qi}$ are such that if we let $\bb_i=[b_{i1}\ b_{i2} \hdots b_{im}]^T$ for every $i\in[q]$, then $\{\bb_1, \bb_2,\hdots,\bb_q\}$ is a set of orthonormal vectors. This implies that $\bS$ is orthonormal, and, therefore, $\bS^+=\bS^T$. 
By taking $\bR=\bS^T$, we have $\bR^+=\bS$.
Now we show that $\bS$ satisfies {\bf P.\ref{prop1}}-{\bf P.\ref{prop2}}. \\

\paragraph{Our Encoding Satisfies {\bf P.\ref{prop1}}.}
We need to show a map $f:[p]\to\mathcal{P}([d])$ that satisfies {\bf P.\ref{prop1}}.
Let us define the function $f$ as follows, where $(q=m-2t)$ and $p=\lceil\frac{d}{q}\rceil$:
\begin{align}\label{eq:function_CD}
f(i) := 
\begin{cases}
[(i-1)*q+1 : i*q] & \text{ if } 1 \leq i < p, \\
[(p-1)*q+1 : d] & \text{ if } i=p,
\end{cases}
\end{align}
and for any $\U\subseteq[p]$, we define $f(\U):=\cup_{i\in\U}f(i)$.
It is clear from the definition of $f$ that {\sf (i)} $\{f(i):i\in[p]\}$ partitions $[d]$; {\sf (ii)} for every $i\in[p-1]$ we have $|f(i)|=q$, and that $|f(p)| \leq q$.
Recall that $q=m-2t$.
For the 3rd property, note that, for any $\U\subseteq[p]$, all the columns of $\bS_{i\U}$ whose indices belong to $[d]\setminus f(\U)$ are identically zero, which implies that we have 
\begin{align}\label{eq:reduced-Sw}
\bS_{i\U}\bw = \bS_{i\U,f(\U)}\bw_{f(\U)},\quad \text{ for every }\bw\in\R^d,
\end{align} 
which in turn implies
that 
\begin{align}\label{eq:reduced-SX}
\bS_{i\U}\bX^T = \bS_{i\U,f(\U)}\bX_{f(\U)}^T.
\end{align}
Since $\bS^+=\bS^T$, we have $\bS_{i\U}^+=\bS_{i\U}^T$ for every $i\in[m]$ and every $\U\subseteq[p]$.
With these, our update rule $\bv_{i\U}^{t+1} = \bS_{i\U}\bw^t - \alpha_t \bS_{i\U}\bX^T\phi'(\bX\bw^t)$\footnote{We emphasize that we used $\bS^+=\bS^T$ crucially to equivalently write our update rule $\bv_{i\U}^{t+1} = \bR_{i\U}^+\bw^t - \alpha\bR_{i\U}^T\bX^T\phi'(\bX\bw^t)$ from \eqref{eq:partial-update} as $\bv_{i\U}^{t+1} = \bS_{i\U}\bw^t - \alpha_t \bS_{i\U}\bX^T\phi'(\bX\bw^t)$. This follows because $\bS^+=\bS^T$ and we take $\bR^+=\bS$, which together imply that $\bR_{i\U}^+=\bR_{i\U}^T=\bS_{i\U}$.}
can equivalently be written as 
\begin{align}\label{eq:reduced-update}
\bv_{i\U}^{t+1} = \bS_{i\U,f(\U)}\bw_{f(\U)}^{t+1}, 
\end{align}
where 
\begin{align}\label{eq:update-on-w}
\bw_{f(\U)}^{t+1}=\bw_{f(\U)}^t - \alpha_t \bX_{f(\U)}^T\phi'(\bX\bw^t).
\end{align} 
Observe that \eqref{eq:update-on-w} is the same update rule as \eqref{eq:update-w}, which implies that if each worker $i$ updates $\bv_{i\U}$ according to the CD update rule, then the collective update at all the worker nodes automatically updates $\bw_{f(\U)}$ according the CD update rule. Now we show that our invariant $\bv^{t+1}=\bS\bw^{t+1}$ is maintained. We show this by induction. Base case $\bv^{0}=\bS\bw^0$ holds by construction. For the inductive case, assume that $\bv^t=\bS\bw^t$ holds at time $t$ and we show $\bv^{t+1}=\bS\bw^{t+1}$ holds at time $t+1$.

Define $\overline{\U}:=[p]\setminus\U$ and $\overline{f(\U)}:=[d]\setminus f(\U)$.
Since we did not update $\bv_{i\barU}^t$'s, we have $\bv_{i\barU}^{t+1}=\bv_{i\barU}^t$ for every $i\in[m]$. This, together with the inductive hypothesis (i.e., $\bv^t=\bS\bw^t$), implies that 
\begin{align}\label{eq:update-barU-tplus1-to-t}
\bv_{i\barU}^{t+1}=\bS_{i\barU}\bw^t.
\end{align} 
Since $f(\barU)=\barfU$, we have from \eqref{eq:reduced-Sw} that 
\begin{align}\label{eq:update-barU}
\bS_{i\barU}\bw^t=\bS_{i\barU,\barfU}\bw_{\barfU}^t.
\end{align} 
It is clear from \eqref{eq:update-on-w} that $\bw_{\barfU}^t$ did not get an update when we updated $\bv_{i\U}^t$'s, which implies that $\bw_{\barfU}^{t+1}=\bw_{\barfU}^t$. Substituting this in \eqref{eq:update-barU} gives $\bS_{i\barU}\bw^t=\bS_{i\barU,\barfU}\bw_{\barfU}^{t+1}$,
which, by \eqref{eq:reduced-Sw}, yields $\bS_{i\barU}\bw^{t}=\bS_{i\barU}\bw^{t+1}$. This, together with \eqref{eq:update-barU-tplus1-to-t}, implies
\begin{align}\label{eq:update-on-barU-t+1}
\bv_{i\barU}^{t+1}=\bS_{i\barU}\bw^{t+1}.
\end{align}
We already have from \eqref{eq:reduced-Sw} and \eqref{eq:reduced-update} that
\begin{align}\label{eq:update-on-U-t+1}
\bv_{i\U}^{t+1}=\bS_{i\U}\bw^{t+1}.
\end{align}
Since \eqref{eq:update-on-barU-t+1} and \eqref{eq:update-on-U-t+1} hold for every $i\in[m]$, we have $\bv^{t+1}=\bS\bw^{t+1}$. Hence, the invariant is maintained.

\paragraph{Our Encoding Satisfies {\bf P.\ref{prop2}}.}
If we let 
\begin{align*}
\bv_{[m]\U} &:= [\bv_{1\U}^T\ \bv_{2\U}^T\hdots\bv_{m\U}^T]^T, \\
\bS_{[m]\U,f(\U)} &:= [\bS_{1\U,f(\U)}^T\ \bS_{2\U,f(\U)}^T\hdots\bS_{m\U,f(\U)}^T]^T,
\end{align*}
then the collective update \eqref{eq:reduced-update} from all the workers can be written as 
\begin{align}\label{eq:reduced-update-on-v}
\bv_{[m]\U}^{t+1} = \bS_{[m]\U,f(\U)}\bw_{f(\U)}^{t+1}.
\end{align}

It is easy to verify that for every choice of $\U\subseteq[p]$, $\bS_{[m]\U,f(\U)}$ is a full column-rank matrix, which implies that 
we can in principle recover the updated $\bw_{f(\U)}^{t+1}$ from $\bv_{[m]\U}^{t+1} = \bS_{[m]\U,f(\U)}\bw_{f(\U)}^{t+1}$.
Now we show that not only can we recover $\bw_{f(\U)}^{t+1}$ from $\{\bS_{i\U,f(\U)}\bw_{f(\U)}^{t+1}\}_{i=1}^m$, but also efficiently recover $\bw_{f(\U)}^{t+1}$ from the {\em erroneous} vectors $\{\bS_{i\U,f(\U)}\bw_{f(\U)}^{t+1}+\be_{i\U}\}_{i=1}^m$, where at most $t$ out of $m$ error vectors $\{\be_{i\U}\}_{i=1}^m$ are non-zero and may have arbitrary entries. 
Let $\U=\{j_1,j_2,\hdots,j_{|\U|}\}$, and for every $i\in[m]$, let $\be_{i\U}=[e_{ij_1} e_{ij_2}\hdots e_{ij_{|\U|}}]^T$.
Master equivalently writes $\{\bS_{i\U,f(\U)}\bw_{f(\U)}^{t+1}+\be_{i\U}\}_{i=1}^m$ as $|\U|$ systems of linear equations.
\begin{equation}\label{eq:subsystems-CD}
\tilde{h}_i(\bw_{f(\U)}^{t+1}) = \btS_{i,f(\U)}\bw_{f(\U)}^{t+1} + \bte_i,\quad i\in\U,
\end{equation}
where, for every $i\in\U$, $\bte_i=[e_{1i},e_{2i},\hdots,e_{mi}]^T$ and 
$\btS_{i,f(\U)}$ is an $m\times|f(\U)|$ matrix whose $j$'th row is equal to the $i$'th row of $\bS_{j\U}$, for every $j\in[m]$. 
Note that at most $t$ entries in each $\bte_i$ are non-zero. 
Observe that $\{\bS_{i\U,f(\U)}\bw_{f(\U)}^{t+1}+\be_{i\U}\}_{i=1}^m$ and $\{\btS_{i,f(\U)}\bw_{f(\U)}^{t+1} + \bte_i\}_{i\in\U}$ 
are equivalent systems of linear equations, and we can get one from the other.
Observe that \eqref{eq:subsystems-CD} is similar to \eqref{eq:subsystems}: $\btS_{i,f(\U)}$ is equal to $\btS_i$ (for the same $i$) with some of its zero columns removed; and adding zero columns to $\btS_{i,f(\U)}$ will not change the value of $\tilde{h}_i(\bw_{f(\U)}^{t+1})$. 
Now, using the machinery developed in \Sectionref{matrix-vector-mult} we can recover $\bw_{f(\U)}^{t+1}$ from \eqref{eq:subsystems-CD} in $O(|\U|m^2)$ time. 

\subsection{Resource Requirement Analysis}\label{subsec:CD-algorithm-analysis}
In this section, first we give our algorithm developed for distributed coordinate descent in the presence of $t$ (out of $m$) adversarial worker nodes, whose pictorial description is given in \Figureref{problem-setup_CD}. 

We use two encoding matrices $\bL\in\R^{(p_1m)\times n}$ and $\bR\in\R^{d\times(p_2m)}$.
Let $\bL=[\bL_1^T\ \bL_2^T\ \hdots\ \bL_m^T]^T$ and $\bR=[\bR_1\ \bR_2\ \hdots\ \bR_m]$, where each $\bL_i$ is a $p_1\times n$ matrix with $p_1=\lceil\frac{n}{m-2t}\rceil$ and each $\bR_i$ is a $d\times p_2$ matrix with $p_2=\lceil\frac{d}{m-2t}\rceil$. 
Worker $i$ stores both $\btX_i^L=\bL_i\bX$ and $\btX_i^R=\bX\bR_i$. 
Roughly, $\bL$ is used to recover $\bX\bw$ from the erroneous $\{\bL_i\bX\bw+\be_i\}_{i=1}^m$, and $\bR$ is used to update the parameter vector reliably despite errors.
Here $\bL$ is a full column-rank matrix and $\bR$ is a full row-rank matrix. 
Initialize with an arbitrary $\bw^0$ and let $\bv^0=\bR^+\bw^0$. Repeat the following until convergence:

\begin{enumerate}[leftmargin = *]
\item At iteration $t$, master sends $(\bw_{f(\U)}^{t-1}-\bw_{f(\U)}^{t})$\footnote{Observe that master need not send the locations $f(\U)$, because workers can compute those by themselves, as they know both $\U$ and the function $f$.} to all the workers (at $t=0$, master sends $\bw_0$), where $\U\subseteq[p_2]$ is the set of indices used for updating $\bv_{i\U}^{t-1}$'s in the previous iteration, which in turn updated $\bw_{f(\U)}^{t-1}$; see \eqref{eq:reduced-update} and \eqref{eq:update-on-w} in \Subsectionref{encoding-decoding}. 

\item Worker $i$ computes $\btX_i^L(\bw_{f(\U)}^{t-1}-\bw_{f(\U)}^{t})=\bL_i\bX(\bw_{f(\U)}^{t-1}-\bw_{f(\U)}^{t})$ and sends it to the master.\footnote{With some abuse of notation, when we write $\bX\bw_{f(\U)}$, we implicitly assume that $\bw_{f(\U)}$ is a length $d$ vector, which has $0$'s in the indices that lie in $\barfU$.}
Upon receiving $\{\btX_i^L(\bw_{f(\U)}^{t-1}-\bw_{f(\U)}^{t})+\be_i\}_{i=1}^m$, where at most $t$ of the $\be_i$'s are non-zero and may have arbitrary entries, the master applies the decoding procedure of \Sectionref{matrix-vector-mult} and recovers $\bX(\bw_{f(\U)}^{t-1}-\bw_{f(\U)}^{t})$. 
We assume that master keeps $\bX\bw^{t-1}$ from the previous iteration (which is equal to $\bzero$ if $t=0$), it can compute $\bX\bw^t = \bX\bw^{t-1} - \bX(\bw_{f(\U)}^{t-1}-\bw_{f(\U)}^{t})$. Note that if $t=0$, this is equal to $\bX\bw^0$.

\item After obtaining $\bX\bw^t$, master computes $\phi'(\bX\bw^t)$, picks a subset $\U\subseteq[p_2]$ (randomly or in a round robin fashion to cover $[p_2]$ in a few iterations), and sends $(\phi'(\bX\bw^t),\U)$ to all the workers. 

\item Each worker node $i\in[m]$ updates $\bv_{i\U}^{t+1}\leftarrow \bv_{i\U}^t - \alpha_t\nabla_{i\U}\phi(\btX\bv^t)=\bv_{i\U}^t - \alpha_t(\btX_{i\U}^R)^T\phi'(\bX\bw^t)$, while keeping the other coordinates of $\bv_i^t$ unchanged. Worker $i$ sends $\bv_{i\U}^{t+1}$ to the master. Note that $\bv_{i\U}^{t+1}=\bR_{i\U,f(\U)}^+\bw_{f(\U)}^{t+1}$, where $\bw_{f(\U)}^{t+1}=[\bw_{f(\U)}^t - \alpha \bX_{f(\U)}^T\phi'(\bX\bw^t)]$; see \eqref{eq:reduced-update} and \eqref{eq:update-on-w} in \Subsectionref{encoding-decoding}.

\item Upon receiving $\{\bv_{i\U}^{t+1}+\be_{i\U}\}_{i=1}^m$, where at most $t$ of the $\{\be_{i\U}\}_{i=1}^m$'s are non-zero and may have arbitrary entries, master applies the decoding procedure (since our encoding satisfies {\bf P.\ref{prop2}}) and recovers $\bw_{f(\U)}^{t+1}$.
\end{enumerate}

Now we analyze the total amount of resources (storage, computation, and communication) required by the above algorithm
and prove \Theoremref{main-result_CD}.
Fix an $\epsilon>0$. Let the corruption threshold $t$ satisfy $t\leq \lfloor(\epsilon/(1+\epsilon))\cdot(m/2)\rfloor$.

\subsubsection{Storage Requirement:}
By a similar analysis done in \Subsectionref{resource_GD}, we can show that the total storage at all worker nodes is roughly equal to $2(1+\epsilon)|\bX|$. 

\subsubsection{Computational Complexity:}
We can divide the computational complexity of our scheme as follows:
\begin{itemize}[leftmargin = *]
\item{\it Encoding the data matrix.} 
By a similar analysis done in \Subsectionref{resource_GD}, we can show that the total encoding time is $O\left((\frac{\epsilon}{1+\epsilon}m+1)nd\right)$.
Note that this encoding is to be done only once.

\item{\it Computation at each worker node.} 
Suppose that in each iteration of our algorithm, all the workers update $\tau$ coordinates of $\bv_i$'s.
Fix an iteration $t$ and assume that at iteration $(t-1)$, workers updated the coordinates in the set $\U\subseteq[p_2]$, where $|\U|=\tau$.
Recall from {\bf P.\ref{prop1}} that updating $\tau=|\U|$ coordinates of each $\bv_i^{t-1}$ automatically updates $\bw_{f(\U)}^{t-1}$.
Upon receiving $(\bw_{f(\U)}^{t-1}-\bw_{f(\U)}^{t})$ from the master node, each worker $i$ computes $\btX_i^\bL(\bw_{f(\U)}^{t-1}-\bw_{f(\U)}^{t})$, 
and reports back the resulting vector. Note that $(\bw_{f(\U)}^{t-1}-\bw_{f(\U)}^{t})$ has at most $|f(\U)|=\frac{\tau m}{1+\eps}$ non-zero elements,
which together with that $\btX_i^\bL$ is a $p_1\times d$ matrix, implies that computing $\btX_i^\bL(\bw_{f(\U)}^{t-1}-\bw_{f(\U)}^{t})$ takes $O(p_1\cdot|f(\U)|)=O(n\tau)$ time.\footnote{Note that in the very first iteration, master sends $\bw^0$, which may be a dense length $d$ vector, and computing $\btX_i\bL\bw^0$ at the $i$'th worker can take $O(p_1d)=O((1+\eps)\frac{nd}{m})$ time. This is only for the first iteration.}
In the second round, given $\phi'(\bX\bw^t)$, since $(\btX_{i\U}^R)^T$ is of dimension $n\times \tau$, updating $\bv_{i\U}^{t}$ requires $O(n\tau)$ time, where $\tau=|\U|$. So, the total time taken by each worker is $O(n\tau)$.

\item{\it Computation at the master node.}
Once master receives $\{\bL_i\bX(\bw_{f(\U)}^{t-1}-\bw_{f(\U)}^{t})+\be_i\}_{i=1}^m$, applying the decoding procedure of \Sectionref{matrix-vector-mult} to obtain $\bX(\bw_{f(\U)}^{t-1}-\bw_{f(\U)}^{t})$ from these erroneous vectors requires $O(p_1m^2)=O((1+\epsilon)nm)$ time. After that obtaining $\bX\bw^t$ takes another $O(n)$ time.
Given $\bX\bw^t$, computing $\phi'(\bX\bw^t)$ takes $O(n)$ time, assuming that computing $\ell'(\langle \bx_i,\bw^t\rangle;y_i)$ requires unit time, where $\langle \bx_i,\bw^t\rangle$ is equal to the $i$'th entry of $\bX\bw^t$.
Upon receiving $\{\bv_{i\U}^{t+1}+\be_{i\U}\}_{i=1}^m$, where $\bv_{i\U}^{t+1}=\bR_{i\U,f(\U)}^+\bw_{f(\U)}^{t+1}$, for all $i\in[m]$, recovering $\bw_{f(\U)}^{t+1}$ requires $O(\tau m^2)$ time. 
So, the total time taken by the master node is $O((1+\epsilon)nm+\tau m^2)$. 
\end{itemize}

\subsubsection{Communication Complexity:}
Suppose workers update $\tau$ coordinates of $\bv_i$'s in each iteration. Then
{\sf (i)} master broadcasts $\left(\frac{\tau m}{1+\eps}+n\right)$ real numbers, $\frac{\tau m}{1+\eps}$ in the first round to represent $\bw_{f(\U)}^t$ and $n$ in the second round to represent $\phi'(\bX\bw^t)$; and
{\sf (ii)} each worker sends $\left(\tau+(1+\eps)\frac{n}{m}\right)$ real numbers, $(1+\eps)\frac{n}{m}$ in the first round for computing $\bX\bw^t$ at the master node and $\tau$ in the second iteration to represent $\bv_{i\U}^t$.

\section{Extensions}\label{sec:extensions}
In this section, we give a few important extensions of our coding scheme developed earlier in \Sectionref{matrix-vector-mult}.
First we give a Byzantine-resilient and communication-efficient method for stochastic gradient descent (SGD).
Second we show how to exploit the specific structure of our encoding matrix to efficiently extend our coding technique to the streaming data model.
In the end, we give a few more important applications, where our method can be applied constructively.

\subsection{Stochastic Gradient Descent}\label{subsec:solution_SGD}
Stochastic gradient descent (SGD) \cite{RobbinsMonro51} is another alternative if full gradients are too costly to compute.
In each iteration of SGD, we sample a data point uniformly at random, compute a gradient on that sample, and update the parameter vector based on that.
We start with an arbitrary/random parameter vector $\bw_0\in\R^d$ and update it according the following update rule:
\begin{align}\label{eq:sgd-update-rule}
\bw_{t+1} = \bw_t - \alpha_t\nabla f_{r_t}(\bw_t), \quad t=1,2,3,\hdots
\end{align}
where $r_t$ is sampled uniformly at random from $\{1,2,\hdots,n\}$.
This ensures that the expected value of the gradient is equal to the true gradient.
Due to its simplicity and remarkable empirical performance, SGD has become arguably the most widely-used optimization algorithm in many large-scale applications, especially in deep learning \cite{Bottou10,RakhlinShSr12,Dean_deeplearning12}.
We want to run SGD in a distributed setup, where data is distributed among $m$ worker nodes and at most $t$ of them can be corrupt;
see \Subsectionref{adversary-model} for details on our adversary model.

\paragraph{Our solution.}
In the plain SGD, we sample a data point randomly and compute its gradient. So, we give a method in which, at any iteration $t$, master picks a random number $r_t$ in $\{1,2,\hdots,n\}$, broadcasts it, and recovers the $r_t$'th data point $\bx_{r_t}$. 
Once the master has obtained $\bx_{r_t}$, it can compute a gradient on it and updates the parameter vector. 
Since master recovers the data points, we can optimize for non-convex problems also; essentially, we could optimize anything that the plain SGD can.
Our method is described below.

We encode $\bX^T$ using the $\left\lceil d/(m-2t) \right\rceil\times d$ encoding matrix $\bS^{(2)}$, which has been defined in \Subsectionref{resource_GD}. For simplicity, we denote $\bS^{(2)}$ by $\bS$. Let $\bS=[\bS_1^T\ \bS_2^T\ \hdots\ \bS_m^T]^T$. 
Note that the $j$'th worker stores $\bS_j\bX^T$. 
Let $\btX:=\bS\bX^T$, which is a $\left\lceil d/(m-2t) \right\rceil\times n$ matrix, whose $i$'th column is the encoding $\btx_i := \bS\bx_i$ of the $i$'th data point $\bx_i$. 
Using the method developed in \Sectionref{matrix-vector-mult}, given $\{\bS_j\bx_i+\be_j\}_{j=1}^m$, where $\be_j=\bzero$ if the $j$'th worker is honest, otherwise can be arbitrary, master can recover $\bx_i$ exactly in $O((1+\epsilon)md)$ time. Our main theorem is stated below, a proof of which trivially follows from \Sectionref{matrix-vector-mult}.
\begin{thm}[Stochastic Gradient Descent]\label{thm:main-result_SGD}
Let $\bX\in\R^{n\times d}$ denote the data matrix. Let $m$ denote the total number of worker nodes.
We can compute a stochastic gradient in a distributed manner in the presence of $t$ corrupt worker nodes and $s$ stragglers, with the following guarantees, where $\epsilon>0$ is a free parameter.
\begin{itemize}
\item $(s+t)\leq\left\lfloor\frac{\epsilon}{1+\epsilon}\cdot \frac{m}{2}\right\rfloor$. 
\item Total storage requirement is roughly $(1+\epsilon)|\bX|$. 
\item Computational complexity for each stochastic gradient computation: 
\begin{itemize}
\item at each worker node is $O((1+\epsilon)\frac{d}{m})$.
\item at the master node is $O((1+\epsilon)dm)$.
\end{itemize}
\item Communication complexity for each stochastic gradient computation:
\begin{itemize}
\item each worker sends $\left((1+\epsilon)\frac{d}{m}\right)$ real numbers.
\item master broadcasts $\lceil\log n\rceil$ {\em bits}.
\end{itemize}
\item Total encoding time is $O\left(nd\left(\frac{\epsilon}{1+\epsilon}m+1\right)\right)$.
\end{itemize}
\end{thm}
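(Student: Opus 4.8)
The plan is to reduce each SGD iteration to a single instance of the Byzantine-resilient matrix-vector (MV) multiplication developed in \Sectionref{matrix-vector-mult}, and then to read off all the claimed resource bounds from its analysis. The key observation is that recovering the $r_t$'th data point is exactly the MV product $\bx_{r_t}=\bX^T\be_{r_t}$, where $\be_{r_t}\in\R^n$ is the $r_t$'th standard basis vector. So I would instantiate the MV machinery with the fixed matrix $\bA=\bX^T$ and the query vector $\bv=\be_{r_t}$: encode $\bX^T$ with $\bS=\bS^{(2)}$, store $\bS_j\bX^T$ at worker $j$, and invoke the decoding procedure (which rests on \Claimref{F-ortho-S}, \Claimref{S_T-full-rank}, \Lemmaref{random-combination} and \Lemmaref{RS-sparse-decoding}) on the received $\{\bS_j\bx_{r_t}+\be_j\}_{j=1}^m$. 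Exactly as in \Theoremref{main-result}, this recovers $\bx_{r_t}$ whenever $(s+t)\le\lfloor\frac{\epsilon}{1+\epsilon}\cdot\frac m2\rfloor$, because the encoding is identical and $k=2t$ forces $q=m-k\ge m/(1+\epsilon)$; stragglers and errors are absorbed jointly as in \Remarkref{errors-erasures}.

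The crucial simplification that yields the sharper communication and worker costs is that $\bv=\be_{r_t}$ is a standard basis vector, so worker $j$'s local product $\bS_j\bX^T\be_{r_t}$ is simply the $r_t$'th column of the matrix $\bS_j\bX^T$ it already stores. Consequently the master need only broadcast the index $r_t\in\{1,\dots,n\}$, i.e.\ $\lceil\log n\rceil$ bits, and each worker merely reads out and returns a length-$p_2$ column, where $p_2=\lceil d/(m-2t)\rceil\le(1+\epsilon)\frac dm$. I would record this as the communication bounds ($\lceil\log n\rceil$ bits broadcast, $(1+\epsilon)\frac dm$ real numbers per worker), and as the per-worker computation cost $O(p_2)=O((1+\epsilon)\frac dm)$ of emitting that column.

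Next I would read off the remaining quantities by specializing the second-round ($\bX^T$) analysis of \Subsectionref{resource_GD}. For storage, observe that SGD stores only the single encoded matrix $\bS\bX^T$, of total size $p_2m\cdot n\approx(1+\epsilon)dn=(1+\epsilon)|\bX|$, rather than the two encodings $\bS^{(1)}\bX$ and $\bS^{(2)}\bX^T$ needed for GD; this is precisely why the overhead drops from $2(1+\epsilon)$ to $(1+\epsilon)$. For master computation, recovering one column costs $O(p_2km)$ to locate the corrupt nodes plus $O(p_2m^2)$ to solve the reduced systems (\Subsectionref{recovery_matvecprod}), and since $k<m$ this is $O(p_2m^2)=O((1+\epsilon)dm)$. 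The encoding is literally the encoding of $\bX^T$ from \Subsectionref{resource_GD}, so its cost is $O\big(nd(\frac{\epsilon}{1+\epsilon}m+1)\big)$.

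Finally, once $\bx_{r_t}$ is recovered exactly, the master computes $\nabla f_{r_t}(\bw_t)$ and applies the update \eqref{eq:sgd-update-rule} locally, so the trajectory is identical to that of plain centralized SGD; in particular the unbiasedness of the stochastic gradient is untouched, since the adversary can influence neither the uniform draw of $r_t$ nor the exact recovery, and all convergence guarantees of SGD transfer verbatim, including to non-convex $f_{r_t}$. I do not expect a genuine obstacle here, and indeed the excerpt already flags that the statement follows from \Sectionref{matrix-vector-mult}; the one point demanding care is to verify that instantiating the MV framework with $\bv=\be_{r_t}$ legitimately collapses the worker's work to a column lookup and the broadcast to a single index, as this is exactly what delivers the improved communication over \Theoremref{main-result}.
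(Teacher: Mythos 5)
Your proposal is correct and takes essentially the same route as the paper: \Subsectionref{solution_SGD} likewise encodes only $\bX^T$ with $\bS^{(2)}$, has the master broadcast just the index $r_t$ so that each worker returns the $r_t$'th column of its stored $\bS_j\bX^T$ (a length-$p_2$ lookup), and recovers $\bx_{r_t}$ exactly via the decoding machinery of \Sectionref{matrix-vector-mult}, reading off the storage, computation, communication, and encoding bounds exactly as you do. Your framing of the recovery as the MV product $\bX^T\be_{r_t}$ is merely an explicit formalization of the paper's one-round scheme, including the observation that single-encoding storage gives $(1+\epsilon)|\bX|$ instead of $2(1+\epsilon)|\bX|$ and that exact recovery of the data point preserves unbiasedness and extends to non-convex objectives.
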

Observe the distributed gain of our method in the communication exchanged between the workers and the master:
{\sf (i)} master only broadcasts an index in $\{1,2,\hdots,n\}$, which only takes $\lceil\log n\rceil$ {\em bits}; and 
{\sf (ii)} each worker sends roughly $\frac{1+\eps}{m}$ fraction of the total dimension $d$. 
Hence, this method is particularly useful in distributed settings with communication-constrained and band-limited links.
The Remarks \ref{remark:errors-erasures}, \ref{remark:parameters}, \ref{remark:optimal-threshold} are also applicable for \Theoremref{main-result_SGD}.

\begin{remark}[One-round vs.\ two-round approach]
Unlike the two-round approach taken for gradient computation in PGD and also for CD, we give a one-round approach for each iteration of SGD. This is because in each SGD iteration we need to compute the gradient on {\em only one} data point (not the entire dataset, as in the case for each PGD iteration). Because of this, recovering a (random) data-point itself at the master and then computing a gradient on it locally (which is what we do) would be far more efficient than computing gradient on a single data point in a distributed manner. This is in contrast to each gradient computation for PGD, which requires computation of the full gradient (which is the summation of gradients on all $n$ data points). In principle, we can use the one-round for each PGD iteration also in which first we recover all the $n$ data points at master and then compute the full gradient locally, but this approach would defeat the purpose of distributed computation both in terms of storage and computational complexity. Note that our two-round approach for PGD is significantly more efficient than this.

The reason behind taking the two-round approach for CD is because in order to update the local parameter vectors in the $t$'th iteration, workers need access to the MV multiplication $\btX^R\bv^t=\bX\bw^t$ (see the paragraph after \eqref{eq:partial-update} for more details), and in order to provide that we use an extra round -- the first round is used for computing $\bX\bw$ and the second round is used for updating the local parameter vectors.
Again, for CD also, we could adopt a one-round approach where master recovers all the $n$ data points and then do the parameter update, but that would be highly inefficient and defeat the purpose of distributed computation.

One of the main advantages of the one-round approach for SGD is that since we are recovering the data point itself at the master, we can use it to optimize any function, both convex and non-convex. This is in contrast to the two-round approach, which can only be used for generalized linear models only.
\end{remark}

\subsection{Encoding in The Streaming Data Model}\label{subsec:streaming-result}
An attractive property of our encoding scheme is that it is very easy to update with new data points. 
More specifically, our encoding requires the same amount of time, irrespective of whether we get all the data at once, 
or we get each sample point one by one, as in the online/streaming model.
This setting encompasses a more realistic scenario, in which we design our coding scheme with the initial set of data points and distribute the encoded data among the workers. Later on, when we get some more samples, we can easily incorporate them into our existing encoded data. 
We show that updating $(m-2t)$ new data points in $\R^d$ requires $O\left((m-2t)\left((2t+1)d\right)\right)$ time in total, i.e., $O\left((2t+1)d\right)$ amortized-time per data point. 
This is the best one can hope for, since the offline encoding of $n$ data points requires $O\left((2t+1)nd\right)$ total time. 
At the end of the update, the final encoded matrix that we get is the same as the one we would have got had we had all the $n+1$ data points in the beginning. Therefore, the decoding is not affected by this method at all. 
Note that we use the same encoding matrices both for gradient computation as well as for coordinate descent.
So, it suffices to prove our result in the streaming model for any one of them, and we show it for gradient computation below.
\begin{thm}\label{thm:streaming-result}
The total time complexity in encoding all the data points at once, i.e., when encoding is done offline, is the same as the total time complexity in encoding the data points one by one as they come in the streaming model, i.e., when encoding is done online. 
\end{thm}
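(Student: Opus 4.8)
The plan is to exploit the banded block structure of the encoding matrix $\bS_i$ in \eqref{eq:encoding-matrix-S_i} to show that the encoding decomposes into independent, self-contained pieces, one per consecutive block of $q=m-2t$ data points, so that arriving data only ever creates new encoded rows and never forces recomputation of old ones. Since the same encoding matrices are used for gradient computation and coordinate descent, it suffices to treat the first-round encoding $\bS^{(1)}\bX$, in which the data points $\bx_i\in\R^d$ are precisely the rows of $\bX$.

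First I would observe that the $j$-th row of $\bS_i\bX$ is exactly the linear combination $\sum_{k=1}^{q} b_{ki}\bx_{(j-1)q+k}$ of the $j$-th block of $q$ consecutive data points. This is immediate from the staircase pattern of $\bS_i$: its $j$-th row carries the entries $b_{1i},\hdots,b_{qi}$ supported on columns $(j-1)q+1,\hdots,jq$ and is zero elsewhere. Consequently the encoded rows indexed by block $j$ depend only on the data points in block $j$ and are completely independent of all other blocks, so the product $\bS_i\bX$ factorizes row-by-row over the blocks.

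From this independence the online/offline equivalence follows. When a fresh block of $q=m-2t$ data points arrives, I would compute exactly one new row at each of the $m$ workers, leaving every previously computed row untouched. Using the sparsity of the basis vectors obtained from the RREF of $\bF$ (see \Subsectionref{resource_GD}): for the $2t$ workers with $i\le 2t$ the new row is a dense linear combination of $q$ vectors in $\R^d$, costing $O(qd)$ each; for the remaining $m-2t$ workers the corresponding row $(b_{1i},\hdots,b_{qi})$ is a standard basis vector, so the new row is just a copy of a single data point, costing $O(d)$ each. Summing gives $O\big(2t\cdot qd+(m-2t)d\big)=O\big((m-2t)(2t+1)d\big)$ per block, i.e.\ $O((2t+1)d)$ amortized per data point. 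Over the $n/(m-2t)$ blocks this totals $O((2t+1)nd)$, precisely the offline cost established in \Subsectionref{resource_GD}. If points arrive strictly one at a time, I would buffer them (or, equivalently, maintain running partial sums for the $2t$ dense workers) until a block is complete, which leaves the amortized cost unchanged; the boundary case of the final, possibly shorter, block of size $l<q$ is handled identically by extending that single row as its entries arrive.

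The crux is the block-independence claim, and it is really just a careful reading of the support pattern of \eqref{eq:encoding-matrix-S_i}: because the nonzero entries in consecutive rows of $\bS_i$ occupy disjoint, consecutive column windows of width $q$, the matrix product splits over blocks. Once this is in place, the rest is bookkeeping — the two-regime ($i\le 2t$ versus $i>2t$) cost accounting above — and the fact that the final encoded matrix is identical to the one produced offline is automatic, since we have merely reordered the same per-block arithmetic; hence decoding is entirely unaffected. I expect no genuine difficulty beyond making the block decomposition and the cost split precise.
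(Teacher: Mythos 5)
Your treatment of $\bS^{(1)}\bX$ matches the paper's proof closely: the staircase support of \eqref{eq:encoding-matrix-S_i} makes each encoded row depend only on one block of $q=m-2t$ consecutive data points, and the two-regime accounting ($O(qd)$ per new row for the $2t$ dense workers coming from the RREF of $\bF$, $O(d)$ once per block for the $m-2t$ workers whose rows are standard basis vectors) reproduces the paper's $O\bigl((m-2t)(2t+1)d\bigr)$ per block, i.e., $O((2t+1)d)$ amortized per point. However, there is a genuine gap: you update only half of the encoded data. The paper's reduction is from the pair (gradient descent, coordinate descent) to gradient descent alone, but \emph{within} gradient descent each worker stores two encoded matrices, $\bS_i^{(1)}\bX$ for the first round and $\bS_i^{(2)}\bX^T$ for the second, and a new sample $\bx\in\R^d$ changes both: it appends a row to $\bX$ and hence a column to $\bX^T$, so worker $i$ must also append the column $\bS_i^{(2)}\bx$ to $\bS_i^{(2)}\bX^T$. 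Your statement that ``it suffices to treat the first-round encoding'' conflates these two reductions and silently drops the second update, which is of a different character from the one you analyze: it is a per-point column append requiring a sparse matrix--vector product, with no block structure and no buffering, costing $O(d)$ for each of the $2t$ dense workers and $O(p_2)=O(\lceil d/(m-2t)\rceil)$ for each sparse one, i.e., $O(2t\cdot d+(m-2t)p_2)=O((2t+1)d)$ per data point. The order matches, so the theorem survives, but as written your proof establishes the claim for only one of the two encodings that the offline total $O\bigl((2t+1)nd\bigr)$ accounts for (and, correspondingly for CD, for $\bL\bX$ but not for $\bX\bR$).

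A second, more minor, discrepancy: the paper does not buffer a block before emitting a row. It updates eagerly --- while the current block is incomplete ($l_1<q$) it adds $b_{(l_1+1)i}\bx^T$ into the \emph{existing} last row of $\bS_i^{(1)}\bX$, and opens a fresh row only when $l_1=q$ --- so that after \emph{every single} arrival the encoded matrix equals exactly what offline encoding of the current dataset would produce, which is what the theorem's ``encode them as they arrive'' and the unchanged decoding guarantee rest on. Your parenthetical ``maintain running partial sums'' is precisely this eager scheme and repairs the issue, but pure buffering alone would leave the workers' state stale between block boundaries, so you should commit to the eager variant rather than offer the two as equivalent.
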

\begin{proof}
Let $\bS^{(1)}$ and $\bS^{(2)}$ denote the encoding matrices for encoding $\bX$ and $\bX^T$, respectively; see \Subsectionref{encoding-matrix}.
For convenience, we copy over the corresponding encoding matrices $\bS_i^{(1)}$ and $\bS_i^{(2)}$ from \eqref{eq:encoding-matrix-S_i} for the $i$'th worker node in \Figureref{encoding-matrix-S_i}.
\begin{figure}[h]
\begin{subfigure}{0.5\textwidth}\centering
\[
\bS_i^{(1)}=
\begin{bmatrix}
b_{1i} \hdots b_{qi} & & & \\
& \ddots & & \\
& & b_{1i} \hdots b_{qi} & \\
& & & b_{1i} \hdots b_{l_1i}
\end{bmatrix}_{p_1\times n}
\]
\caption{}
\label{fig:encoding-matrix-S1_i}
\end{subfigure}
\begin{subfigure}{0.5\textwidth}\centering
\[
\bS_i^{(2)}=
\begin{bmatrix}
b_{1i} \hdots b_{qi} & & & \\
& \ddots & & \\
& & b_{1i} \hdots b_{qi} & \\
& & & b_{1i} \hdots b_{l_2i}
\end{bmatrix}_{p_2\times d}
\]
\caption{}
\label{fig:encoding-matrix-S2_i}
\end{subfigure}
\caption{\Figureref{encoding-matrix-S1_i} depicts the encoding matrix for the $i$'th worker node for encoding $\bX$, which is used in the first round of the gradient computation. Here $p_1 = \lceil n/q\rceil$, where $q=(m-k)$ and $k$ is equal to the number of rows in the error recovery matrix $\bF$ in \eqref{eq:vandermonde}, and $l_1 = n-(p_1-1)q$. \Figureref{encoding-matrix-S2_i} depicts the encoding matrix for the $i$'th worker node for encoding $\bX^T$, which is used in the second round of the gradient computation. Here $p_2 = \lceil d/q\rceil$ and $l_2 = d-(p_2-1)q$. All the unspecified entries in both the matrices are zero.}
\label{fig:encoding-matrix-S_i}
\end{figure}

Suppose at some point of time we have encoded $n$ data points each lying in $\R^d$ and distributed the encoded data among the $m$ worker nodes. Now a new data sample $\bx\in\R^d$ comes in. We will show how to incorporate it in the existing scheme in $O\left((2t+1)d\right)$ time on average. 

\paragraph{Updating the encoding matrices.}
Fix an arbitrary worker $i\in[m]$.
Note that the new data matrix $\bX$ has dimension $(n+1)\times d$. 
So, the new encoding matrix $\bS_i^{(1)}$ should have $(n+1)$ columns, and we have to add one more column to $\bS_i^{(1)}$. By examining the repetitive structure of $\bS_i^{(1)}$, it is obvious which column to add: 
if $l_1 < q$, then we add the $p_1$-dimensional vector $[0,0,\hdots,0,b_{(l_1+1)i}]^T$ as the last column; otherwise,
if $l_1 = q$, then we add the $(p_1+1)$-dimensional vector $[0,0,\hdots,0,b_{1i}]^T$ as the last column.
In the second case, the number of rows of $\bS_i^{(1)}$ increases by one -- the last row has all zeros, except for the last element, which is equal to $b_{1i}$.
Note that $\bS_i^{(2)}$ does not change at all. 
Observe that if the $i$'th worker performs this update, then it does not have to store its entire encoding matrix $\bS_i^{(1)}$, it only needs to store $n$, $q=(m-k)$, and the $q$ real numbers $b_{1i},b_{2i},\hdots,b_{qi}$, where $q=m-k$, which could be much smaller as compared to $n$ and $d$, and are enough to define $\bS_i^{(1)}$ and $\bS_i^{(2)}$.
\\

\paragraph{Updating the encoded data.}
Now we show how to update the encoded data with the new sample $\bx$. We need to update both $\bS_i^{(1)}\bX$ as well as $\bS_i^{(2)}\bX^T$ for every worker $i\in[m]$.
\begin{itemize}[leftmargin = *]
\item \emph{Updating $\bS_i^{(1)}\bX$.} If $l_1< q $, then we add $b_{(l_1+1)i}\bx^T$
to the last row of $\bS_i^{(1)}\bX$; otherwise, if $l_1= q $, then we add $b_{1i}\bx$ as a new row in $\bS_i^{(1)}\bX$.
In the first case, the resulting matrix still has $p_1$ rows, whose first $p_1-1$ rows are same as before, and the last row is the sum of the previous row and $b_{(l_1+1)i}\bx^T$. In the second case, the resulting matrix has $(p_1+1)$ rows, whose first $p_1$ rows are the same as before and the last row is equal to $b_{1i}\bx^T$. Note that each row of $\bS_i^{(1)}$ for $i\leq 2t$, has at most $(m-2t)$ non-zero elements; whereas, for $i>2t$, each row of $\bS_i^{(1)}$ has exactly one non-zero entry. Since there are $p_1=\left\lceil n/(m-2t)\right\rceil$ rows in each $\bS_i^{(1)}$, updating $\bS_i^{(1)}\bX$ for every $i\leq 2t$ takes $O(d)$ time; and for $i>2t$, update in $\bS_i^{(1)}\bX$ happens only once in $(m-2t)$ new data points (whenever the second case occurs and the resulting $\bS_i^{(1)}$ has $(p_1+1)$ rows). 
So, updating $(m-2t)$ data points at all $m$ worker nodes require $O\left(2t*(m-2t)d + (m-2t)*d\right)=O((m-2t)(2t+1)d)$ time, i.e., $O\left((2t+1)d\right)$ time per data point.

\item \emph{Updating $\bS_i^{(2)}\bX^T$.} Note that $\bX^T$ is a $d\times(n+1)$ matrix whose last column is equal to the new data sample $\bx$.
Now, to update $\bS_i^{(2)}\bX^T$, we add $\bS_i^{(2)}\bx$ as an extra column. The resulting matrix is of size $p_2\times(n+1)$, whose first $n$ columns are the same as before and the last column is equal to $\bS_i^{(2)}\bx$. 
Since total number of non-zero entries in $\bS_i^{(2)}$ is equal to $d$ if $i\leq 2t$ and equal to $p_2=\left\lceil d/(m-2t)\right\rceil$ if $i>2t$,
the total time required to update a new data point is $O(2t*d + (m-2t)*p_2)=O\left((2t+1)d\right)$.
\end{itemize}
Observe that at the end of this local update at each worker node, the final encoded matrix that we get is the same as the one we would have got had we had all the $n+1$ data points in the beginning. The decoding is not affected by this method at all. 
This completes the proof of \Theoremref{streaming-result}.
\end{proof}

\begin{remark}[Updating the encoded data efficiently with new features]\label{remark:streaming-features} 
Observe that since we encode both $X$ and $X^T$ in an analogous fashion, it follows by symmetry that we can not only update efficiently upon receiving a new data sample, but can also update efficiently if we decide to enlarge the dimension $d$ of each of the $n$ data samples at some point of time -- maybe we figure out some new features of the data to get a more accurate model to overcome under-fitting. In these situations, we don't need to encode the entire dataset all over again, just a simple update is enough to incorporate the changes.
\end{remark}

\begin{remark}[What allows our encoding to be efficient for streaming data?]\label{remark:structured-matrix-implies-streaming}
The efficient update property of our coding scheme is made possible by the repetitive structure of our encoding matrix (see \Figureref{encoding-matrix-S_i}), together with the fact that this structure is independent of the number of data points $n$ and the dimension $d$ -- it only depends on the number of worker nodes $m$ and the corruption threshold $t$.
We remark that other data encoding methods in literature, even for weaker models, do not support efficient update. 
For example, the encoding of \cite{KarakusSuDiYi17}, which was designed for mitigating stragglers, depends on the dimensions $n$ and $d$ of the data matrix. So, it may not efficiently update if a new data point comes in.
\end{remark}

\subsection{More Applications.}\label{subsec:applications}
There are many iterative algorithms, other than the gradient descent for learning GLMs, which use repeated MV multiplication. 
Some of them include 
{\sf (i)} the power method for computing the largest eigenvalue of a diagonalizable matrix, which is used in Google's PageRank algorithm \cite{IpsenWi06}, Twitter's recommendation system \cite{GuptaGoLiShWaZa13}, etc.; 
{\sf (ii)} iterative methods for solving sparse linear systems \cite{Saad03};
{\sf (iii)} many graph algorithms, where the graph is represented by a fixed adjacency matrix, \cite{KepnerGiGraphLABook11}. 
In large-scale implementation of these systems, where Byzantine faults are inevitable, the method described in this
paper can be of interest.

In most of these applications, the underlying matrix $\bA$ is generally sparse, which is exploited to gain computational
efficiency. So, it is desired not to lose sparsity even if we want resiliency against Byzantine attacks.
Fortunately, our encoding matrix $\bS$ is sparse (see \eqref{eq:encoding-matrix-S_i}), which ensures that
the encoded matrix $\bS\bA$ will not lose the sparsity of $\bA$:
Each of the first $pk$ rows of $\bS$ has at most $(m-k)$ (where $k=2t$) non-zero elements,
and each of the remaining rows has exactly one 1.
Since $m$ is the number of worker nodes, which may be small,
and we can take $t$ to be up to $\lfloor\frac{m-1}{2}\rfloor$, we may have a few non-zero entries in each row of $\bS$ (in the extreme case when $2t=m-1$, each row of $\bS$ has only one non-zero entry).
In a sense, we are getting Byzantine-resiliency almost for free without compromising the computational efficiency that is made possible by the sparsity of the matrix.

\section{Numerical Experiments}\label{sec:experiments}
In this section, we validate the efficacy of our proposed methods by numerical experiments.
We run distributed gradient descent (GD) and coordinate descent (CD) for linear regression $\arg\min_{\bw\in\R^d}\|\bX\bw-\by\|_2^2$. 
As mentioned in \Subsectionref{setting-pgd}, for linear regression (which is equal to ridge regression when $h=0$), the projected gradient descent (PGD) reduces to gradient descent (GD).
Since we are doing exact computation (computing the gradients exactly in the case of GD and updating the coordinates exactly in the case of CD),
{\sf (i)} there is no need to check the convergence, and
{\sf (ii)} our algorithm will perform exactly the same whether we are working with synthetic datasets or real datasets, hence, we will work with a synthetic dataset. 
We run our algorithms\footnote{We implement our algorithm in Python, and run it on an iMac machine with 3.8 GHz Quad-Core Intel Core i5 processor and 16 GB 2400 MHz DDR4 memory.}
 with $m=15$ worker nodes on two datasets: $(n=10,000, d=250)$ and $(n=20,000, d=22,000)$.
For both the datasets, we generate $(\bX,\by)$ by sampling $\bX\leftarrow\N(0,I)$ and $\by = \bX\theta + \bz$, 
where $\theta\in\R^d$ has $d/3$ non-zero entries, all of them are i.i.d.~according to 
$\N(0,4)$, 
and each entry of $\bz\in\R^n$ is sampled from $\N(0,1)$ i.i.d. 
In each round of the gradient computation, the adversary picks $t$ worker nodes uniformly at random,
and adds independent random vectors of appropriate length as errors, 
whose entries are sampled from $\N(0,\sigma^2)$ i.i.d.~with $\sigma=100$, 
to the true vectors. 
\begin{figure}[htb]
\centering
\includegraphics[scale=0.5]{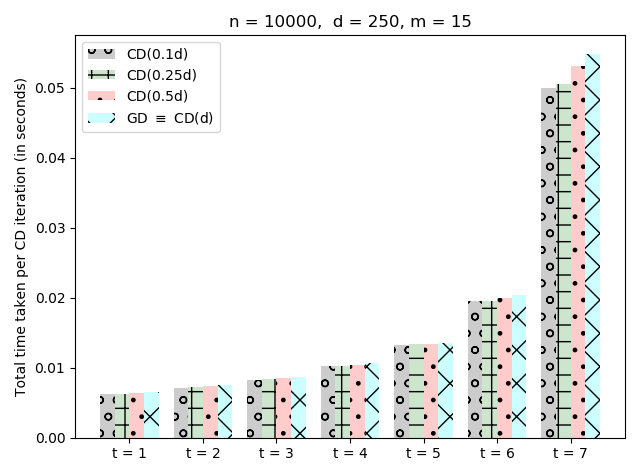}
\caption{We run our algorithms (CD and GD) with $15$ worker nodes on a dataset with $n=10,000,d=250$.
This plot reports how the total time taken (in seconds) for updating different number of coordinates in each CD iteration changes with varying number of corrupt worker nodes from $t=1$ to $t=7$. In the figure, we plot the total time taken (per iteration) for updating the $\gamma$-fraction of $d$ coordinates for $\gamma=0.1,0.25,0.5,1$.
Note that CD with $\gamma=1$ is equivalent to full gradient computation as in GD.}
\label{fig:total-time}
\end{figure}

\begin{figure*}[htb]
\begin{center}
\begin{tabular}{| p{0.06\textwidth} | p{0.08\textwidth} | p{0.08\textwidth} | p{0.08\textwidth} | p{0.08\textwidth} | p{0.08\textwidth} | p{0.08\textwidth} | p{0.08\textwidth} | p{0.08\textwidth} |}
\hline
& \multicolumn{2}{c|}{\ CD($0.1d$)} & \multicolumn{2}{c|}{\ CD($0.25d$)} &\multicolumn{2}{c|}{\ CD($0.5d$)} &\multicolumn{2}{c|}{\ GD $\equiv$ CD($d$)}\\
\hline
&\ Worker &\ Master &\ Worker &\ Master &\ Worker &\ Master &\ Worker &\ Master\\ 
\hline
\ $t=1$ &\ 0.0020 &\	0.0120	&\ 0.0073 &\	0.0182	&\ 0.0122 &\ 0.0199 &\ 0.0493 &\ 0.0214\\
\ $t=2$ &\ 0.0044 &\	0.0187	&\ 0.0092	&\ 0.0212 &\ 0.0188	&\ 0.0277 &\ 0.0953	&\ 0.0393\\
\ $t=3$ &\ 0.0054 &\	0.0201	&\ 0.0118 &\ 0.0242 &\ 0.0269 &\ 0.0324 &\ 0.1213&\ 0.0561\\
\ $t=4$ &\ 0.0063 &\	0.0253	&\ 0.0159 &\ 0.0327 &\ 0.0488 &\ 0.0468 &\ 0.1602 &\ 0.0610\\
\ $t=5$ &\ 0.0107 &\ 0.0342 &\ 0.0328 &\ 0.0460 &\ 0.0776 &\ 0.0738 &\ 0.2943 &\ 0.0826\\
\ $t=6$ &\ 0.0205 &\ 0.0717 &\ 0.0764 &\ 0.0833 &\ 0.1330 &\ 0.1088 &\ 0.8929 &\ 0.1227 \\ \hline
\end{tabular}
\end{center}
\caption{We run our algorithms (CD and GD) with $15$ worker nodes on a dataset with $n=20,000,d=22,000$,
and separately report the maximum time taken by any single worker and the master per iteration 
against varying number of corrupt worker nodes from $t=1$ to $6$. For CD, we run our algorithm for updating different number of coordinates. The first two columns correspond to the case when updating $0.1$-fraction of $d$ coordinates, the next two columns for $0.25$-fraction, and so on. The last two columns correspond to updating all the coordinates, which is equivalent to full gradient computation as in GD.}
\label{fig:separate-time}
\end{figure*}

\subsection{$n=10,000, d=250, m=15$}
In \Figureref{total-time}, we plot the total time taken (which is the sum of the maximum time taken by any single worker node and the time taken by the master node in both rounds) 
for updating different number of coordinates in one CD iteration, with varying number of corrupt worker nodes from $t=1$ to $t=7$. We plot the time needed for updating $\gamma$-fraction of $d$ coordinates for four different values of $\gamma$ (i.e., $\gamma=0.1,0.25,0.5,1$) and we denote it by CD$(\gamma d)$ for $\gamma=0.1,0.25,0.5,1$. Recall that CD$(d)$ is equivalent to full gradient computation as in the case of GD. Note that, 
when $t=7$, we have $\epsilon=m-1$, which is the main cause behind the significant increment in time for $t=7$.

\subsection{$n=20,000, d=22,000, m=15$}
In \Figureref{separate-time}, we report separately, the maximum time taken by any single worker node and the time taken by the master node (together in both the rounds) in one CD iteration for updating different number of coordinates and also for GD, with varying number of corrupt worker nodes from $t=1$ to $t=6$. As in the above case, we report the time needed for updating $\gamma$-fraction of $d$ coordinates for four different values of $\gamma$. Observe that the time taken by the master node is orders of magnitude less than the time taken by the worker nodes. We can also observe that with the running time in a worker node per iteration for CD($0.1d$) is 95\% less than that for GD, while this time saving in the master node is more than 40\%.

\section*{Acknowledgements}
The work of Deepesh Data and Suhas Diggavi was partially supported  by the Army Research Laboratory under Cooperative Agreement W911NF-17-2-0196, by the UC-NL grant LFR-18-548554, and by the NSF award 1740047. 
The work of Linqi Song was partially supported  by the NSF awards 1527550, 1514531, by the City University of Hong Kong grant 7200594, and by the Hong Kong RGC ECS 21212419. 
The views and conclusions contained in this document are those of the authors and should not be interpreted as representing the official policies, either expressed or implied, of the Army Research Laboratory or the U.S. Government. The U.S. Government is authorized to reproduce and distribute reprints for Government purposes notwithstanding any copyright notation here on.

\bibliographystyle{alpha}
\bibliography{reference}

\end{document}